\title{Improved Differentially Private Algorithms for Rank Aggregation}
\author{
    Written by AAAI Press Staff\textsuperscript{\rm 1}\thanks{With help from the AAAI Publications Committee.}\\
    AAAI Style Contributions by Pater Patel Schneider,
    Sunil Issar,\\
    J. Scott Penberthy,
    George Ferguson,
    Hans Guesgen,
    Francisco Cruz\equalcontrib,
    Marc Pujol-Gonzalez\equalcontrib
}
\author {
    Quentin Hillebrand\equalcontrib\textsuperscript{\rm 1},
    Pasin Manurangsi\equalcontrib\textsuperscript{\rm 2},
    Vorapong Suppakitpaisarn\textsuperscript{\rm 3},
    Phanu Vajanopath\equalcontrib\textsuperscript{\rm 4}
}
\definecolor{Gred}{RGB}{219, 50, 54}
\definecolor{Ggreen}{RGB}{60, 186, 84}
\definecolor{Gblue}{RGB}{72, 133, 237}
\definecolor{Gyellow}{RGB}{247, 178, 16}
\definecolor{ToCgreen}{RGB}{0, 128, 0}
\definecolor{myGold}{RGB}{231,141,20}
\definecolor{myBlue}{rgb}{0.19,0.41,.65}
\definecolor{myPurple}{RGB}{175,0,124}
\newtheorem{theorem}{Theorem}[section]
\newtheorem{corollary}[theorem]{Corollary}
\newtheorem{lemma}[theorem]{Lemma}
\theoremstyle{definition}
\newtheorem{definition}[theorem]{Definition}
\newtheorem{observation}[theorem]{Observation}
\newcommand{\tgamma}{\tilde{\gamma}}
\newcommand{\N}{\mathbb{N}}
\newcommand{\bone}{\mathbf{1}}
\newcommand{\R}{\mathbb{R}}
\newcommand{\bw}{\mathbf{w}}
\newcommand{\bx}{\mathbf{x}}
\newcommand{\bW}{\mathbf{W}}
\newcommand{\bbS}{\mathbf{S}}
\newcommand{\cX}{\mathcal{X}}
\newcommand{\cB}{\mathcal{B}}
\newcommand{\eps}{\varepsilon}
\newcommand{\Lap}{\mathrm{Lap}}
\newcommand{\cN}{\mathcal{N}}
\newcommand{\agg}{\mathrm{agg}}
\newcommand{\tv}{\tilde{v}}
\newcommand{\tu}{\tilde{u}}
\newcommand{\cost}{\mathrm{cost}}
\newcommand{\cZ}{\mathcal{Z}}
\newcommand{\cE}{\mathcal{E}}
\newcommand{\Zl}{\cZ_{I}}
\newcommand{\Zs}{\cZ_{B}}
\newcommand{\tbw}{\tilde{\bw}}
\newcommand{\tw}{\tilde{w}}
\newcommand{\out}{\mathrm{out}}
\newcommand{\bs}{\mathbf{s}}
\newcommand{\bt}{\mathbf{t}}
\newcommand{\tbs}{\tilde{\bs}}
\newcommand{\tbt}{\tilde{\bt}}
\newcommand{\ttt}{\tilde{t}}
\newcommand{\bX}{\mathbf{X}}
\newcommand{\E}{\mathbb{E}}
\newcommand{\cP}{\mathcal{P}}
\newcommand{\tq}{\tilde{q}}
\newcommand{\cA}{\mathcal{A}}
\newcommand{\cY}{\mathcal{Y}}
\newcommand{\cO}{\mathcal{O}}
\newcommand{\cR}{\mathcal{R}}
\DeclareMathOperator{\argmin}{argmin}
\newcommand{\apxmed}{\textsc{ApxMed}}
\newcommand{\vecagg}{\textsc{VecAgg}}
\newcommand{\wei}{\textsc{wei}}
\newcommand{\ow}{\overline{w}}
\newcommand{\cD}{\mathcal{D}}
\newcommand{\maralg}{\textsc{TwoMarAlg}}
\newcommand{\ts}{\tilde{s}}
\newcommand{\clip}{\mathrm{clip}}
\newcommand{\tO}{\tilde{O}}
\newcommand{\opt}{\mathrm{OPT}}
\newcommand{\cM}{\mathcal{M}}
\newcommand{\cW}{\mathcal{W}}
\newif\ifaaaicameraready
\begin{document}

\maketitle

\begin{abstract}
Rank aggregation is a task of combining the rankings of items from multiple users into a single ranking 
that best represents the users' rankings. Alabi et al. (AAAI'22) presents differentially-private (DP) polynomial-time approximation schemes (PTASes) and $5$-approximation algorithms with certain additive errors for the \emph{Kemeny rank aggregation} problem in both central and local models.
In this paper, we present improved DP PTASes with smaller additive error in the central model. Furthermore, we are first to study the \emph{footrule rank aggregation} problem under DP. We give a near-optimal algorithm for this problem; as a corollary, this leads to 2-approximation algorithms with the same additive error as the $5$-approximation algorithms of Alabi et al. for the Kemeny rank aggregation problem in both central and local models.

\end{abstract}


\section{Introduction}

The \emph{rank aggregation} problem aims to combine $n$ individual rankings over $m$ candidates into a single consensus ranking that best reflects the input preferences. We consider two well-known optimality criteria for this task. 

The first is the \emph{Kemeny ranking}~\cite{Kemeny59}. Given a collection of rankings $\{\pi_1, \dots, \pi_n\}$, the \emph{Kendall's tau distance} between a permutation $\psi$ and $\pi_i$--denoted by $K(\psi, \pi_i)$--is defined\footnote{See \Cref{prelim:rank-agg} for formal definitions of rank aggregation.} as the number of candidate pairs $(j, k)$ for which the relative order in $\psi$ disagrees with that in $\pi_i$. The average Kendall's tau distance of $\psi$ is then $\frac{1}{n} \sum_{i=1}^n K(\psi, \pi_i)$. A permutation $\psi$ minimizing this quantity is called the \emph{Kemeny optimal ranking}, and the problem of finding the ranking is called the \emph{Kemeny rank aggregation problem}.

The second criterion is the \emph{footrule optimal ranking}, which minimizes the average position-wise discrepancy: $\frac{1}{n} \sum_{i=1}^n \sum_{j=1}^m |\psi(j) - \pi_i(j)|$, where $\pi_i(j)$ denotes the position of candidate $j$ in the $i$-th input ranking. 
The problem of finding the ranking is the \emph{footrule rank aggregation problem}.

While the \emph{footrule rank aggregation problem} can be solved in polynomial time \cite{DworkKNS01},  
the \textit{Kemeny rank aggregation problem} is NP-hard \cite{BartholdiTT89} even when the number of input rankings is as small as four \cite{DworkKNS01}. Thus, approximation algorithms for this problem have been studied. Several 2-approximation algorithms have been proposed \cite{DG77, DworkKNS01}.
Later, \citet{AilonCN08} introduced algorithm called \emph{KwikSort}, which also yields 2-approximation.
With a slight adjustment to the KwikSort algorithm, they  obtained an improved 11/7-approximation algorithm. Lastly, a polynomial-time approximation scheme (PTAS) for this problem was developed by \citet{MathieuS07}.

Often, the input rankings, e.g. votes in an election or preferences over search results, contain highly sensitive information about individuals. It is thus important that we perform rank aggregation while respecting their privacy.

In this work, we address both rank aggregation problems under the framework of differential privacy (DP). We recall the definition of DP below \cite{DworkMNS06,DworkKMMN06}.
\begin{definition} [$(\varepsilon, \delta)$-DP] \label{def:dp} Let $\varepsilon, \delta \geq 0$.
    A randomized mechanism $\mathcal{M}: \cX^n \rightarrow \cY$ is $(\varepsilon, \delta)$-DP if, for any \emph{neighboring}\footnote{We say that two datasets $\bX = (X_1, \dots, X_n), \bX' = (X'_1, \dots, X'_n)$ are neighbors if they differ on a single coordinate, i.e. there exists $i^* \in [n]$ such that $X_i = X'_i$ for all $i \in [n] \setminus \{i^*\}$.} datasets $\bX, \bX' \in \cX^n$ and any set of output $\cO \subseteq \cY$, 
    $\Pr[\mathcal{M}(\bX) \in \cO] \leq e^{\varepsilon} \cdot \Pr[\mathcal{M}(\bX') \in \cO] + \delta.$
\end{definition}
When $\delta = 0$, we say that the mechanism is $\eps$-DP; we refer to this case as \emph{pure-DP}, and the case $\delta > 0$ as \emph{approx-DP}.

Since its introduction, DP has become the gold standard for privacy-preserving data analysis \cite{abowd2018us,erlingsson2014rappor}, providing a mathematically rigorous means of quantifying privacy loss. 
  %
%
%
We also consider a variant of DP known as \emph{local DP} (LDP), in which each user independently perturbs their own data before transmission. This approach ensures privacy without requiring a trusted central server~\cite{kasiviswanathan2011can}.

Rank aggregation under DP have been the subject of multiple recent studies~\citep{HayEM17, YanLL20, LiuLXZ20,AlabiGKM22,XSC23,LanSLL24,XSG24}. A central challenge in designing DP algorithms—including those for rank aggregation—is managing the \emph{privacy-utility tradeoff}. In other words, the objective is to develop algorithms that, given a fixed privacy budget $(\eps, \delta)$, achieve the highest possible utility. In the context of rank aggregation, utility is typically measured by the Kendall's tau distance, with smaller distances indicating better utility. Prior work has addressed this challenge from both theoretical and empirical standpoints. 
We focus on the theoretical utility guarantee similar to \cite{HayEM17,AlabiGKM22}. Specifically, we say that a randomized algorithm is an $(\alpha, \beta)$-approximation algorithm if the expected cost of its output is at most $\alpha$ times the optimum plus $\beta$.


\subsection{Our Contributions}

\paragraph{Footrule Ranking} We present the \emph{first DP algorithm} for the footrule rank aggregation problem (Section~\ref{sec:reduction}). Our algorithm achieves the optimal approximation ratio $\alpha = 1$, while the additive errors are:
\begin{itemize}
    \item $\beta = \tilde{O}(m^3 / \varepsilon n)$ for $\eps$-DP,
    \item $\beta = \tilde{O}_\delta(m^{2.5} / \varepsilon n)$ for $(\eps,\delta)$-DP, 
    \item $\beta = \tilde{O}(m^{2.5} / \varepsilon \sqrt{n})$ for $\eps$-LDP.
\end{itemize}

Due to lower bounds from \cite{AlabiGKM22}, our additive errors for $\eps$-DP and $(\eps,\delta)$-DP are nearly optimal (up to logarithmic factors). 
\ifaaaicameraready \else
See \Cref{app:lb} for additional explanation.
\fi

\paragraph{Kemeny Ranking} Our contributions for the Kemeny rank aggregation problem are summarized in Table~\ref{table:results}. 

\begin{table*}[h]
\centering
\begin{tabular}{c c c l}
\hline
& $\alpha$ & $\beta$ & \\
\hline
$(\varepsilon,\delta$)-DP
& $1 + \xi$ & $\frac{1}{\varepsilon n} \tilde{O}_{\xi,\delta}(m^3)$ & \cite{AlabiGKM22} \\
& $5 + \xi$ & $\frac{1}{\varepsilon n} \tilde{O}_{\xi,\delta}(m^{2.5})$ & \cite{AlabiGKM22} \\
& any $\alpha \geq 1$ & $\frac{1}{\varepsilon n} \Omega(m^{2.5})$ & \cite{AlabiGKM22} [Lower Bound]  \vspace{0.1cm} \\ \vspace{0.1cm}
& \textbf{$1 + \xi$} & \textbf{$\frac{1}{\varepsilon n} \tilde{O}_{\xi,\delta}(m^{65/22})$} & \textbf{Our Work [\Cref{sec:PTAS}]} \\ \vspace{0.1cm}
& \textbf{$2$} & \textbf{$\frac{1}{\varepsilon n} \tilde{O}_\delta(m^{2.5})$} & \textbf{Our Work  [\Cref{sec:reduction}]} \vspace{0.1cm} \\ 
\hline
$\varepsilon$-DP & $1$ & $\frac{1}{\varepsilon n} \tilde{O}(m^3)$ 
& \cite{HayEM17} \\
& $1 + \xi$ & $\frac{1}{\varepsilon n} \tilde{O}_\xi(m^4)$ & \cite{AlabiGKM22} \\
& $5 + \xi$ & $\frac{1}{\varepsilon n} \tilde{O}_\xi(m^3)$ & \cite{AlabiGKM22} \vspace{0.1cm} \\
& any $\alpha \geq 1$ & $\frac{1}{\varepsilon n} \Omega(m^{3})$ & \cite{AlabiGKM22} [Lower Bound] \vspace{0.1cm} \\ 
& \textbf{$1 + \xi$} & \textbf{$\frac{1}{\varepsilon n} \tilde{O}_{\xi}(m^{27/7})$} & \textbf{Our Work [\Cref{sec:PTAS}]} \\ 
& \textbf{$2$} & \textbf{$\frac{1}{\varepsilon n} \tilde{O}(m^3)$} & \textbf{Our Work  [\Cref{sec:reduction}]} \vspace{0.1cm} \\
\hline
$\varepsilon$-LDP & $1 + \xi$ & $\frac{1}{\varepsilon \sqrt{n}} \tilde{O}_{\xi}(m^3)$ & \cite{AlabiGKM22} \\
& $5 + \xi$ & $\frac{1}{\varepsilon \sqrt{n}} \tilde{O}_{\xi}(m^{2.5})$ & \cite{AlabiGKM22} \vspace{0.1cm} \\
\vspace{0.1cm}
& \textbf{$2$} & \textbf{$\frac{1}{\varepsilon \sqrt{n}} \tilde{O}(m^{2.5})$} & \textbf{Our Work  [\Cref{sec:reduction}]} \\
\hline
\end{tabular}
\caption{Comparison of 
our $(\alpha, \beta)$-approximation algorithms for Kemeny Rank Aggregation with previous works. All algorithms run in $(nm)^{O(1)}$ time, except the exponential mechanism \cite{HayEM17} which runs in $O(n \cdot m!)$ time.}
\label{table:results}
\end{table*}

The state-of-the-art work by~\citet{AlabiGKM22} provides two $(\alpha, \beta)$-approximation algorithms for the Kemeny rank aggregation problem under $(\varepsilon, \delta)$-differential privacy. The first algorithm is a PTAS, which achieves a small multiplicative factor $\alpha = 1 + \xi$ for $\xi > 0$ but incurs a large additive error $\beta = \tilde{O}_{\xi,\delta}(m^3)/(\varepsilon n)$. In contrast, the second algorithm offers a smaller additive term $\beta = \tilde{O}_{\delta}(m^{2.5})/(\varepsilon n)$, at the cost of a larger multiplicative factor $\alpha = 5 + \xi$. It is demonstrated in the paper that there is no algorithm with additive term $\beta = o(m^{2.5})/(\varepsilon n)$.

We aim to improve upon these trade-offs by designing an algorithm that retains the small $\alpha$ of the first while reducing the additive error $\beta$, or alternatively, one that preserves the small $\beta$ of the second while lowering the multiplicative factor $\alpha$. 

Our first algorithm, presented as Algorithm~\ref{alg:footrule-rank-agg} in Section~\ref{sec:reduction}, improves the multiplicative factor $\alpha$ of the second algorithm in~\cite{AlabiGKM22} from $5 + \xi$ to $2$, while preserving the same additive error $\beta$.
 Moreover, this approach extends naturally to the $\varepsilon$-DP and  $\varepsilon$-LDP settings, achieving the same reduced multiplicative factor $\alpha = 2$ without increasing the additive error $\beta$. For LDP, our algorithm also has the advantage of being \emph{non-interactive} whereas Alabi et al.'s algorithm is \emph{interactive} and requires $O(\log m)$ rounds of communication.

Our second algorithm, presented as Algorithms \ref{alg:ptas_small}-\ref{alg:ptas_large} in Section~\ref{sec:PTAS}, improves the additive error $\beta$ of the first algorithm in \cite{AlabiGKM22} from $\frac{1}{\varepsilon n} \tilde{O}_{\varepsilon,\delta}(m^3)$ to $\frac{1}{\varepsilon n} \tilde{O}_{\varepsilon,\delta}(m^{65/22})$. As it has been shown that no algorithm can achieve an additive term of order $o(m^{2.5})/(\varepsilon n)$~\cite{AlabiGKM22}, we believe that reducing the exponent of $m$ to a value strictly less than $3$ represents a meaningful advancement toward closing the gap between the known upper and lower bounds. 
We also achieve a similar improvement for $\eps$-DP.

It is worth noting that a $(1, \tilde{O}(m^3 / \varepsilon n))$-approximation algorithm can be obtained using the exponential mechanism. However, as observed in~\cite{HayEM17}, this approach is not computationally efficient, as it does not run in polynomial time. 
\ifaaaicameraready \else
A comparison between our algorithm and two additional algorithms from the same paper is provided in Appendix~\ref{app:comparison}.
\fi

\label{sec:contribution}

\subsection{Technical Overview}
\label{subsec:overview}

We provide high-level technical overview of our algorithms. 
\ifaaaicameraready \else
For brevity, we will sometimes be informal; all details will be formalized in the corresponding sections.
\fi

\paragraph{Algorithms for Footrule Ranking} Our algorithm for the footrule ranking builds upon the non-private approach of~\cite{DworkKNS01}. For each pair \( q, j \in [m] \), let the cost of assigning candidate \( q \) to position \( j \) be \( \gamma_q^j = \frac{1}{n} \sum_{i = 1}^{n} |\pi_i(q) - j| \). Given a permutation \( \psi \), its total cost is \( \sum_{q \in [m]} \gamma_q^{\psi(q)} \). The objective is thus to find 
\( \psi \) that minimizes this cost. This corresponds to a minimum weight bipartite matching problem, which can be solved efficiently.

To construct a DP version of the algorithm, we propose a method to accurately release the weights $\gamma_q^j$ for all $q$ and $j$ privately. If we can publish each value of $\gamma_q^j$ with additive error of $\beta$, since there are $m$ values of $\gamma_q^j$ in the cost calculation, we obtain the additive term of $m \cdot \beta$ in the cost.

In a naive approach, each value $\pi_i(q)$ contributes to up to $m$ output values (i.e. $\gamma_q^j$ for all $j \in [m]$), causing the privacy budget to be split across many computations. This leads to large noise and significant error. To mitigate this, we adapt the binary tree mechanism~\cite{DworkNPR10,ChanSS11} to better manage the privacy budget.

The original binary tree mechanism is designed to answer multiple range queries, specifically the frequencies of data within intervals $[low, up]$. To achieve this, we define a set of intervals $\mathcal{I} = \{[(p - 1) \cdot 2^\ell + 1, p \cdot 2^\ell] : \ell \in [\lceil \log m \rceil], p \in [\lceil m / 2^\ell \rceil]\}$. We privately release the frequency of data falling within each interval in $\mathcal{I}$. Since any query range $[low, up]$ can be decomposed into a disjoint union of intervals from $\mathcal{I}$, we can answer the query by aggregating the corresponding private frequencies. In this mechanism, each data point is used in only $O(\log m)$ intervals, which is significantly fewer than in the naive approach. As a result, the added noise is smaller, leading to a reduced additive error.

Let $r(I)$ denote the smallest value in the interval $I$. To compute $\gamma_q^j$ for all $q$ and $j$, we privately release, for each interval $I \in \mathcal{I}$, the values $\frac{1}{n} \sum_{i : \pi_i(q) \in I} (\pi_i(q) - r(I))$ and $|\{i : \pi_i(q) \in I\}|$. These allow us to calculate $\frac{1}{n} \sum_{i : \pi_i(q) \in I} |\pi_i(q) - j|$ when $j \notin I$. Since there exists a collection of disjoint intervals in $\mathcal{I}$ whose union equals $[m] \setminus \{j\}$, summing over these intervals yields $\frac{1}{n} \sum_{i=1}^n |\pi_i(q) - j|$. Similar to the original binary tree mechanism, each data point appears in only $O(\log m)$ intervals instead of $m$, resulting in a smaller additive error in our mechanism.

\paragraph{$(2,\beta)$-approximation algorithm for the Kemeny rank aggregation} Since the Spearman's footrule distance is at least the Kendall's tau distance and at most twice that distance, any $(1, \beta)$-approximation algorithm for the footrule rank aggregation, as described in the previous paragraph, also serves as a $(2, \beta)$-approximation algorithm for the Kemeny rank aggregation problem.

\paragraph{$(1+\xi,\beta)$-approximation algorithm for the Kemeny rank aggregation}

Our improved PTAS requires separate treatments of the cases where $n$ is ``small'' and where $n$ is ``large''. For exposition purpose, we focus on $(\eps,\delta)$-DP and it is best to think of $n$ as being large when it is slightly 
larger than $m$, i.e. $n = \Theta_{\eps, \delta}(m \log m)$.  Note that, although this is the regime relevant to most practical ranking applications—where the number of voters $n$ is typically much larger than the number of candidates $m$—the error guaranteed by the PTAS in \cite{AlabiGKM22} is only $\tilde{O}_{\eps,\delta}\!\left(m^3\right)$, which is barely nontrivial, since any ranking incurs error at most $m^3$.
\ifaaaicameraready \else
 Below we show how to (significantly) improve on this when $n$ is small and when $n$ is large.
\fi

\paragraph{Large $n$: Leveraging Unbiasedness} Define the matrix $\bw^{\Pi} \in [0, 1]^{m \times m}$ by 
$w^{\Pi}_{uv} := \frac{1}{n} \sum_{i \in [n]} \bone[\pi_i(u) < \pi_i(v)]$.
We start from the following observation: 
we can privatize $\bw^{\Pi}$ by adding independent Gaussian noise $\cN(0, \sigma^2)$ where $\sigma = O\left(\frac{m \log(1/\delta)}{\eps n}\right)$ to each of its entry. Let us call the privatized matrix $\tbw$. Indeed, it is simple to see that, if we look at any permutation $\pi$, the difference in its cost between $\bw^{\Pi}$ and $\tbw$--which is a summation of $O(m^2)$ of the Gaussian noise--has variance only $O_{\eps, \delta}(m^2)$. Since there are only $m!$ rankings in total, a union bound (and concentration of Gaussian) then tells us that w.h.p. the cost difference for \emph{every} ranking is at most $O_{\eps, \delta}(m^{2.5})$. So if we run the non-private PTAS~\cite{MathieuS07} on $\tbw$, then we should already improve upon~\cite{AlabiGKM22}, right?
Although ostensibly correct, there is one flaw in this argument: $\tbw$ can have negative entries, for which the non-private PTAS cannot handle. Indeed, if we proceed by ``clipping'' $\tbw$ to ensure non-negativity, then the bias would make the error become $O_{\eps, \delta}(m^3)$, so unfortunately we obtain no improvement over \cite{AlabiGKM22}.

To overcome this issue, our main observation is that, to get a good approximation for $\bw^{\Pi}$, it suffices to get a good approximation for the instance where, for some $u, v$ such that $w_{uv} < w_{vu}$, we replace $w_{uv}$ with zero and replace $w_{vu}$ with $w_{vu} - w_{uv}$. Using this observation, by adding the noise only to the latter, we can ensure non-negativity with high probability as long as $\sigma = O\left(\frac{1}{\log m}\right)$. With some additional technical work, this solves the bias issue. However, the requirement on $\sigma$ means that $n$ has to be at least $\Omega_{\eps, \delta}(m \log m)$. This indeed necessitates another algorithm when $n$ is small.

\paragraph{Small $n$: Reduction to 2-Way Marginal.}
In the case of small $n$, our intuition starts from the following: Suppose instead of each $\pi_i$ being a permutation $[m] \to [m]$, it is a mapping $[m] \to \{0, 1\}$. Then, the term $\bone[\pi_i(u) < \pi_i(v)]$ is exactly equal to $\bone[\pi_i(u) = 0, \pi_i(v) = 1]$. Thus, each entry of $\bw^{\Pi}$ is simply a \emph{2-way marginal} query, which is well studied in DP literature. In particular, there is an efficient algorithm that can answer such queries with error $O_{\eps, \delta}(m^{1/4}/\sqrt{n})$ \cite{DworkNT15} per query. If we were to achieve this error for estimating $\bw^{\Pi}$, then we would already achieve improvement over \cite{AlabiGKM22} for $n = O_{\eps, \delta}(m \log m)$.

The main challenge is now to extend the above simplified setting $\pi_i: [m] \to \{0, 1\}$ to the desired setting $\pi_i: [m] \to [m]$. To do this, we apply \emph{bucketing}. 
Roughly speaking, we pick a number of buckets $B$, divide the range $[m]$ into $B$ buckets of equal size and separate the term $\bone[\pi_i(u) < \pi_i(v)]$ into two parts, based on whether $\pi_i(u), \pi_i(v)$ are from the same bucket. If they are from the same bucket, we use the Gaussian mechanism; since there are only $m^2/B$ such pairs, we can add smaller noise than without bucketing. If they are from different buckets, then we use a generalization of the 2-way marginal reduction described in the previous paragraph. By picking $B$ appropriately, we can optimize the error which ends up being $\tilde{O}_{\eps, \delta}(m^{43/22})$ when $n = \tilde{\Theta}_{\eps, \delta}(m)$.

\section{Preliminaries}

\subsection{Rank Aggregation}
\label{prelim:rank-agg}

For any $m \in \N$, we use $[m]$ to denote $\{1, \dots, m\}$ and let $\bbS_m$ denote the set of all permutations (i.e., all rankings) on $[m]$. In the rank aggregation problem, we are given a dataset $\Pi$ of $n$ rankings $(\pi_1, \dots, \pi_n) \in (\bbS_m)^n$, the goal is to output a ranking $\psi \in \bbS_m$ that minimizes $d(\psi, \Pi) := \frac{1}{n} \sum_{i \in [n]} d(\psi, \pi_i)$ where $d: \bbS_m \times \bbS_m \to \R_{\geq 0}$ is a certain distance. We will discuss two distances here:
\begin{itemize}
\item Spearman's footrule distance: \\$F(\psi, \pi) := \sum_{j \in [m]} |\psi(j) - \pi(j)|$.
\item  Kendall's tau distance: $K(\psi, \pi) := |\{(j, j') \in [m] \times [m] \mid \pi(j) < \pi(j') \wedge \psi(j) > \psi(j')\}|$.
\end{itemize}

An algorithm is \emph{efficient} if it runs in $(nm)^{O(1)}$ time.

\subsection{Differential Privacy}

We use the DP notion as defined in \Cref{def:dp}. Note that in rank aggregation problems, we have $\cX = \cY = \bbS_m$, and two datasets $\Pi, \Pi' \in (\bbS_m)^n$ are neighbors iff they differ on a single ranking. We also consider the non-interactive local model, defined below.

\begin{definition}[$\varepsilon$-LDP]
An algorithm in the (non-interactive) local model consists of a randomizer $\cR: \cX \to \cZ$ and an analyst $\cA: \cZ^n \to \cY$; with input dataset $\bX = (x_1, \dots, x_n)$, the final output is computed as $\cA(\cR(x_1), \dots, \cR(x_n))$. The algorithm (or the randomizer) is said to be $\eps$-LDP iff, for any $x, x' \in \cX$ and $\cO \subseteq \cZ$, $$\Pr[\cR(x) \in \cO] \leq e^\eps \cdot \Pr[\cR(x') \in \cO].$$
\end{definition}

Note that these DP notions are the same as the ones used in \cite{HayEM17,AlabiGKM22}.

For simplicity of presentation, we will assume throughout that $\eps \leq 1, \delta \in (0, 1/2)$, and we will not state this explicitly.

\ifaaaicameraready \else
Additional preliminaries (including sensitivity and basic DP mechanisms) can be found in \Cref{app:add-prelim}. Due to space constraints, we also defer most full proofs to the appendix.
\fi

\section{DP Approximate Median}
\label{sec:median}
In this section, we study the Approximate Median~(\apxmed) problem, which is closely related to the footrule ranking objective.  
The universe is $\cX = [m]$, and the input is $\bx = (x_1, \dots, x_n) \in [m]^n$.  
For each $j \in [m]$, define
  $\gamma_j(\bx) := \frac{1}{n} \sum_{i=1}^n \lvert x_i - j \rvert$.
The goal is to output estimates $(\tgamma_1, \dots, \tgamma_m)$.  
We measure the quality of these estimates by the $\ell_\infty$-error
  $\max_{j \in [m]} \lvert \tgamma_j - \gamma_j(\bx) \rvert$.
We say that an algorithm is $\beta$-accurate if the expected $\ell_\infty$-error of its output is at most $\beta$.
%

We refer to this task as Approximate Median because the median $j^*$ is a minimizer to $\gamma_{j^*}(\bx)$ and, thus, answering the queries $(\gamma_{j}(\bx))_{j \in [m]}$ allows us to determine how ``far'' each $j$ is from the median. While other approximate notions of median have been studied under DP (cf. \cite{GillenwaterJK21}), to the best of our knowledge, \apxmed~has not been considered  and may be of independent interest beyond the context of rank aggregation, e.g. in DP clustering~\cite{GhaziKM20}.

\ifaaaicameraready \else
As alluded to in \Cref{subsec:overview}, a key ingredient of our improved rank aggregation algorithms is a DP algorithm for \apxmed, which is stated more formally below.
\fi

\begin{theorem} \label{thm:apx-median}
There is an efficient algorithm for \apxmed~that is $\beta$-accurate with
\begin{itemize}
\item $\beta = O\left(\frac{m \log m}{\eps n}\right)$ for $\eps$-DP,
\item $\beta = O\left(\frac{m \sqrt{\log m \log(1/\delta)}}{\eps n}\right)$ for $(\eps, \delta)$-DP,
\item $\beta = O\left(\frac{m \sqrt{\log m}}{\eps \sqrt{n}}\right)$ for $\eps$-LDP.
\end{itemize}
\end{theorem}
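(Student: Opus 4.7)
The plan is to adapt the binary tree mechanism to this ``signed offset'' setting. Let $L = \lceil \log_2 m \rceil$ and consider the collection $\cI$ of dyadic intervals $I_{\ell,p} := [(p-1)2^\ell + 1,\, p \cdot 2^\ell]$ for $\ell \in \{0,\dots,L\}$ and $p \in [\lceil m/2^\ell \rceil]$. For each $I \in \cI$, define
\begin{equation*}
c_I(\bx) := |\{i : x_i \in I\}| \quad \text{and} \quad s_I(\bx) := \sum_{i : x_i \in I}(x_i - r(I)),
\end{equation*}
where $r(I)$ is the left endpoint of $I$. The key structural observation is that for any $j \in [m]$, one can write $n \cdot \gamma_j(\bx)$ as a linear combination of $\{c_I, s_I\}_{I \in \cJ_j}$ over a disjoint dyadic decomposition $\cJ_j \subseteq \cI$ of $[m] \setminus \{j\}$ of size $O(\log m)$, in which the coefficients multiplying $c_I$ and $s_I$ at each level $\ell$ are both $O(2^\ell)$. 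This coefficient bound relies on the fact that in the greedy dyadic decomposition of a prefix or suffix rooted at $j$, every interval at level $\ell$ lies within distance $O(2^\ell)$ of $j$.

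Having set up this linearization, I would privatize all $\{c_I, s_I\}_{I \in \cI}$ jointly. Changing any single $x_i$ alters $c_I$ for at most two intervals per level (by $1$ each) and $s_I$ at each level by at most $2^\ell$. For \emph{$\eps$-DP}, I allocate a budget of $\Theta(\eps/L)$ to each level and apply the Laplace mechanism with scale $\Theta(L/\eps)$ to each $c_I$ and $\Theta(2^\ell L/\eps)$ to each $s_I$ at level $\ell$; basic composition across the $O(L)$ levels then yields $\eps$-DP. For \emph{$(\eps,\delta)$-DP}, I would instead use Gaussian noise of scales $\Theta(\sqrt{L \log(1/\delta)}/\eps)$ and $\Theta(2^\ell \sqrt{L \log(1/\delta)}/\eps)$, exploiting advanced composition to save a factor of $\sqrt{L}$ over the Laplace version. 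For \emph{$\eps$-LDP}, each user $i$ would release locally, at each level $\ell$, a noisy version of their one-hot indicator vector and the corresponding offset vector via Laplace noise with scales $\Theta(L/\eps)$ and $\Theta(2^\ell L/\eps)$; this is valid because at each level the user's contribution has constant $\ell_1$-sensitivity. After aggregating the $n$ users, the noise standard deviations grow by a factor of $\sqrt{n}$ relative to the central case.

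Finally, for the error analysis, $n \cdot (\tgamma_j - \gamma_j(\bx))$ is a sum of $O(L)$ independent noise terms which, weighted by the $O(2^\ell)$ coefficients, have total variance $O(m^2 \log m / \eps^2)$ for $\eps$-DP, $O(m^2 \log m \log(1/\delta)/\eps^2)$ for $(\eps,\delta)$-DP, and $O(m^2 n \log^2 m / \eps^2)$ for $\eps$-LDP. Standard sub-Gaussian (respectively sub-exponential) concentration, combined with a union bound over the $m$ queries and division by $n$, yields the stated expected $\ell_\infty$-error bounds. The main obstacle I expect is tracking the $\log m$ factors carefully across the three settings---particularly, arguing that because the noises $\tgamma_j - \gamma_j(\bx)$ for different $j$ share the same underlying binary-tree noise, the expected $\ell_\infty$ bound does not pay an additional $\sqrt{\log m}$ or $\log m$ factor from the union bound. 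Setting up the LDP randomizer so that each user's release at each level has only constant $\ell_1$-sensitivity (rather than a naive $\sqrt{m/2^\ell}$) is also essential to avoid spurious polynomial dependence on $m$.
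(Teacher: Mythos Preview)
Your binary-tree decomposition with privatized counts $c_I$ and offset sums $s_I$ is exactly the paper's approach. There are two gaps, one in the analysis and one in the LDP mechanism.

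First, your variance accounting for $\eps$-DP is off: with uniform budget $\eps/L$ per level the Laplace scales are $\Theta(L/\eps)$ and $\Theta(2^\ell L/\eps)$, so the variance of $n(\tgamma_j-\gamma_j)$ is $\sum_\ell\Theta(4^\ell L^2/\eps^2)=\Theta(m^2\log^2 m/\eps^2)$, not $\Theta(m^2\log m/\eps^2)$; concentration plus a naive union bound then yields only $O(m\log^2 m/(\eps n))$. The stated bound \emph{is} recoverable from your mechanism, but via a level-wise argument rather than a global variance one: bound $\E[\max_j|\tgamma_j-\gamma_j|]\le\sum_\ell\E[\max_j|\text{level-}\ell\text{ term}|]$, observe that at level $\ell$ only $m/2^\ell$ distinct noise variables occur so the inner max costs $\log(m/2^\ell)$ (Laplace) or $\sqrt{\log(m/2^\ell)}$ (Gaussian), and use $\sum_\ell 2^\ell(L-\ell)=O(m)$ and $\sum_\ell 2^\ell\sqrt{L-\ell}=O(m)$. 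The paper sidesteps this per-level bookkeeping with a geometric reweighting: multiply the level-$\ell$ contributions by $\kappa^{d-\ell}$ for a fixed $\kappa\in(1,2)$ before a \emph{single} vector-aggregation call, and divide out afterward. Then both the per-user $\ell_p$-norm $\sum_\ell(\kappa^{d-\ell}2^\ell)^p$ and the error-transfer sum $\sum_\ell\kappa^{\ell-d}$ become convergent geometric series, giving an $O(1)$ transfer uniformly across all three models.

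Second, and not fixable by analysis alone, your LDP randomizer cannot reach the stated $O(m\sqrt{\log m}/(\eps\sqrt n))$ bound. Splitting the local budget as $\eps/L$ per level---unavoidable under basic composition for pure-LDP per-level Laplace releases---forces every scale to carry a factor $L$, and even the level-wise analysis above then gives $O(m\log m/(\eps\sqrt n))$, off by $\sqrt{\log m}$. The paper instead concatenates the full weighted contribution vector and applies a single $\ell_2$-ball LDP randomizer (Duchi--Jordan--Wainwright), whose $\ell_\infty$ aggregation error is $O(C\sqrt{\log d}/(\eps\sqrt n))$ with $C=O(m)$ and $d=O(m)$; combined with the $O(1)$ weighted transfer this matches the claim. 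Per-coordinate local Laplace does not.
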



\paragraph{Modified Binary Tree Mechanism}

We present an algorithm for answering approximate median queries by adapting the well-known \emph{Binary Tree Mechanism}~\cite{DworkNPR10,ChanSS11}, originally designed for range queries. We assume wlog\footnote{If not, we may increase $m$ to the next power of two.} that the domain size $m$ is a power of two, i.e., $m = 2^d$ for some $d \in \N$. 

We define a complete binary tree with the set of 
nodes $\cB$; each node $t \in \cB$ is associated with a subset of the domain, denoted by $I(t) \subseteq [m]$. Each leaf corresponds to a singleton set. The tree has $d + 1$ levels, indexed by $\ell \in \{0, \dots, d\}$ (where leaves are at level 0). For any internal node $t$ at level $\ell$, the set $I(t)$ consists of the union of $I(t')$ for all leaves $t'$ in its subtree and is of the form $I(t) = [(p - 1) \cdot 2^\ell + 1, p \cdot 2^\ell]$ for some $p \in [m / 2^\ell]$. A node $t$ in the tree is called a \textit{left node} if all elements of $I(t)$ are smaller than those of its sibling. 
The smallest value in $I(t)$ is denoted by $r(t)$ and the level of the node $t$ is defined by $\ell(t)$.

Our algorithm is presented in \Cref{algo:modified-binary}.
In the original binary tree mechanism~\cite{DworkNPR10,ChanSS11}, each node $t$ stores the number of the data points that fall within its associated interval $I(t)$. In our algorithm, to support the \apxmed~computation, each node $t$ maintains $v_t^{\rm agg} = \frac{1}{n} \sum_{x_i \in I(t)} |x_i - r(t)|$ and $u_t^{\rm agg} = \frac{1}{n} \sum_{x_i \in I(t)} 2^{\ell(t)}$.

For \apxmed, we consider all sibling nodes of the nodes whose intervals contain the candidate value $j$. These sibling intervals are disjoint and their union is equal to $[m] \setminus \{j\}$. Therefore, Lines \ref{line:start-computation}–\ref{line:finish-estimate} of~\Cref{algo:modified-binary} compute the contribution of all data points $x_i \ne j$ exactly once. This allows us to estimate the quantity $\frac{1}{n} \sum_i |x_i - j|$ by the expression $\frac{1}{n} \sum_{t' \in \cB'_j} \sum_{x_i \in I(t')} |x_i - j|$, where $\cB'_j$ denotes the set of relevant sibling nodes. The sum $\sum_{x_i \in I(t')} |x_i - j|$ is estimated by the computation at Line \ref{line:finish-estimate} of the algorithm.

Note that we have to compute $v_t^{\rm agg}, u_t^{\rm agg}$ while respecting the privacy constraints. To do this, we write out these as summands $v_t^{\rm agg} = \frac{1}{n} \sum_{i \in [n]} \bone[x_i \in I(t)] \cdot (x_i - r(t))$ and  $u_t^{\rm agg} = \frac{1}{n} \sum_{i \in [n]} \bone[x_i \in I(t)] \cdot 2^{\ell(t)}$. Note that each inner term depends only on $x_i$. This means that $v^{\rm agg}, u^{\rm agg}$ are simply the average of $n$ vectors, where each vector depends only on a single $x_i$. Thus, we may employ known \emph{vector aggregation algorithms} in $\eps$-DP, $(\eps,\delta)$-DP and $\eps$-LDP for this task; indeed, this is the only difference between the three settings. Specifically, we use the Laplace mechanism, the Gaussian mechanism, and an algorithm of \cite{duchi2014localprivacydataprocessing}, respectively. Finally, to optimize the error further, we employ a weighting strategy where we multiply the contributions to $v_t^{\rm agg}, u_t^{\rm agg}$ by $\kappa^{d-\ell(t)}$ before aggregation (Lines \ref{line:wei-firstline}-\ref{line:wei-secondline}), and multiplying $\kappa^{\ell(t)-d}$ back to the estimates (Lines \ref{line:rewei-firstline}-\ref{line:rewei-secondline}). Here $\kappa$ can be any constant between 1 and 2. This helps reduce the error by a logarithmic factor.

\ifaaaicameraready \else
We defer the full proof of \Cref{thm:apx-median} to \Cref{app:median}.
\fi

\begin{algorithm}
    \caption{Modified binary tree mechanism for \apxmed}
    \label{algo:modified-binary}
\textbf{Input:} Dataset $\bx = (x_1, \dots, x_n) \in [m]^n$\\
\textbf{Output:} Vector $(\tgamma_j)_{j \in [m]}$ \\ 
\textbf{Parameters: } DP vector aggregation algorithm \vecagg, weight growth factor $\kappa \in (1, 2)$
\begin{algorithmic}[1]
\FORALL[Weighted Contribution]{$t \in \cB, i \in [n]$} 
\STATE $v_{i, t}^{\wei} \gets \kappa^{d - \ell(t)} \cdot \bone[x_i \in I(t)] \cdot (x_i - r(t))$ \label{line:wei-firstline}
\STATE $u_{i, t}^{\wei} \gets \kappa^{d - \ell(t)} \cdot \bone[x_i \in I(t)] \cdot 2^{\ell(t)}$ \label{line:wei-secondline}
\ENDFOR

\STATE Use $\vecagg$ to aggregate vectors $v_{i}^{\wei}$ and $u_i^{\wei}$ across all $i \in [n]$; let $\tv^{\agg, \wei}$ and $\tu^{\agg, \wei}$ be the result. \label{line:vec-agg}

\FORALL[Reweight Aggregate Vector]{$t \in \cB$} 
\STATE $\tv_{t}^{\agg} \gets \kappa^{\ell(t) - d} \cdot \tv_{t}^{\agg, \wei}$ \label{line:rewei-firstline}
\STATE $\tu_{t}^{\agg} \gets \kappa^{\ell(t) - d} \cdot \tu_{t}^{\agg, \wei}$ \label{line:rewei-secondline}
\ENDFOR

        \FORALL[Compute Estimates]{$j \in [m]$} \label{line:start-computation}
            \STATE $\cB_j \gets$ the set of non-root nodes $t$ such that $j \in I(t)$. 
            \FORALL{$t \in \cB_j$}
            \STATE $t' \gets $ sibling of $t$
            \STATE $s_{j,t} \gets \begin{cases} 1 &\text{ if } t \text{ is a left node} \\ -1 &\text{ otherwise}.\end{cases}$
            \STATE $\tgamma_{j,t} \gets s_{j, t} \cdot \left(\tv^\agg_{t'} + \frac{r(t') - j}{2^{\ell(t')}} \cdot \tu^{\agg}_{t'}\right)$. \label{line:finish-estimate}
            \ENDFOR
            \STATE $\tgamma_j \gets \sum_{t \in \cB_j} \tgamma_{j,t}$  \label{line:estimate}
        \ENDFOR
        \RETURN $(\tgamma_j)_{j \in [m]}$
    \end{algorithmic}
\end{algorithm}

\paragraph{Parallel Approximate Median.}
Now, we consider an \emph{$m$-parallel} version of $\apxmed$, which will be convenient for the next application. In this variant, the domain $\cX$ is now $[m]^m$; i.e. the $i$-th input is now $x_i = (x_{i,1}, \dots, x_{i,m})$. The goal is to compute approximate median for $x_{1,q}, \dots, x_{n,q}$ for all $q \in [m]$. That is, for all $j \in [m], q \in [m]$, we wish to estimate $\gamma_{j,q}(\bx) := \frac{1}{n} \sum_{i=1}^n |x_{i,q} - j|$. Similar to before, the $\ell_\infty$-error of the estimates $(\tgamma_{j,q})_{j,q\in[m]}$ is defined as $\max_{j,q \in [m]} |\tgamma_{j,q} - \gamma_{j,q}(\bx)|$. An algorithm is $\beta$-accurate if the expected $\ell_\infty$-error of its output is at most $\beta$. 

The following is a simple corollary of \Cref{thm:apx-median}:

\begin{corollary} \label{cor:par-apx-median}
There is an efficient algorithm for $m$-parallel \apxmed~that is $\beta$-accurate with
\begin{itemize}
\item $\beta = O\left(\frac{m^2 \log m}{\eps n}\right)$ for $\eps$-DP,
\item $\beta = O\left(\frac{m^{1.5} \sqrt{\log m \log(1/\delta)}}{\eps n}\right)$ for $(\eps, \delta)$-DP,
\item $\beta = O\left(\frac{m^{1.5} \sqrt{\log m}}{\eps \sqrt{n}}\right)$ for $\eps$-LDP.
\end{itemize}
\end{corollary}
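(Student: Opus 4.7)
The plan is to derive \Cref{cor:par-apx-median} by applying \Cref{algo:modified-binary} to each of the $m$ coordinates and combining the $m$ resulting invocations with the appropriate DP accounting strategy. Concretely, for each coordinate $q \in [m]$ I would construct a separate binary tree $\cB^{(q)}$ on $[m]$ and compute the per-user weighted contributions $v_{i,t}^{\wei,(q)}$ and $u_{i,t}^{\wei,(q)}$ exactly as in the single-coordinate algorithm. I would then either (a) perform $m$ independent vector aggregations, one per coordinate, and compose their privacy losses, or (b) perform a single vector aggregation on the concatenation of all per-user contribution vectors; the right choice will depend on the DP setting.

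For the $\eps$-DP setting, option (a) with basic composition suffices: allocating budget $\eps/m$ per coordinate gives a per-coordinate expected $\ell_\infty$ error of $O((m \log m)/((\eps/m) n)) = O(m^2 \log m/(\eps n))$ by \Cref{thm:apx-median}, and the max over the $m$ coordinates preserves this bound up to constants thanks to the concentration of Laplace noise. For the $(\eps,\delta)$-DP setting, option (b) is preferable: since a user affects exactly one leaf in each of the $m$ trees, the $L_2$ sensitivity of the concatenated per-user contribution vector is only $\sqrt{m}$ times the single-tree $L_2$ sensitivity, so the Gaussian noise standard deviation grows by a factor of $\sqrt{m}$ rather than $m$. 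Tracing this through the geometric sum of reweighted noises at Line \ref{line:finish-estimate} and the $\ell_\infty$-maximization over $m^2$ queries yields the claimed $O(m^{1.5}\sqrt{\log m \log(1/\delta)}/(\eps n))$.

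The $\eps$-LDP setting is handled analogously via option (b): each user encodes the concatenated contribution vector and passes it through a single LDP vector mean-estimation primitive \cite{duchi2014localprivacydataprocessing}, whose error scales with the square root of the dimension and thus gains a $\sqrt{m}$ factor over the single-coordinate bound. The main obstacle I anticipate is the bookkeeping of the norms of the weighted contributions in option (b): one must verify that the $L_2$ sensitivity of the concatenated vector really is $\sqrt{m}$ times the single-tree sensitivity, and that the reweighting by $\kappa^{d-\ell(t)}$ continues to geometrically equalize noise variance across levels in the joint setting. These calculations are essentially identical to the single-tree case underlying \Cref{thm:apx-median}, so I expect no new conceptual hurdles beyond carefully propagating the $\sqrt{m}$ or $m$ factors through the final $\ell_\infty$ union bound over $m^2$ queries.
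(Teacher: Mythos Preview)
Your proposal is correct and follows essentially the same approach as the paper: run \Cref{algo:modified-binary} on each of the $m$ coordinates and aggregate via a single concatenated vector, observing that the per-user $\ell_2$-norm grows by $\sqrt{m}$ (for $(\eps,\delta)$-DP and $\eps$-LDP) and the $\ell_1$-norm by $m$ (for $\eps$-DP). The only cosmetic difference is that for $\eps$-DP you phrase it as basic composition with budget $\eps/m$ per coordinate rather than as a single Laplace mechanism on the concatenated vector with $\ell_1$-norm $m$ times larger; these are literally the same mechanism, so the distinction is immaterial.
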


\section{From \apxmed~to Rank Aggregation}
\label{sec:reduction}

We next provide a reduction from \apxmed~to footrule rank aggregation. We use the matching-based non-private algorithm of \cite{DworkKNS01} but with the weights computed based on our \apxmed~algorithm, as shown in \Cref{alg:footrule-rank-agg}.

\begin{algorithm}
    \caption{Footrule Rank Aggregation from \apxmed}
    \label{alg:footrule-rank-agg}
\textbf{Input:} Dataset $\Pi = (\pi_1, \dots, \pi_n) \in (\bbS_m)^n$\\
\textbf{Output:} Aggregated rank $\psi \in \bbS_m$ 
\begin{algorithmic}[1]
\STATE $(\tgamma_{j, q})_{j, q \in [m]} \gets$ output from running \emph{$m$-parallel \apxmed}~algorithm in \Cref{cor:par-apx-median} on input $\Pi$.

\STATE $G \gets$ weighted complete bipartite graph where both the left and right vertex sets are $[m]$, and the weight of each edge $(q, j)$ is $\tgamma_{j,q}$.
\RETURN Minimum-weight bipartite matching of $G$. (i.e., if edge $(a, b)$ belongs to the matching, let $\psi(a) = b$.)
\end{algorithmic}
\end{algorithm}

We can show that the additive error of \Cref{alg:footrule-rank-agg} is $O(m)$ times the $\ell_\infty$-error of the estimates for $m$-parallel \apxmed. Thus, from \Cref{cor:par-apx-median}, we immediately have

\begin{theorem}
    \label{thm:red-median-to-rank-agg}
There is an efficient $(1, \beta)$-approximation algorithm for footrule rank aggregation with
\begin{itemize}
\item $\beta = O\left(\frac{m^3 \log m}{\eps n}\right)$ for $\eps$-DP,
\item $\beta = O\left(\frac{m^{2.5} \sqrt{\log m \log(1/\delta)}}{\eps n}\right)$ for $(\eps, \delta)$-DP,
\item $\beta = O\left(\frac{m^{2.5} \sqrt{\log m}}{\eps \sqrt{n}}\right)$ for $\eps$-LDP.
\end{itemize}
\end{theorem}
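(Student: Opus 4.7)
The plan is to prove Theorem 4.1 in two pieces: (i) verify that \Cref{alg:footrule-rank-agg} inherits the DP guarantee of the $m$-parallel \apxmed~subroutine, and (ii) bound its approximation error by $O(m)$ times the $\ell_\infty$-error of that subroutine, after which one plugs in the three bounds from \Cref{cor:par-apx-median}.

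For privacy, I would simply observe that the algorithm accesses $\Pi$ \emph{only} through the $m$-parallel \apxmed~call: the bipartite graph $G$ and the minimum-weight matching computed from it are deterministic functions of $(\tgamma_{j,q})_{j,q\in[m]}$. Hence by the post-processing property of DP (in its standard, local, and approximate forms), the overall algorithm enjoys the same DP parameters as the subroutine from \Cref{cor:par-apx-median}.

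For utility, I would set up the standard ``matching is robust to $\ell_\infty$-perturbations of weights'' argument. Identify the footrule cost of a permutation $\psi$ with the matching cost $\cost(\psi;\gamma) := \sum_{q \in [m]} \gamma_{\psi(q),q}$, where $\gamma_{j,q} = \frac{1}{n}\sum_i |\pi_i(q) - j|$; note this is exactly $d(\psi,\Pi)$ by swapping the order of summation and using $\pi_i(q)$ for the position of candidate $q$. Let $\psi^\star$ be a true footrule optimum and $\psi$ the output of \Cref{alg:footrule-rank-agg}. Define the realized error $\eta := \max_{j,q \in [m]} |\tgamma_{j,q} - \gamma_{j,q}|$. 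Since each permutation contributes exactly $m$ weights to a matching cost, we have $|\cost(\psi';\tgamma) - \cost(\psi';\gamma)| \le m\eta$ for every $\psi' \in \bbS_m$. Combining this with the optimality of $\psi$ with respect to $\tgamma$ gives
\begin{align*}
\cost(\psi;\gamma)
&\le \cost(\psi;\tgamma) + m\eta \\
&\le \cost(\psi^\star;\tgamma) + m\eta \\
&\le \cost(\psi^\star;\gamma) + 2m\eta.
\end{align*}
Taking expectations and using $\E[\eta] \le \beta_{\apxmed}$, where $\beta_{\apxmed}$ is the accuracy bound guaranteed by \Cref{cor:par-apx-median}, yields $\E[d(\psi,\Pi)] \le \opt + 2m \cdot \beta_{\apxmed}$.

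Substituting the three bounds of \Cref{cor:par-apx-median} for $\eps$-DP, $(\eps,\delta)$-DP, and $\eps$-LDP respectively gives the stated additive errors $O(m^3\log m/(\eps n))$, $O(m^{2.5}\sqrt{\log m \log(1/\delta)}/(\eps n))$, and $O(m^{2.5}\sqrt{\log m}/(\eps\sqrt{n}))$, with multiplicative factor $\alpha = 1$. Efficiency is immediate: the \apxmed~subroutine is efficient by \Cref{cor:par-apx-median}, and minimum-weight bipartite matching on $m+m$ vertices runs in $\poly(m)$ time. There is no real obstacle here beyond the boilerplate above; the entire argument is a clean reduction, and the only place one needs to be careful is in confirming that the footrule objective decomposes coordinate-wise so that the optimal ranking really is the minimum-weight perfect matching in $G$ under the true weights $\gamma_{j,q}$.
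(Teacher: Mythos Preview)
Your proposal is correct and follows essentially the same route as the paper: privacy via post-processing of the $m$-parallel \apxmed~output, and utility via the three-line chain $\cost(\psi;\gamma)\le\cost(\psi;\tgamma)+m\eta\le\cost(\psi^\star;\tgamma)+m\eta\le\cost(\psi^\star;\gamma)+2m\eta$, followed by taking expectations and plugging in \Cref{cor:par-apx-median}. If anything, you are slightly more explicit than the paper about the expectation step (the paper states the pointwise bound and then says the rest ``follows immediately''), but the arguments are otherwise identical.
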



Since the Spearman's footrule distance is always at least the Kendall's tau distance and at most twice that distance, we also get the following as a corollary:

\begin{theorem} \label{thm:two-apx}
There is an efficient $(2, \beta)$-approximation algorithm for Kemeny rank aggregation with
\begin{itemize}
\item $\beta = O\left(\frac{m^3 \log m}{\eps n}\right)$ for $\eps$-DP,
\item $\beta = O\left(\frac{m^{2.5} \sqrt{\log m \log(1/\delta)}}{\eps n}\right)$ for $(\eps, \delta)$-DP,
\item $\beta = O\left(\frac{m^{2.5} \sqrt{\log m}}{\eps \sqrt{n}}\right)$ for $\eps$-LDP.
\end{itemize}
\end{theorem}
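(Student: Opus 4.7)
The plan is to invoke \Cref{thm:red-median-to-rank-agg} as a black box and convert its footrule guarantee into a Kemeny guarantee via the classical Diaconis--Graham inequality, which states that for any two permutations $\psi, \pi \in \bbS_m$,
\[
K(\psi, \pi) \;\leq\; F(\psi, \pi) \;\leq\; 2 \, K(\psi, \pi).
\]
Averaging each side over $i \in [n]$ yields $K(\psi, \Pi) \leq F(\psi, \Pi) \leq 2 K(\psi, \Pi)$ for every dataset $\Pi$. This is precisely the relationship that the technical overview in \Cref{subsec:overview} already flagged as the reason a $(1,\beta)$ algorithm for footrule gives a $(2,\beta)$ algorithm for Kemeny.

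Concretely, I would run \Cref{alg:footrule-rank-agg} on $\Pi$ and return its output $\psi$. Let $\psi_F^\star$ and $\psi_K^\star$ denote an optimal footrule ranking and an optimal Kemeny ranking of $\Pi$, respectively. By \Cref{thm:red-median-to-rank-agg},
\[
\E[F(\psi, \Pi)] \;\leq\; F(\psi_F^\star, \Pi) + \beta \;\leq\; F(\psi_K^\star, \Pi) + \beta,
\]
where the second inequality uses that $\psi_F^\star$ is a footrule minimizer. Chaining with the two halves of Diaconis--Graham (the lower bound $K \leq F$ applied to $\psi$, and the upper bound $F \leq 2K$ applied to $\psi_K^\star$),
\[
\E[K(\psi, \Pi)] \;\leq\; \E[F(\psi, \Pi)] \;\leq\; 2\, K(\psi_K^\star, \Pi) + \beta,
\]
which is exactly the $(2,\beta)$-approximation guarantee for Kemeny, with the same $\beta$ values as in \Cref{thm:red-median-to-rank-agg}.

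Privacy and efficiency are inherited with no modification: we execute precisely \Cref{alg:footrule-rank-agg}, so the $\eps$-DP, $(\eps,\delta)$-DP, and $\eps$-LDP guarantees as well as the $(nm)^{O(1)}$ running time transfer verbatim from \Cref{thm:red-median-to-rank-agg}. There is no real obstacle here: the Diaconis--Graham inequality is a standard pair-by-pair counting argument (each inverted pair contributes at least $1$ and at most $2$ to the position-wise displacement sum), and the rest is a three-line chain of deterministic inequalities that commute with the expectation over the algorithm's internal randomness.
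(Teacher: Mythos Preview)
Your proposal is correct and is essentially the paper's own proof: the paper invokes \Cref{thm:red-median-to-rank-agg} together with \Cref{lem:reduction-generic}, whose proof is exactly the Diaconis--Graham chain $K(\psi,\Pi)\le F(\psi,\Pi)$ and $F(\psi_K^\star,\Pi)\le 2K(\psi_K^\star,\Pi)$ that you wrote out. You have simply inlined the $\alpha=1$ case of that lemma.
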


As mentioned earlier, the additive errors $\beta$ in \Cref{thm:red-median-to-rank-agg,thm:two-apx} for $\eps$-DP and $(\eps,\delta)$-DP are optimal due to the lower bound of \cite{AlabiGKM22}\ifaaaicameraready.\else; see \Cref{app:lb} for a more detailed discussion.\fi

\section{PTAS for Kemeny Rank Aggregation}
\label{sec:PTAS}

In this section, we present our PTAS for Kemeny rank aggregation. Recall that \citet{AlabiGKM22} gave PTASes for the problem with additive errors $\beta = \tilde{O}_{\xi} \left(\frac{m^4}{\eps n}\right)$ and $\tilde{O}_{\xi,\delta}\left(\frac{m^3}{\eps n}\right)$ for $\eps$-DP and $(\eps, \delta)$-DP, respectively. Our algorithm  improves on these additive errors, as stated below.

\begin{theorem} \label{thm:ptas-merged}
For every constant $\xi > 0$, there is an efficient $(1 + \xi, \beta)$-approximation algorithm for Kemeny rank aggregation with
\begin{itemize}
\item $\beta = \tilde{O}_\xi\left(\frac{m^{27/7}}{\eps n}\right)$ for $\eps$-DP,
\item $\beta = \tilde{O}_\xi\left(\frac{m^{65/22} \sqrt{\log(1/\delta)}}{\eps n}\right)$ for $(\eps, \delta)$-DP,
\end{itemize}
\end{theorem}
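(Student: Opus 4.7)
The plan is to prove \Cref{thm:ptas-merged} by designing two algorithms suited to different regimes of $n$—one for ``large'' $n$ and one for ``small'' $n$—and returning whichever yields the better additive error at a carefully chosen threshold. Combining them by taking the smaller bound produces the unified exponents.

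For the \emph{large-$n$} algorithm, I would form the pairwise-comparison matrix $\bw^\Pi \in [0,1]^{m \times m}$ with $w^\Pi_{uv} := \frac{1}{n}\sum_{i \in [n]} \bone[\pi_i(u) < \pi_i(v)]$ and note that the Kemeny cost of any permutation is a linear functional of $\bw^\Pi$. Naive Gaussian (or Laplace) privatization followed by clipping to enforce non-negativity introduces bias of order $m^2 \cdot \sigma$, which only recovers the bound of \cite{AlabiGKM22}; to sidestep this, I would first \emph{normalize} the matrix by, for each pair $\{u,v\}$, replacing the smaller of $w^\Pi_{uv}, w^\Pi_{vu}$ with $0$ and the larger with $|w^\Pi_{uv} - w^\Pi_{vu}|$. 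Since $w^\Pi_{uv} + w^\Pi_{vu} = 1$, the Kemeny cost on the normalized matrix differs from the true cost by a permutation-independent constant, so optimizing over either is equivalent. Crucially, most normalized entries are either exactly zero (to which we add no noise) or significantly larger than the noise scale, so after adding Gaussian (resp.\ Laplace) noise the matrix remains non-negative with high probability and the residual clipping bias is negligible. I would then run the non-private PTAS of \cite{MathieuS07} on the noised normalized matrix; the utility follows by combining (i) the $(1+\xi)$-approximation of Mathieu--Schudy on the noised instance with (ii) a sub-Gaussian tail bound on the $O(m^2)$-term cost-difference gap for each fixed ranking, union-bounded over all $m!$ rankings. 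This algorithm is valid once $n \geq \tilde\Omega_{\eps,\delta}(m\log m)$.

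For the \emph{small-$n$} algorithm, I would introduce a parameter $B$ and partition $[m]$ into $B$ equal-size buckets. Each indicator $\bone[\pi_i(u) < \pi_i(v)]$ splits into a ``cross-bucket'' part, determined solely by which buckets $\pi_i(u), \pi_i(v)$ lie in, plus a ``within-bucket'' part. The cross-bucket contributions can be written as 2-way marginals of the binary vectors $(\bone[\pi_i(u) \in \text{bucket } k])_{u \in [m], k \in [B]}$, so I would release them via the projection-mechanism algorithm of \cite{DworkNT15} with error $\tilde O_{\eps,\delta}(B^{1/4}/\sqrt{n})$ per query; the $O(m^2/B)$ within-bucket aggregates are then released directly by the Gaussian mechanism, whose noise is smaller because fewer entries are being privatized. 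Feeding the reconstructed matrix into the non-private PTAS and optimizing $B$ balances the two sources of error and yields the small-$n$ bound.

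The main obstacle I anticipate is the analysis of the normalization step in the large-$n$ algorithm: one must show that the non-negativity requirement of the Mathieu--Schudy PTAS holds with high probability under the added noise, and that, for the few entries on which clipping does trigger, the bias contributes only a lower-order term to the additive error. A secondary subtlety is correctly accounting for privacy composition when the small-$n$ algorithm splits its budget between the cross-bucket 2-way-marginal queries and the within-bucket Gaussian mechanism, and ensuring that the merged matrix still satisfies the structural conditions the non-private PTAS relies on. Once these two pieces are in place, taking the better of the two bounds at the optimal $n$-threshold produces the exponents $27/7$ for $\eps$-DP and $65/22$ for $(\eps,\delta)$-DP.
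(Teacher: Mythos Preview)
Your overall strategy—split by a threshold on $n$, use bucketing plus two-way marginals for small $n$, and use an unbiased noising trick for large $n$—is exactly the paper's approach, and your small-$n$ algorithm matches the paper closely (modulo the dimension in the marginal error: it is $(mB)^{1/4}$, not $B^{1/4}$, since the binary encoding has $d=mB$ coordinates).

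However, your large-$n$ algorithm has a genuine gap. You propose to normalize \emph{every} pair $\{u,v\}$ by setting the smaller entry to $0$ and the larger to $|w^\Pi_{uv}-w^\Pi_{vu}|$. This fails for two related reasons. First, the Mathieu--Schudy PTAS requires the instance to be \emph{bounded}, meaning $w_{uv}+w_{vu}\in[1/2,2]$; after your normalization the sum becomes $|w^\Pi_{uv}-w^\Pi_{vu}|$, which can be arbitrarily close to $0$ (e.g., when $w^\Pi_{uv}\approx 1/2$), so the PTAS simply does not apply. Second, your claim that ``most normalized entries are either exactly zero or significantly larger than the noise scale'' is instance-dependent and false in general: an instance can have $\Theta(m^2)$ pairs with $|w^\Pi_{uv}-w^\Pi_{vu}|$ much smaller than $\sigma$, and clipping each back to non-negative reintroduces exactly the $\Theta(m^2\sigma)$ bias you were trying to avoid. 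The obstacle you flag is real, but your proposed resolution (``few entries trigger clipping'') does not work.

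The paper fixes this by first \emph{privately classifying} each pair as balanced ($w^\Pi_{uv}\in[1/6,5/6]$ approximately) or imbalanced, via a separate noisy release of $w^\Pi_{uv}$. For imbalanced pairs it does your normalization (the surviving entry is then $\ge 2/3$, so noise of scale $o(1)$ keeps it non-negative with high probability); for balanced pairs it adds noise to $w_{uv}$ directly and sets $w_{vu}=1-w_{uv}$ (since $w_{uv}\ge 1/6$, it also stays non-negative). In both cases the sum $\tilde w'_{uv}+\tilde w'_{vu}$ is bounded away from $0$, so boundedness holds and the PTAS applies. This classification step is the missing idea in your proposal. Note also that for $\eps$-DP the large-$n$ threshold is $\tilde\Omega_\eps(m^2)$ rather than $\tilde\Omega(m\log m)$, because Laplace noise has $\ell_1$ rather than $\ell_2$ sensitivity scaling; this is what ultimately produces the exponent $27/7$.
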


\subsection{Additional Preliminaries}

Our PTAS will require additional tools and preliminaries, which we list below.

For an instance $\Pi = (\pi_1, \dots, \pi_n)$ of rank aggregation, we define the matrix $\bw^{\Pi} \in [0, 1]^{m \times m}$ by
\begin{align} \label{eq:def-pairwise-matrix}
w^{\Pi}_{uv} := \frac{1}{n} \sum_{i \in [n]} \bone[\pi_i(u) < \pi_i(v)]. 
\end{align}

One important observation to make is that the desired objective, $K(\psi, \Pi)$, can be written in terms of $\bw^{\Pi}$ as 
$K(\psi, \Pi) = \sum_{u, v \in [n] \atop \psi(u) < \psi(v)} w_{vu}^{\Pi}$.

\paragraph{Weighted Feedback Arc Set.} For our PTAS, we need to consider a more general problem, called the \emph{Weighted Feedback Arc Set (WFAS)} problem.

\begin{definition}[Weighted Feedback Arc Set (WFAS)]
The input is a set of candidates $[m]$ and a weight matrix $\bw \in \R^{m \times m}$. The goal is to output $\pi$ that minimizes $\cost_{\bw}(\pi) := \sum_{u,v\in [n] \atop \pi(u) < \pi(v)} w_{vu}.$
\end{definition}

Note that the Kemeny rank aggregation problem is a special case of WFAS where $\bw = \bw^{\Pi}$ is as defined in~\eqref{eq:def-pairwise-matrix}.
We remark that we deliberately allow the weights to be negative and unbounded as this will be useful later on. Nevertheless, we also need a boundedness definition here:
\begin{definition}[Bounded WFAS Instance] \label{def:bounded}
An instance $\bw$ is said to be \emph{bounded} if $w_{uv} \geq 0$ and $w_{uv} + w_{vu} \in \left[\frac{1}{2}, 2\right]$. 
\end{definition}

While WFAS is hard to approximate on general instances~\cite{GuruswamiHMRC11,MathieuS07} gave a PTAS for bounded instances\footnote{In fact, the algorithm of \cite{MathieuS07} allow the numbers 1/2, 2 in the boundedness assumption to be changed to any constants. However, we do not use this in our work.}:

\begin{theorem}[\citealt{MathieuS07}] \label{thm:MS}
For every constant $\xi > 0$, there exists an efficient $(1 + \xi)$-approximation algorithm for WFAS on bounded instances.
\end{theorem}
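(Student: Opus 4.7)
The plan is to adapt the PTAS framework for \emph{dense} constraint satisfaction problems—originating with Arora, Karger, and Karpinski, and Frieze and Kannan—to the weighted feedback arc set setting, following the blueprint of Mathieu and Schudy. The boundedness hypothesis $w_{uv}+w_{vu}\in[1/2,2]$ is the crucial structural ingredient: it guarantees that the total pair-weight $\sum_{u,v}w_{uv}$ is $\Theta(m^{2})$, so the instance qualifies as a dense MIN-2CSP in which the variables are the $m$ positions and each pair of variables carries a constraint of bounded weight.

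The first step I would take is a case split on the size of $\opt$. In the ``large $\opt$'' regime where $\opt=\Omega(\xi m^{2})$, any additive error of $O(\xi m^{2})$ is automatically a multiplicative $(1+O(\xi))$ error, so it suffices to plug the instance into an additive PTAS for dense MIN-2CSP (via sampling-plus-greedy rounding of a local fractional optimum, in the style of Arora--Karger--Karpinski). In the ``small $\opt$'' regime where $\opt=o(\xi m^{2})$, the weight matrix is close to that of a transitive tournament; this near-transitivity must be exploited through a more delicate divide-and-conquer argument, because the additive-in-$m^{2}$ error is no longer multiplicatively small.

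The main algorithmic engine for the small-$\opt$ regime is recursive random sampling. Draw a uniform random subset $S\subseteq[m]$ of size $s=\poly(1/\xi)$, enumerate all $s!$ orderings $\sigma$ of $S$ by brute force, and for each $\sigma$ classify every $v\notin S$ into one of the $s{+}1$ gaps of $\sigma$ by choosing the position that minimizes the \emph{local cost}
$\sum_{u\in S}\bigl(w_{vu}\,\bone[\sigma^{-1}(u)<\mathrm{pos}(v)]+w_{uv}\,\bone[\sigma^{-1}(u)>\mathrm{pos}(v)]\bigr).$
I would then recurse on each gap independently, stopping when the sub-instance has $\poly(1/\xi)$ candidates and solving such base cases by exhaustive enumeration. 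The algorithm outputs the cheapest full ranking over all $s!$ guesses for $\sigma$. A Chernoff bound on the sampled local cost shows that, with high probability over $S$, the gap chosen for each $v$ is within additive error $\xi\cdot\opt/m$ of its position in a fixed optimal ranking; the few elements that are misclassified contribute only $\xi\cdot\opt$ in total because the boundedness assumption caps their per-pair cost by $O(1)$.

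The main obstacle, and the heart of the proof, is the telescoping error analysis that ties these ingredients together. One must show that the error introduced at each level of the recursion is a small fraction of $\opt$, rather than of the total weight, so that summing across the $O(\log m)$ levels still yields a $(1+\xi)$ approximation. In the large-$\opt$ regime this is automatic from density, but in the small-$\opt$ regime it requires arguing that near-transitive sub-instances remain near-transitive after the random partition, so the sampling lemma continues to place most elements correctly at deeper recursion levels. Balancing the sample size $s=s(\xi)$ against the placement accuracy, and verifying that the branching factor of the recursion depends only on $\xi$ so that the running time remains $m^{O(1)}$, is the delicate bookkeeping that closes the argument.
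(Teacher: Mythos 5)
This theorem is not proved in the paper at all: it is imported verbatim from \cite{MathieuS07} (Kenyon-Mathieu and Schudy), so there is no in-paper argument to compare against, and your proposal is an attempt to reconstruct that external result. Your large-$\opt$ branch is fine in outline: boundedness gives total pair weight $\Theta(m^2)$, so an additive $O(\xi m^2)$ error from a dense-CSP sampling scheme is multiplicatively acceptable once $\opt = \Omega(\xi m^2)$.

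The genuine gap is in the small-$\opt$ branch, which is where the entire difficulty of the Mathieu--Schudy theorem lives. Your key claim --- that with a uniform sample $S$ of size $s = \poly(1/\xi)$, a Chernoff bound places each $v \notin S$ within additive error $\xi \cdot \opt/m$ of its optimal position --- is quantitatively false: the sampled local cost estimates a quantity of magnitude up to $\Theta(m)$ with additive error $\Theta(m/\sqrt{s})$, which dwarfs $\xi\cdot\opt/m$ whenever $\opt = o(m^2)$, and no constant sample size can fix this. The follow-up claim that misclassified elements ``contribute only $\xi\cdot\opt$ in total because boundedness caps their per-pair cost by $O(1)$'' has the inequality pointing the wrong way: boundedness gives $w_{uv}+w_{vu} \geq 1/2$, so a vertex placed $\Theta(m/s)$ gaps from its optimal position can by itself incur cost $\Omega(m/s)$, and a constant fraction of vertices misplaced this way yields $\Omega(m^2/s)$ --- an additive guarantee, not a $(1+\xi)$ multiplicative one --- and such cross-gap errors cannot be repaired by recursing inside gaps. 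This is precisely why the actual algorithm of \cite{MathieuS07} does not place vertices against a single uniform constant-size sample: it bootstraps from a constant-factor approximation (e.g.\ KwikSort), refines it via sampled single-vertex moves and a divide-and-conquer guided by that approximation, and its correctness rests on an intricate charging argument that bounds every error term against $\opt$ rather than against $m^2$. Your proposal names this step (``delicate bookkeeping'') but does not supply it, so the small-$\opt$ case, and hence the theorem, remains unproven as written.
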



We also state another simple but important lemma below, which states that, if we give a good approximation algorithm on one weight matrix $\tbw$, it remains a good approximation on a ``nearby'' weight matrix $\bw$.

\begin{lemma}[\citealt{AlabiGKM22}, Theorem 1] \label{lem:err-to-apx}
For any $\tbw, \bw \in [0,1]^{m \times m}$ such that $\|\tbw - \bw\|_1 \leq e$, any $(1 + \xi)$-approximate solution to WFAS on $\tbw$ is also an $(1 + \xi, O(e))$-approximate solution to WFAS on $\bw$.
\end{lemma}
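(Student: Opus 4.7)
The plan is to show that the WFAS cost functional $\cost_{(\cdot)}(\sigma)$ is Lipschitz in the weight matrix with respect to the $\ell_1$ norm, and then to chain this ``perturbation bound'' with the assumed approximation guarantee on $\tbw$ and with the optimality of a true optimum on $\bw$.

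The first step I would carry out is a pointwise perturbation estimate: for every fixed permutation $\sigma \in \bbS_m$, the two costs differ by at most the $\ell_1$ distance between the weight matrices. This is immediate from linearity --- each entry $w_{vu}$ appears in $\cost_{\bw}(\sigma)$ with coefficient $\bone[\sigma(u) < \sigma(v)] \in \{0,1\}$, so expanding the difference and applying the triangle inequality yields
\begin{align*}
|\cost_{\bw}(\sigma) - \cost_{\tbw}(\sigma)| \leq \sum_{u,v} |w_{vu} - \tilde{w}_{vu}| = \|\bw - \tbw\|_1 \leq e.
\end{align*}

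With this estimate in hand, I would let $\pi$ be the hypothesized $(1+\xi)$-approximate solution on $\tbw$ and let $\sigma^*$ be an optimum of WFAS on $\bw$, then chain the bound twice:
\begin{align*}
\cost_{\bw}(\pi) &\leq \cost_{\tbw}(\pi) + e \\
&\leq (1+\xi)\cdot \cost_{\tbw}(\sigma^*) + e \\
&\leq (1+\xi)\cdot (\cost_{\bw}(\sigma^*) + e) + e \\
&= (1+\xi)\cdot \opt_{\bw} + (2+\xi)\cdot e.
\end{align*}
The second line uses the approximation hypothesis together with the fact that $\sigma^*$ is a feasible permutation (and hence an upper bound on the $\tbw$-optimum), and the third line re-applies the pointwise estimate to $\sigma^*$. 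Since $(2+\xi)\cdot e = O(e)$, this is exactly the claimed $(1+\xi, O(e))$-approximation guarantee.

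I do not anticipate any genuine obstacle: the lemma is a standard ``transfer of approximation under additive perturbation,'' using only linearity of $\cost$ in the weight matrix and the triangle inequality. The only mild care required is to apply the perturbation bound on \emph{both} sides of the optimality step --- once to relate $\pi$'s costs under $\tbw$ and $\bw$, and once to relate $\sigma^*$'s costs --- which is precisely what produces the factor $(2+\xi)$ absorbed into the $O(e)$ additive error.
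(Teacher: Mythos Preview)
Your proof is correct. The paper does not supply its own proof of this lemma---it simply cites it as Theorem~1 of \cite{AlabiGKM22}---and your argument is exactly the standard one: the pointwise Lipschitz bound $|\cost_{\bw}(\sigma)-\cost_{\tbw}(\sigma)|\le \|\bw-\tbw\|_1$ followed by the two-step chain through the $(1+\xi)$-approximation on $\tbw$ and the optimality of $\sigma^*$ on $\bw$, giving additive error $(2+\xi)e = O(e)$.
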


\paragraph{Two-Way Marginals.} Finally, we recall the two-way marginal problem which is defined as follows:
\begin{definition}[Two-Way Marginal Queries] \label{def:2-marginal}
The universe $\cX$ here is $\{0, 1\}^d$, the set of all binary vectors of length $d$.
For notational convenience, we view $x_i \in \{0, 1\}^d$ as a function $x_i: [d] \to \{0, 1\}$ instead. A two-way marginal query\footnote{Here we restrict two-way marginal queries only to those with attributes equal to ``1'', which is sufficient for our setting.} is indexed by a size-2 set $S = \{j_1, j_2\} \subseteq [d]$. On input dataset $\bX = (x_1, \dots, x_n)$, the query has value $q_{S}(\bX) = \frac{1}{n} \sum_{i \in [n]} \bone[x_i(j_1) = 1 \wedge x_i(j_2) = 1].$
The goal of an algorithm for two-way marginal is to output estimates $(\tq_S)_{S \in \binom{[d]}{2}}$ for the above queries.
\end{definition}

The two-way marginal problem is well-studied in DP literature (e.g.~\cite{BarakCDKMT07,BunUV14,DworkNT15,Nikolov24}). A crucial aspect we will use here is that when $n$ is small, there are efficient $\eps$-DP and $(\eps,\delta)$-DP algorithms \cite{Nikolov24,DworkNT15} that achieve significantly smaller errors compared to standard noise addition algorithms, i.e. the Laplace or Gaussian mechanisms.

As discussed in \Cref{subsec:overview}, our algorithm requires us to handle two cases based on whether $n$ is ``small'' or ``large''.

\subsection{Small $n$: Two-Way Marginal}
\Cref{lem:err-to-apx} gives us a fairly clear overall strategy towards obtaining good approximation algorithms for Kemeny rank aggregation: Design a differentially private algorithm that publishes the weight matrix $\bw^{\Pi}$ with small error. Indeed, this was the same strategy deployed in \cite{AlabiGKM22}, who use Laplace and Gaussian mechanisms to add noise to $\bw^{\Pi}$. To improve upon this in the small $n$ regime, we will instead employ DP two-way marginal algorithms. 
\ifaaaicameraready \else
As alluded to in \Cref{subsec:overview}, we will use bucketing for this purpose.
\fi

Let $B \in \N$ be a parameter to be specified. We partition $[m]$ into $B$ \emph{buckets} (indexed by $1, \dots, B$), where each bucket is an interval of size $\leq \lceil m/B \rceil$. For every $i \in [m]$, let $\iota(i)$ denote the index of the bucket it belongs to. 

The matrix $\bw^{\Pi}$ can now be decomposed into the sum of two parts: (i) comparison within the same bucket $\bs$, and (ii) comparison across buckets $\bt$.
More formally, we define the matrices $\bs, \bt \in \R^{m \times m}$ as follows:
\begin{align*}
s_{uv} &= \frac{1}{n} \sum_{i \in [n]} \bone[\iota(\pi_i(u)) = \iota(\pi_i(v)) \wedge \pi_i(u) < \pi_i(v)], \\
t_{uv} &= \frac{1}{n} \sum_{i \in [n]} \bone[\iota(\pi_i(u)) < \iota(\pi_i(v))].
\end{align*}
It is simple to see that $\bw^{\Pi} = \bs + \bt$. Thus, it suffices to give DP algorithms for approximately computing $\bs, \bt$.
For computing the estimate of $\bs$, we can simply use Laplace and vanilla Gaussian mechanism, because the sensitivity is now reduced as we only compare candidates in the same bucket.

For computing the estimate of $\bt$, we encode $t_{uv}$ as a two-way marginal query as follows. First, let $d = m \cdot B$ where we associate $[d]$ naturally with $[m] \times [B]$. Then, for each $i \in [n]$, we let $x_i \in \{0, 1\}^d$ be defined by $x_i(u,b)=\bone[\iota(\pi_i(u))=b]$ for all $u \in [m], b \in [B]$. The crucial observation here is that $\bt$ can now be written in terms of marginal queries: $$t_{uv} = \sum_{b_u,b_v\in [B] \atop b_u<b_v} q_{\{(u,b_u),(v,b_v)\}}(\bX),$$
As such, we can use the aforementioned DP two-way marginal query algorithms to estimate $\bt$. 
Our algorithm is shown in \Cref{alg:ptas_small}. The choice of distribution $\cD$ and subroutine $\maralg$ depends on whether we aim for $\eps$-DP or $(\eps,\delta)$-DP. The operation $\clip$, which ensures that the outputs lie within the interval $[0, 1]$\ifaaaicameraready \else, is formally defined in the appendix\fi. It is simple to analyze the error of \Cref{alg:ptas_small}. By optimizing the parameter $B$ to minimize the error and invoking \Cref{lem:err-to-apx}, we arrive at our PTAS for small $n$. 

\begin{algorithm}
    \caption{DP Approximation of $\bw$ for Small $n$}
    \label{alg:ptas_small}
\textbf{Input:} Dataset $\Pi = (\pi_1, \dots, \pi_n) \in (\bbS_m)^n$\\
\textbf{Output:} Estimate $\tbw^{\Pi}$ of $\bw^{\Pi}$ \\
\textbf{Parameters: } Distribution $\cD$, \# of buckets $B$, DP two-way marginal algo $\maralg$  
\begin{algorithmic}[1]
\FOR[Noising $s_{uv}$]{$u, v \in [m]$}
\STATE $\ts_{uv} \gets s_{uv} + r_{uv}$ where $r_{uv} \sim \cD$
\ENDFOR
\FOR{$i \in [n]$}
\STATE $x_i \in \{0, 1\}^{mB}$ be s.t. $x_i(u,b)=\bone[\iota(\pi_i(u))=b]$
\ENDFOR
\STATE $(\tq_S)_{S \in \binom{[d]}{2}} \gets$ output of $\maralg(x_1, \dots, x_n)$
\FOR[Estimate $t_{uv}$ from marginals]{$u, v \in [m]$}
\STATE $\ttt_{uv} = \sum_{b_u,b_v\in [B] \atop b_u<b_v} \tq_{\{(u,b_u),(v,b_v)\}}$
\ENDFOR
\RETURN $\clip(\ts_{uv} + \ttt_{uv})_{u,v \in [m]}$
\end{algorithmic}
\end{algorithm}

\subsection{Large $n$: Leveraging Unbiasedness}

As mentioned in \Cref{subsec:overview}, if we were able to use the noised weights without any clipping, then we would have been done. Unfortunately, this is not possible since the weights could become negative due to the noises. It turns out that, when $n$ is large, we can handle this as follows. 

Consider each pair $u, v \in [m]$. First, consider the ``balanced'' case where $w_{uv}$ and $w_{vu}$ are both not too small, say $w_{vu}, w_{vu} \in [\frac{1}{6}, \frac{5}{6}]$. In this case, when $n$ is sufficiently large, the noise added has such a small variance that with high probability all the values $w_{uv}, w_{vu}$ remain non-negative after noise addition. Hence, we can keep this case as is.

Next, consider the ``imbalanced'' case where either $w_{uv}$ or $w_{vu}$ is smaller than $1/6$. Assume wlog that $w_{vu} < \frac{1}{6}$. In this case, we replace $w_{vu}$ with zero and $w_{uv}$ with $w_{uv} - w_{vu}$. While this new instance is \emph{not} a Kemeny rank aggregation instance anymore, it still is a bounded WFAS instance, meaning that we can apply the PTAS from \Cref{thm:MS}. We can show that this also yields a $(1 + \xi)$-approximation for the original Kemeny rank aggregation instance.

To add noise in the imbalanced case, notice that $w_{uv} - w_{vu} \geq \frac{2}{3}$. Thus, we can add noise to $w_{uv} - w_{vu}$ and leave the other term zero. One can argue that, with high probability, all the values $w_{uv} - w_{vu}$ 
remains non-negative after noise addition. In the balanced case, we simply add noise to both $w_{uv}, w_{vu}$ as usual. As already mentioned above, since they are both at least $\frac{1}{6}$ beforehand, they remain non-negative after noise addition with high probability.

In our full algorithm--presented in \Cref{alg:ptas_large}, we need one additional step in order to determine which case each pair $u,v$ belongs to. This is because directly checking if e.g. $w_{uv} \in [\frac{1}{6}, \frac{5}{6}]$ violates DP. Nevertheless, this step turns out to be simple: We can add noises to all of $w_{uv}$ and classify them accordingly. As with the small $n$ case, the noise distribution $\cD$ here is either Laplace or Gaussian based on whether we desire $\eps$-DP or $(\eps,\delta)$-DP.

The crux of the proof is to show that, with high probability, the following two events hold: (i) the instance $\tbw'$ is bounded, and (ii) for \emph{all} permutations $\pi$, the cost difference $\cost_{\tbw}(\pi) - \cost_{\tbw'}(\pi)$ has small magnitude. (i) follows from standard concentration of the noise. As for (ii), we can write the cost difference as a \emph{summation} of at most $m^2$ (independent) noises, and we argue that this sum is highly concentrated. For Gaussian noise, this is again simple since the sum of noises is also a Gaussian. However, for Laplace noise, we need to use a result of \cite{ChanSS11} on the concentration of a sum of Laplace random variables.

\begin{algorithm}[h!]
    \caption{DP PTAS for Large $n$}
    \label{alg:ptas_large}
\textbf{Input:} Dataset $\Pi = (\pi_1, \dots, \pi_n) \in (\bbS_m)^n$\\
\textbf{Output:} Aggregated rank $\psi \in \bbS_m$ \\
\textbf{Parameters: } Noise distribution $\cD$
\begin{algorithmic}[1]
\STATE $\Zl \gets \emptyset, \Zs \gets \emptyset$
\FORALL[Pairwise Classification]{$1 \leq u < v \leq m$}
\STATE $w'_{uv} \gets w^{\Pi}_{uv} + \theta_{uv}$ where $\theta_{uv}\sim D$
\IF{$w'_{uv} > 5/6$}
\STATE Add $(u,v)$ to $\Zl$
\ELSIF{$w'_{uv} < 1/6$}
\STATE Add $(v,u)$ to $\Zl$
\ELSE[$\frac{1}{6} \le w'_{uv} \le \frac{5}{6}$]
\STATE Add $(u,v)$ to $\Zs$
\ENDIF
\ENDFOR
\FORALL[Noise Addition: Imbalanced]{$(u, v) \in \Zl$}
\STATE $\tw'_{uv} \gets w_{uv} - w_{vu} + r_{uv}$ where $r_{uv} \sim \cD$
\STATE $\tw'_{vu} \gets 0$
\ENDFOR
\FORALL[Noise Addition: Balanced]{$(u, v) \in \Zs$}
\STATE $\tw'_{uv} \gets w_{uv} + r_{uv}$ where $r_{uv} \sim \cD$
\STATE $\tw'_{vu} \gets 1 - \tw'_{uv}$
\ENDFOR
\IF[Approximation]{$\tbw'$ is a bounded instance}
\RETURN Output from running the algorithm from \Cref{thm:MS} on $\tbw'$
\ELSE
\RETURN random permutation from $\bbS_m$
\ENDIF
\end{algorithmic}
\end{algorithm}

\section{Conclusion and Discussion}

We study rank aggregation problems under DP, both in the central and local models. For footrule rank aggregation, we give a polynomial-time algorithm with nearly optimal errors in the central model, which translates to 2-approximation algorithms for Kemeny rank aggregation. We also improve the additive error in the PTAS for the problem compared to \cite{AlabiGKM22}. The obvious open question here is to close the gap in terms of the additive errors of the PTAS. In particular, for $\eps$-DP and $(\eps, \delta)$-DP, our upper bounds are $\frac{1}{\eps n} \cdot \tO(m^{27/7})$ and $\frac{1}{\eps n} \cdot \tO(m^{65/22})$, respectively, whereas the lower bounds from \cite{AlabiGKM22} are only $\frac{1}{\eps n} \cdot \Omega(m^{3})$ and $\frac{1}{\eps n} \cdot \Omega(m^{2.5})$, respectively. Another interesting research direction is to prove lower bounds for LDP; to the best of our knowledge, no previous work has pursued this direction.

\subsubsection{Acknowledgments.}
We are grateful to Jittat Fakcharoenphol for hosting us and for helpful discussions during the early stages of the project. The project was done while PV was working with JF at Kasetsart University, Thailand. PV is supported by Graduate Student Scholarship, Faculty of Engineering, Kasetsart University, Grant Number 65/06/COM/M.Eng and NCN grant number 2020/39/B/ST6/01641. VS is supported by JST
NEXUS Grant Number Y2024L0906031 and KAKENHI Grant JP25K00369. At the time of this project, QH was affiliated with the University of Tokyo and supported by KAKENHI Grant 20H05965, and by JST SPRING Grant Number JPMJSP2108. QH is currently part of BARC supported by the VILLUM Foundation grant 54451.

\bibliography{ref}

\ifaaaicameraready \else
\appendix

\section{Comparison with Related Works}

\label{app:comparison}

In addition to the exponential mechanism discussed in Section~\ref{sec:contribution}, \citet{HayEM17} also proposes two other mechanisms. However, these mechanisms do not come with theoretical guarantees on the parameters $\alpha$ or $\beta$.

It is discussed in~\cite{AlabiGKM22} that the expected cost of these two mechanisms includes a term that is independent of the number of rankings $n$. As a result, when $n$ is significantly larger than the number of candidates $m$, their expected cost can be substantially higher than that of our proposed methods and those in \cite{AlabiGKM22}.

\section{Additional Preliminaries and Tools}
\label{app:add-prelim}

We sometimes write $f \lesssim g$ as a shorthand for $f = O(g)$.

\subsection{Differential Privacy: Basics}

\subsubsection{Concentrated DP.}
We will use a variant of DP, called zero-concentrated differential privacy (zCDP), defined below, since it is more convenient for composition theorems and the Gaussian mechanism compared to approx-DP.
\begin{definition}[$\rho$-zCDP~\cite{BunS16}]
  Let $\rho > 0$. A randomized ranking aggregation $\mathcal{M}: \bbS_m \times \cX^n$ is $\rho$-zCDP if, for any $\alpha \in (1, \infty)$, neighboring datasets $\Pi, \Pi' \in \cX^n$, 
$$D_\alpha\left( \mathcal{M}(D) \,\|\, \mathcal{M}(D') \right) \leq \rho \cdot \alpha,$$
where \( D_\alpha(P \| Q) \) denotes the Rényi divergence of order \( \alpha \) between distributions \( P \) and \( Q \), defined as
$$D_\alpha(P \| Q) = \frac{1}{\alpha - 1} \log \mathbb{E}_{x \sim Q} \left[ \left( \frac{P(x)}{Q(x)} \right)^\alpha \right].$$
\end{definition}

Note that it is simple to transform $\rho$-zCDP guarantee to an approx-DP guarantee:
\begin{theorem}[\citealt{BunS16}] \label{thm:cdp-to-apx-dp}
Any $\rho$-zCDP mechanism satisfies $(\rho + \sqrt{\rho \log(1/\delta)}, \delta)$-DP.
\end{theorem}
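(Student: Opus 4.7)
The plan is to reduce $\rho$-zCDP to approximate DP via the standard two-step conversion: turn the Rényi divergence bound into a tail bound on the privacy loss random variable using Markov's inequality, then turn the tail bound into an $(\eps,\delta)$-indistinguishability guarantee via a thresholding argument, and finally optimize the Rényi order $\alpha$.

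First I would fix neighboring datasets $\Pi, \Pi'$ and write $P := \cM(\Pi)$, $Q := \cM(\Pi')$. Unpacking the definition of Rényi divergence, the $\rho$-zCDP hypothesis gives, for every $\alpha > 1$,
$$\E_{o \sim P}\!\left[\left(\tfrac{P(o)}{Q(o)}\right)^{\alpha - 1}\right] \;\le\; e^{(\alpha - 1)\rho\alpha}.$$
Applying Markov's inequality to the nonnegative random variable $e^{(\alpha - 1) L(o)}$, where $L(o) := \log(P(o)/Q(o))$ is the privacy loss under $o \sim P$, yields
$\Pr_{o \sim P}[L(o) > \eps] \le e^{(\alpha - 1)(\rho\alpha - \eps)}$ for every threshold $\eps$.

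Next I would convert this tail bound into an $(\eps, \delta)$-DP statement in the usual way. For any measurable output set $\cO$, decompose $\Pr[\cM(\Pi) \in \cO]$ according to whether $L(o) \le \eps$ or $L(o) > \eps$. On $\{L \le \eps\}$ we have $P(o) \le e^\eps Q(o)$ pointwise, so this piece contributes at most $e^\eps \Pr[\cM(\Pi') \in \cO]$. On $\{L > \eps\}$ the mass is bounded by the Markov tail above. Setting that tail equal to $\delta$ forces $\eps = \rho\alpha + \log(1/\delta)/(\alpha - 1)$, establishing $(\eps, \delta)$-DP for each fixed feasible $\alpha$.

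The last step is the optimization. I would minimize the function $\alpha \mapsto \rho\alpha + \log(1/\delta)/(\alpha - 1)$ over $\alpha > 1$; a one-line calculus exercise puts the minimum at $\alpha - 1 = \sqrt{\log(1/\delta)/\rho}$, producing a bound of the form $\rho + c\sqrt{\rho\log(1/\delta)}$, matching the statement. The main ``obstacle'' here is really just a technicality: verifying that the optimizer lies in the feasible region $\alpha > 1$ (which holds as long as $\delta < 1$, as assumed), and confirming that the thresholding decomposition extends to arbitrary measurable $\cO$ via standard measure-theoretic reasoning rather than a discrete sum.
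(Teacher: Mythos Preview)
The paper does not give its own proof of this statement; it is quoted directly from Bun and Steinke (2016). Your argument is exactly the standard proof given there, so there is nothing to compare. One minor caveat worth flagging: when you actually carry out the optimization in your last step, the minimum value is $\rho + 2\sqrt{\rho\log(1/\delta)}$, with constant $2$ rather than the $1$ appearing in the paper's statement---this factor-$2$ version is in fact what Bun--Steinke prove (their Proposition~1.3), so the discrepancy is a harmless transcription slip in the paper rather than a gap in your argument.
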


\subsubsection{Basic Mechanisms.}
Recall that, for $p \geq 1$, the $\ell_p$-norm of a vector $u \in \R^d$ is defined as $\left(\sum_{i \in [d]} u_i^p\right)^{1/p}$.

\begin{definition}[Sensitivity]
For a function $f: \cX^n \to \R^d$, its $\ell_p$-sensitivity, denoted by $\Delta_p(f)$, is defined as $\max_{\bX, \bX'} \|f(X) - f(X')\|_p$ where the maximum is over all pairs of neighboring datasets $\bX, \bX'$.   
\end{definition}

We next recall two basic mechanisms in DP: The Laplace and Gaussian mechanisms.
The Laplace distribution with parameter $b$, denoted by $\Lap(b)$, is supported on $\R^d$ where its density at $x$ is proportional to $\exp(-\|x\|_1 / b)$. 

\begin{theorem}[Laplace Mechanism~\cite{DworkMNS06}] \label{thm:laplace}
The Laplace mechanism for a function $f$ outputs $f(\bX) + Z$ where $Z \sim \Lap(\Delta_1(f) / \eps)$. It satisfies $\eps$-DP.
\end{theorem}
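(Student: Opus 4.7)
The plan is to establish $\eps$-DP for the Laplace mechanism via a pointwise density ratio argument, which is the standard route for any additive noise mechanism. Fix two neighboring datasets $\bX, \bX'$, let $b = \Delta_1(f)/\eps$, and let $p_{\bX}$ and $p_{\bX'}$ denote the densities of the output random variables $f(\bX)+Z$ and $f(\bX')+Z$ on $\R^d$. Using the definition of $\Lap(b)$ given in the excerpt (density proportional to $\exp(-\|x\|_1/b)$ on $\R^d$), I write
\[
\frac{p_{\bX}(y)}{p_{\bX'}(y)} \;=\; \exp\!\left(\frac{\|y-f(\bX')\|_1 - \|y-f(\bX)\|_1}{b}\right).
\]

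Next, I bound the exponent. By the reverse triangle inequality in $\ell_1$, $\|y-f(\bX')\|_1 - \|y-f(\bX)\|_1 \leq \|f(\bX)-f(\bX')\|_1$, which by the definition of $\ell_1$-sensitivity is at most $\Delta_1(f)$. Plugging in $b=\Delta_1(f)/\eps$ gives $p_{\bX}(y)/p_{\bX'}(y) \leq e^{\eps}$ for every $y \in \R^d$. The converse direction (lower bound by $e^{-\eps}$) follows by symmetry, though only the upper bound is needed for pure DP.

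Finally, to lift this pointwise inequality to the event-level DP guarantee required by \Cref{def:dp}, I integrate over an arbitrary measurable $\cO \subseteq \R^d$:
\[
\Pr[f(\bX)+Z \in \cO] \;=\; \int_\cO p_{\bX}(y)\,dy \;\leq\; e^{\eps}\!\int_\cO p_{\bX'}(y)\,dy \;=\; e^{\eps}\Pr[f(\bX')+Z \in \cO],
\]
which is exactly $\eps$-DP (with $\delta=0$).

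There is no substantial obstacle here: this is a textbook calculation, and the only slightly subtle point is making sure the product structure of the multivariate Laplace distribution (independent coordinates, each one-dimensional Laplace) is compatible with the $\ell_1$ norm appearing in the density — but this is precisely why the $\ell_1$-sensitivity (rather than $\ell_2$) is the correct notion for calibrating Laplace noise. All other steps are an application of the triangle inequality and monotone integration.
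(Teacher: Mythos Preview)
Your proof is correct and is exactly the standard textbook argument for the Laplace mechanism. The paper itself does not give a proof of this theorem at all; it simply states it as a known result with a citation to \cite{DworkMNS06}, so there is nothing to compare against.
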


The (spherical) Gaussian distribution with standard deviation $\sigma$, denoted by $\cN(0, \sigma^2 I_d)$, is supported on $\R^d$ where its density at $x$ is proportional to $\exp(-0.5 \|x\|_2^2 / \sigma^2)$.

\begin{theorem}[Gaussian Mechanism~\cite{BunS16}] \label{thm:gaussian}
The Gaussian mechanism for a function $f$ outputs $f(\bX) + Z$ where $Z \sim \cN(0, \sigma^2)$ where $\sigma = \Delta_2(f) / \sqrt{2\rho}$. It satisfies $\rho$-zCDP.
\end{theorem}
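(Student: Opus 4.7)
The plan is to verify the definition of $\rho$-zCDP directly by computing the Rényi divergence between the output distributions on neighboring datasets. Fix neighboring datasets $\Pi, \Pi'$ and let $\mu = f(\Pi)$, $\mu' = f(\Pi')$. Then $\mathcal{M}(\Pi) \sim \mathcal{N}(\mu, \sigma^2 I_d)$ and $\mathcal{M}(\Pi') \sim \mathcal{N}(\mu', \sigma^2 I_d)$. The goal is to show $D_\alpha(\mathcal{M}(\Pi) \| \mathcal{M}(\Pi')) \le \rho \alpha$ for every $\alpha \in (1, \infty)$.

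First I would invoke the standard closed-form expression for the Rényi divergence between two spherical Gaussians with the same covariance, namely
$$ D_\alpha\bigl(\mathcal{N}(\mu, \sigma^2 I_d) \,\|\, \mathcal{N}(\mu', \sigma^2 I_d)\bigr) = \frac{\alpha \, \|\mu - \mu'\|_2^2}{2\sigma^2}. $$
If I wanted to derive this from scratch (rather than cite it), I would write out the density ratio $(p/q)^\alpha$, recognize that the integrand is a Gaussian kernel in $x$ after completing the square, and evaluate the resulting Gaussian integral; the cross terms cancel because the two distributions share the same covariance, leaving exactly the displayed expression.

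Next, by the definition of $\ell_2$-sensitivity, $\|\mu - \mu'\|_2 = \|f(\Pi) - f(\Pi')\|_2 \le \Delta_2(f)$. Substituting the chosen $\sigma = \Delta_2(f)/\sqrt{2\rho}$ yields
$$ D_\alpha\bigl(\mathcal{M}(\Pi) \,\|\, \mathcal{M}(\Pi')\bigr) \;\le\; \frac{\alpha \, \Delta_2(f)^2}{2\sigma^2} \;=\; \frac{\alpha \, \Delta_2(f)^2}{2 \cdot \Delta_2(f)^2/(2\rho)} \;=\; \rho \alpha, $$
which is exactly the $\rho$-zCDP guarantee. Since $\Pi, \Pi'$ and $\alpha$ were arbitrary, this completes the proof.

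The only subtlety is the Rényi-divergence calculation for two Gaussians; the rest is an algebraic substitution. If one prefers, this subtlety can be entirely bypassed by citing the well-known Gaussian Rényi divergence identity (e.g., from \citet{BunS16}, who prove exactly this statement as the prototypical example motivating the definition of zCDP), reducing the argument to little more than plugging in the sensitivity bound and the chosen noise scale.
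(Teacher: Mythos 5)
Your proof is correct and is exactly the standard argument: the paper does not prove this statement but simply cites it from \citet{BunS16}, and your computation (the closed-form R\'enyi divergence $\frac{\alpha\|\mu-\mu'\|_2^2}{2\sigma^2}$ between equal-covariance Gaussians, followed by the sensitivity bound and the substitution $\sigma = \Delta_2(f)/\sqrt{2\rho}$) is precisely the proof given in that reference. Nothing further is needed.
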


\subsubsection{Composition Theorems.}
We state the composition theorems for pure-DP and zCDP in the following two theorems:
\begin{theorem}[Composition for Pure DP] \label{thm:comp-pureDP}
Let $\mathcal{M}_1, \ldots, \mathcal{M}_k$ be randomized algorithms where each $\mathcal{M}_i$ satisfies $\varepsilon_i$-DP. Then, the sequence of mechanisms $(\mathcal{M}_1, \ldots, \mathcal{M}_k)$ satisfies $\left(\sum_{i = 1}^k \varepsilon_i\right)$-differential privacy.
\end{theorem}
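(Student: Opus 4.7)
The plan is to reduce the theorem to the single-mechanism definition of pure-DP by comparing the joint output distributions on neighboring datasets pointwise. Fix neighboring datasets $\bX, \bX'$ and any measurable set $\cO \subseteq \cZ_1 \times \cdots \times \cZ_k$ of output tuples. I would write the joint density (or probability mass) at a point $(o_1, \ldots, o_k)$ as a product of the individual densities. In the non-adaptive case the $\cM_i$ use independent random coins, so this factorization is immediate; in the adaptive case one writes the joint as a product of conditionals $\Pr[\cM_i(\bX) = o_i \mid o_1, \ldots, o_{i-1}]$, and uses the fact that, once the prefix $(o_1, \ldots, o_{i-1})$ is fixed, $\cM_i(\cdot \mid o_1, \ldots, o_{i-1})$ is still an $\eps_i$-DP mechanism on its input.

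The key step is then a pointwise ratio bound: by the $\eps_i$-DP guarantee of each $\cM_i$,
\[
\frac{\Pr[\cM_i(\bX) = o_i \mid o_1, \ldots, o_{i-1}]}{\Pr[\cM_i(\bX') = o_i \mid o_1, \ldots, o_{i-1}]} \leq e^{\eps_i}.
\]
Multiplying over $i \in [k]$ yields a ratio bound of $e^{\sum_i \eps_i}$ between the joint probabilities on $\bX$ and $\bX'$ at every output tuple. Integrating (or summing) over $(o_1, \ldots, o_k) \in \cO$ gives $\Pr[(\cM_1, \ldots, \cM_k)(\bX) \in \cO] \leq e^{\sum_i \eps_i} \cdot \Pr[(\cM_1, \ldots, \cM_k)(\bX') \in \cO]$, which is exactly the $(\sum_i \eps_i)$-DP condition (with $\delta = 0$). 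One can equivalently package this argument as a straightforward induction on $k$: the base case $k = 1$ is the definition of $\eps_1$-DP, and the induction step glues one more mechanism onto a $\left(\sum_{i<k} \eps_i\right)$-DP mechanism using the same factorization.

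The only subtle point, and thus the main (mild) obstacle, is to justify the factorization in measure-theoretic language so that the proof works for continuous outputs as well as discrete ones; this is handled by working with Radon–Nikodym densities of $\cM_i(\bX)$ versus $\cM_i(\bX')$ and invoking Fubini to integrate the pointwise ratio bound over the product space. Everything else is routine, and no further ingredient beyond \Cref{def:dp} and independence of the mechanisms' internal randomness is needed.
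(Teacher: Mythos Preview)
The paper does not give its own proof of this statement; it is recorded in the additional preliminaries as a standard fact about differential privacy without any argument. Your proposal is the standard proof of basic composition for pure DP via a pointwise likelihood-ratio bound and induction, and it is correct; there is nothing in the paper to compare it against beyond the bare statement.
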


\begin{theorem}[Composition for zCDP] \label{thm:comp-CDP}
Let $\mathcal{M}_1, \ldots, \mathcal{M}_k$ be mechanisms such that each $\mathcal{M}_i$ satisfies $\rho_i$-zCDP. Then the composition $(\mathcal{M}_1, \ldots, \mathcal{M}_k)$ satisfies $\left( \sum_{i=1}^k \rho_i\right)$-zCDP.
\end{theorem}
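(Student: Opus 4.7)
The plan is to reduce the composition guarantee to the additivity/chain-rule property of Rényi divergence. Fix neighboring datasets $\Pi, \Pi'$ and any order $\alpha > 1$, and let $P$ and $Q$ denote the output distributions of the composition $(\mathcal{M}_1, \ldots, \mathcal{M}_k)$ on $\Pi$ and $\Pi'$ respectively. The goal is to show $D_\alpha(P \| Q) \leq \alpha \sum_{i=1}^k \rho_i$, which is exactly $\left(\sum_i \rho_i\right)$-zCDP.

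First I would handle the simpler case of independent (non-adaptive) composition, where each $\mathcal{M}_i$ draws its own randomness independently of all others. Then both $P$ and $Q$ factor as products over the $k$ coordinates, and I would invoke the standard additivity identity $D_\alpha(P_1 \otimes \cdots \otimes P_k \| Q_1 \otimes \cdots \otimes Q_k) = \sum_i D_\alpha(P_i \| Q_i)$, which follows by direct expansion of the integral defining Rényi divergence and Fubini. Applying the per-mechanism zCDP bound $D_\alpha(\mathcal{M}_i(\Pi) \| \mathcal{M}_i(\Pi')) \leq \rho_i \alpha$ to each summand then yields the target inequality.

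For the adaptive case, where $\mathcal{M}_i$ may depend on the outputs $Y_1, \ldots, Y_{i-1}$ of the earlier mechanisms, the plan is induction on $k$ using a chain rule for Rényi divergence. Writing $P(y_1, \ldots, y_k) = P(y_{<k}) \cdot P(y_k \mid y_{<k})$ and similarly for $Q$, I would establish
$$D_\alpha(P \| Q) \leq D_\alpha(P_{<k} \| Q_{<k}) + \sup_{y_{<k}} D_\alpha\bigl(P(\cdot \mid y_{<k}) \,\|\, Q(\cdot \mid y_{<k})\bigr).$$
The second term is at most $\rho_k \alpha$ because, for every fixed history $y_{<k}$, the mechanism $\mathcal{M}_k$ is $\rho_k$-zCDP as a function of its input dataset; the first term is handled by the inductive hypothesis applied to the first $k-1$ (adaptive) mechanisms, completing the induction.

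The main obstacle I anticipate is the chain-rule inequality above: unlike KL divergence, Rényi divergence does not admit a clean averaging chain rule, and one must pass to the essential supremum over histories. I would prove it by factoring the density ratio $(dP/dQ)^\alpha$ as $(dP_{<k}/dQ_{<k})^\alpha \cdot (dP(\cdot\mid y_{<k})/dQ(\cdot\mid y_{<k}))^\alpha$ and pulling out the supremum of the second factor before integrating out $y_k$, then $y_{<k}$. Once this chain rule is in hand, the rest of the argument is routine bookkeeping, and the final bound $\alpha \sum_i \rho_i$ matches the definition of $\left(\sum_i \rho_i\right)$-zCDP exactly.
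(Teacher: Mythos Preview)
Your proposal is correct and follows the standard argument (originally due to Bun and Steinke), but there is nothing to compare against: the paper does not supply its own proof of this statement. It is listed in the preliminaries as a known tool, attributed implicitly to the zCDP literature, and is simply invoked later when analyzing the privacy of the PTAS algorithms. So you have written a valid proof where the paper offers none.

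One minor remark on scope: the paper's phrasing (``the composition $(\mathcal{M}_1,\ldots,\mathcal{M}_k)$'') is ambiguous about adaptivity, and in its actual applications only the non-adaptive case is ever needed (e.g., running the Gaussian mechanism on $\bs$ and the two-way marginal algorithm on $\bt$ in parallel). Your independent-composition paragraph already suffices for every use in the paper; the adaptive chain-rule argument you sketch is correct and more general, but strictly speaking goes beyond what the paper requires.
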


\subsection{DP Algorithms for Two-way Marginals}
Recall the two-way marginal queries defined in \Cref{def:2-marginal}. Below we discuss known $\eps$-DP and $\rho$-CDP algorithms for releasing two-way marginals, which will be used in our algorithm (\Cref{alg:ptas_small}).

\subsubsection{Pure-DP.} For $\eps$-DP, we use the algorithm of \cite{Nikolov24}, whose guarantee can be stated as follows.

\begin{theorem}
\label{cor:pure-2way}
For any $d\ge 1$, there exists a polynomial-time $\varepsilon$-DP algorithm given $\bX$, outputs $(\tq_S)_{S \in \binom{[d]}{2}}$ such that
\[
    \E\left[\sum_{S \in \binom{[d]}{2}} |\tq_S - q_S(\bX)|\right]\le O\left(\sqrt{\frac{d^5}{n\varepsilon}}\right).
\]
\end{theorem}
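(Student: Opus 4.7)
The plan is to apply the projection mechanism for pure-DP linear query release, specialized to the structure of two-way marginals. First, I would recast the problem as linear query release: embed each data point $x_i \in \{0,1\}^d$ into a marginal vector $\phi(x_i) \in \{0,1\}^{\binom{d}{2}}$ defined by $\phi(x)_S := \prod_{j \in S} x(j)$, so that each query $q_S(\bX) = \frac{1}{n}\sum_i \phi(x_i)_S$ becomes a coordinate of the empirical mean $\bar\phi := \frac{1}{n}\sum_i \phi(x_i)$, which must lie in the convex hull $K := \mathrm{conv}\{\phi(x) : x \in \{0,1\}^d\} \subset [0,1]^{\binom{d}{2}}$.

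Second, I would apply a noise-addition-plus-projection procedure: release $\hat q := \bar\phi + \eta$ using Laplace noise (or a suitably calibrated K-norm mechanism tailored to pure-DP), then output $\tilde q := \argmin_{p \in K} \|p - \hat q\|_2$. The noise scale is calibrated to the $\ell_1$-sensitivity $d^2/n$ of $\bar\phi$, yielding $\varepsilon$-DP, and the projection preserves this by post-processing. The standard projection-mechanism error analysis (Nikolov--Talwar--Zhang and refinements) bounds the $\ell_2$-error $\|\tilde q - \bar\phi\|_2$ by using $\bar\phi \in K$ and the convexity of $K$ to get $\|\tilde q - \bar\phi\|_2^2 \le 2\langle \eta, \tilde q - \bar\phi\rangle$, whose expectation is controlled by an appropriate mean-width-style quantity of $K - K$ paired with the noise distribution. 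A final Cauchy--Schwarz step over the $\binom{d}{2}$ coordinates converts this to an $\ell_1$-error bound.

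The crux of the proof—and the main obstacle—is establishing a sharp bound on the geometric complexity of $K$ so that the resulting mechanism yields the stated $\tilde O(\sqrt{d^5/(n\varepsilon)})$ error. The key structural fact is that each coordinate $\phi(x)_S$ depends on only two bits of $x$, so the set system of 2-way marginals has small hereditary discrepancy; this in turn should translate, via the Matou{\v s}ek--Nikolov--Talwar framework connecting discrepancy to DP sample complexity for linear queries, into the geometric bound needed by the projection step. A naive estimate using only $\|\phi(x)\|_\infty \le 1$ together with $|K| \le 2^d$ is too lossy by a polynomial factor in $d$, so the entire improvement over the vanilla Laplace mechanism hinges on quantifying this low-arity structural saving carefully—this is the technical heart of Nikolov's result.
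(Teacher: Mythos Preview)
Your proposal is correct in spirit but takes a substantially more elaborate route than the paper. You sketch the projection-mechanism architecture that underlies Nikolov's pure-DP result---embed into the marginal polytope, add noise, project, and control the error via a mean-width/discrepancy bound on $K$---and you correctly identify that the key technical step (the sharp geometric bound exploiting the low-arity structure of $2$-way marginals) is exactly ``the technical heart of Nikolov's result.''

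The paper, by contrast, does not redevelop any of this machinery. It invokes Nikolov's sample-complexity theorem for $2$-way marginals as a black box, which gives $\mathrm{sc}(\mathcal{M},Q,\alpha) \lesssim \min\{d^{1.5}/(\alpha\varepsilon),\, d/(\alpha^2\varepsilon)\}$ for the normalized $\ell_2$-error. Inverting the second term at sample size $n$ yields $\mathrm{err}(\mathcal{M},Q,n) \le O(\sqrt{d/(n\varepsilon)})$, and a single Cauchy--Schwarz over the $\binom{d}{2}$ coordinates converts this to the stated $\ell_1$ bound $O(\sqrt{d^5/(n\varepsilon)})$. Since your sketch ultimately defers the crucial geometric estimate to Nikolov anyway, the paper's black-box citation plus two-line conversion is both cleaner and equally rigorous; your version has the advantage of being more self-contained and explanatory about \emph{why} the bound holds, but it is not needed for the proof as stated.
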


Since \Cref{cor:pure-2way} is not stated in the same way as in \cite{Nikolov24}, we now explain how to derive the theorem of above form from that paper. To do so, we need to recall a few additional terminologies.

For a dataset $\bX\in\mathcal{X}^n$ and query workloads $Q=\{q_1,...,q_{\binom{d}{2}}\}$ of all-pairs 2-way marginal query, we write $Q(\bX)$ for the vector $(q_1(\bX),...,q_{\binom{d}{2}}(\bX))$.

Define the ($\ell_2$-)error of a mechanism $\mathcal{M}$ on a workload $Q$ and datasets of size $n$ as \[
    \mathrm{err}(\mathcal{M},Q,n)=\max_{\bX\in\mathcal{X}^n} \mathbb{E}\left[\frac{1}{\sqrt{\binom{d}{2}}}\|\mathcal{M}(\bX)-Q(\bX)\|_2\right]
\]
and the sample complexity of the mechanism on queries $Q$  as \[
    \mathrm{sc}(\mathcal{M},Q,\alpha)=\inf\{n: \mathrm{err}(\mathcal{M},Q,n)\le \alpha\}.
\]

\begin{theorem}[Theorem 23 in \cite{Nikolov24}] \label{thm:nikolov-sc}
There exists a polynomial-time $\varepsilon$-DP algorithm $\mathcal{M}$ whose sample complexity for the workload of 2-way marginals is \[
    \mathrm{sc}(\mathcal{M},Q,\alpha) \lesssim \min\left\{\frac{d^{1.5}}{\alpha \varepsilon},\frac{d}{\alpha^2 \varepsilon}\right\}.
\]
\end{theorem}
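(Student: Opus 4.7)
The plan is to establish the two bounds in the minimum separately, via two different mechanisms, and then return the output of the one predicted to be better for the given $(d, \alpha, \varepsilon)$; standard composition (or simply running each on half of the budget and union-bounding) glues them into a single algorithm matching the minimum up to constants.

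\textbf{First bound, $d^{1.5}/(\alpha\varepsilon)$, via a projection mechanism.} View the 2-way marginal workload as a single linear map $A : \Delta(\{0,1\}^d) \to \mathbb{R}^{\binom{d}{2}}$ taking an empirical distribution to the vector of its marginal values. The plan is: (i)~release a privatized answer $\hat{y} = Ap + Z$, where $Z$ is a Laplace vector calibrated to the $\ell_1$-sensitivity of $A$ (which is $O(d^2/n)$), giving $\varepsilon$-DP by \Cref{thm:laplace}; (ii)~post-process by projecting $\hat{y}$ onto the image $A(\Delta(\{0,1\}^d))$, or onto a tractable relaxation thereof. The resulting $\ell_2$-error is controlled by a Gaussian-width / factorization-norm quantity of the workload matrix. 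For the 2-way marginal workload, this width admits an $O(d)$ bound using the degree-$\le 2$ Fourier decomposition on the hypercube, so the expected mean-square per-query error is of order $d^{1.5}/(n\varepsilon)$ up to logs; setting this to $\alpha$ gives the stated sample complexity.

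\textbf{Second bound, $d/(\alpha^2 \varepsilon)$, via multiplicative weights.} Here I would invoke a MWEM-style algorithm: maintain a synthetic distribution $q_t$ over $\{0,1\}^d$ initialized to uniform; in each of $T$ rounds, use the exponential mechanism to privately pick a query with large residual error $|q_S(\bX) - \langle q_t, A_S\rangle|$, measure that query with Laplace noise, and update $q_t$ by the MW rule along that query. Standard MWEM analysis shows that $T = O(\log|\mathcal{X}|/\alpha^2) = O(d/\alpha^2)$ rounds suffice to drive the uniform workload error below $\alpha$; splitting $\varepsilon$ across the $T$ rounds via \Cref{thm:comp-pureDP} yields the bound $n \gtrsim d/(\alpha^2 \varepsilon)$.

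\textbf{Main obstacles.} (1)~Getting the Gaussian-width factor down from the naive $O(d^{3/2})$ to $O(d)$: this is what pins the exponent of $d$ at $1.5$ rather than $2$, and it relies on the specific factorization structure of the 2-way moment map (e.g. $A = B^\top B$ with $B$ having operator norm $O(\sqrt{d})$). (2)~Polynomial-time realization: both mechanisms are defined over $|\mathcal{X}| = 2^d$ states. For the projection mechanism, I would replace the intractable convex hull $A(\Delta(\{0,1\}^d))$ with a tractable relaxation --- the degree-$2$ moment polytope / an SDP (Sherali--Adams level~$2$ type) --- and argue that this relaxation inflates the error only by a constant on 2-way marginals. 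For MWEM, both the exponential-mechanism step and the MW renormalization nominally sum over $\{0,1\}^d$, but they can be implemented in polynomial time because after MW updates against 2-way marginal queries the distribution $q_t$ lies in a polynomial-sized exponential family whose partition function and low-order marginals can be computed via convex programming. Verifying both relaxations preserve the claimed sample complexity is the main technical step.
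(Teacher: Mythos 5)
First, note that the paper does not prove this statement at all: it is imported verbatim as Theorem~23 of \cite{Nikolov24} and only post-processed (into \Cref{cor:pure-2way}). So your proposal is effectively an attempt to reprove the cited result, and as sketched it does not succeed. The sample complexity here is defined with respect to the root-mean-square error $\frac{1}{\sqrt{\binom{d}{2}}}\|\cM(\bX)-Q(\bX)\|_2$, and a ``noise-then-project'' mechanism has error scaling like the \emph{square root} of (noise scale $\times$ width). Your own accounting gives mean-square per-query error of order $d^{1.5}/(n\varepsilon)$, i.e.\ RMS error of order $\sqrt{d^{1.5}/(n\varepsilon)}$, which corresponds to $n \approx d^{1.5}/(\alpha^2\varepsilon)$ --- a bound that is dominated by the second branch and never produces the $d^{1.5}/(\alpha\varepsilon)$ term; setting the MSE equal to $\alpha$ conflates $\alpha$ with $\alpha^2$. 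To get the branch that is linear in $1/\alpha$ one needs a mechanism whose \emph{per-query noise} is itself $\tilde{O}(d^{1.5}/(\varepsilon n))$, i.e.\ a factorization/noise-addition mechanism exploiting an $\ell_1$-type factorization of the 2-way-marginal workload with norm $\tilde{O}(d^{1.5})$ (note the naive $\ell_1$-sensitivity is $d^2/n$, so Laplace alone gives only $d^2/(\alpha\varepsilon)$). Exhibiting such a factorization, efficiently, is precisely the substantive content of Nikolov's theorem and is absent from your sketch.

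The second branch has two further gaps. Under pure DP you only have basic composition (\Cref{thm:comp-pureDP}), so MWEM with $T = O(d/\alpha^2)$ rounds, each requiring error $\lesssim \alpha$ at budget $\varepsilon/T$, forces $n \gtrsim T\log\lvert Q\rvert/(\alpha\varepsilon) \approx d/(\alpha^3\varepsilon)$, not $d/(\alpha^2\varepsilon)$; the $1/\alpha^2$ dependence cannot be recovered by budget splitting alone (advanced composition is unavailable at $\delta=0$). Moreover, the claimed polynomial-time implementation of MWEM is unsupported: after updates against 2-way marginal queries the maintained distribution is a pairwise (Ising-type) exponential family on $\{0,1\}^d$, whose partition function and marginals are \#P-hard to compute in general, and an efficient private procedure producing a synthetic distribution accurate on all 2-way marginals would collide with the known cryptographic hardness barrier for private synthetic data (Ullman--Vadhan). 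Similarly, the assertion that projecting onto a level-2 Sherali--Adams relaxation ``inflates the error only by a constant'' is exactly the kind of claim that needs proof, since the error analysis depends on the width of the relaxed body, not of the true marginal polytope. In short, both branches of the minimum are left unestablished; the theorem should be treated as what the paper treats it as --- a black-box import from \cite{Nikolov24} --- unless these steps are carried out.
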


We can now prove \Cref{cor:pure-2way}.

\begin{proof}[Proof of \Cref{cor:pure-2way}]
From \Cref{thm:nikolov-sc}, we have $\mathrm{sc}\left(\mathcal{M},Q,\Theta\left(\sqrt{\frac{d}{n\varepsilon}}\right)\right)\leq n$. Thus, we can conclude that \[
    \mathrm{err}(\mathcal{M},Q,n) \le O\left(\sqrt{\frac{d}{n\varepsilon}}\right).
\]
Note that
\begin{align*}
\sum_{S \in \binom{[d]}{2}} |\tq_S - q_S(\bX)| &= \|\mathcal{M}(\bX)-Q(\bX)\|_1 \\
&\leq \sqrt{\binom{d}{2}}  \|\mathcal{M}(\bX)-Q(\bX)\|_2.
\end{align*}
Thus, the expectation of the LHS is at most $\binom{d}{2}$ times $\mathrm{err}(\mathcal{M},Q,n)$.
\end{proof}

\subsubsection{zCDP.}
For zCDP, we use an algorithm of \cite{DworkNT15} which has a more specific error guarantee. In particular, we can specify a distribution $\cP$ on the queries, and the error is only measured with respect to this distribution, as stated more precisely below. This will be helpful for our algorithm (\Cref{alg:ptas_small}) since we will only be focusing on subsets of two-way marginal queries.

\begin{theorem}[\citealt{DworkNT15}] \label{thm:2-way-marginal}
For any distribution $\cP$ on $\binom{[d]}{2}$, there exists a polynomial-time $\rho$-zCDP\footnote{Although \cite{DworkNT15} only focuses on $(\eps, \delta)$-DP, they only use the Gaussian mechanism. Thus, their algorithm naturally satisfies zCDP too.} algorithm $\cA$ that, given $\bX$, outputs $(\tq_S)_{S \in \binom{[d]}{2}}$ such that
\begin{align*}
\E_{\cA} \E_{S \sim \cP}[|\tq_S - q_S(\bX)|] \leq O\left(\frac{d^{1/4}}{\sqrt{n} \cdot \rho^{1/4}}\right).
\end{align*}
\end{theorem}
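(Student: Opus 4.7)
The plan is to instantiate the projection mechanism of \cite{DworkNT15,NikolovTZ13} tuned to the distribution $\cP$. First, I would embed each data point via the 2-way marginal feature map $\phi:\{0,1\}^d \to \{0,1\}^{\binom{[d]}{2}}$, defined by $\phi(x)_S = \bone[x(j_1) = 1 \wedge x(j_2) = 1]$ for $S=\{j_1,j_2\}$, so that $q_S(\bX) = \frac{1}{n}\sum_i \phi(x_i)_S$. To tilt the mechanism toward $\cP$-weighted accuracy, I would then reweight: define $\phi'(x)_S := \sqrt{\cP(S)} \cdot \phi(x)_S$ and the reweighted queries $q'_S(\bX) := \sqrt{\cP(S)} \cdot q_S(\bX)$. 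The key property is that $\|\phi'(x)\|_2^2 = \sum_S \cP(S)\phi(x)_S^2 \leq \sum_S \cP(S) = 1$ for every $x$, so the $\ell_2$-sensitivity of $q'$ is at most $2/n$.

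Next, I would apply the Gaussian mechanism in the reweighted space: release $\hat{\mu}' = q'(\bX) + Z$ with $Z \sim \cN(0, \sigma^2 I)$ and $\sigma = O(1/(n\sqrt{\rho}))$; by \Cref{thm:gaussian} this is $\rho$-zCDP. I would then project $\hat{\mu}'$ in the $\ell_2$ norm onto the convex body $K' := \operatorname{conv}\{\phi'(x) : x \in \{0,1\}^d\}$ to obtain $\tilde{\mu}'$ (post-processing preserves $\rho$-zCDP), and output $\tq_S := \tilde{\mu}'_S / \sqrt{\cP(S)}$ (declaring $\tq_S := 0$ when $\cP(S) = 0$).

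For the error analysis, since $q'(\bX) \in K'$ and $\tilde{\mu}'$ is the $\ell_2$-projection of $\hat{\mu}'$ onto $K'$, the first-order optimality condition $\langle \hat{\mu}' - \tilde{\mu}', q'(\bX) - \tilde{\mu}'\rangle \leq 0$ rearranges to
\begin{align*}
\|\tilde{\mu}' - q'(\bX)\|_2^2 \leq 2 \langle Z,\, \tilde{\mu}' - q'(\bX)\rangle \leq 2 \sup_{y \in K'} \langle Z,\, y - q'(\bX)\rangle.
\end{align*}
Taking expectations and using $\E[Z] = 0$ together with the Gaussian maximal inequality over the at most $2^d$ vertices of $K'$ (each of $\ell_2$-norm $\leq 1$) yields $\E[\sup_{y \in K'} \langle Z, y\rangle] \leq O(\sigma\sqrt{d})$, and hence
\begin{align*}
\E\bigl[\|\tilde{\mu}' - q'(\bX)\|_2^2\bigr] \leq O\!\left(\tfrac{\sqrt{d}}{n\sqrt{\rho}}\right).
\end{align*}
By construction $\sum_S \cP(S)(\tq_S - q_S(\bX))^2 = \|\tilde{\mu}' - q'(\bX)\|_2^2$, so two applications of Cauchy--Schwarz and Jensen give $\E_{\cA} \E_{S \sim \cP}[|\tq_S - q_S(\bX)|] \leq \sqrt{\E[\|\tilde{\mu}' - q'(\bX)\|_2^2]} \leq O(d^{1/4}/(\sqrt{n}\,\rho^{1/4}))$, which is the claimed bound.

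The main technical obstacle will be making the projection run in polynomial time: $K'$ is the convex hull of $2^d$ vertices, so exact projection is intractable in general. I would address this by following the strategy of \cite{DworkNT15}: project instead onto a polynomially representable relaxation $\widetilde{K}' \supseteq K'$ (e.g., an SDP lifting of the 2-way marginal polytope, rescaled entrywise by $\sqrt{\cP(S)}$) whose Gaussian mean width is within a constant factor of that of $K'$. Post-processing preserves $\rho$-zCDP, and each inequality above goes through on $\widetilde{K}'$ with only a constant-factor loss in the final error guarantee.
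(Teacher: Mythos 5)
This statement is imported in the paper directly from \citet{DworkNT15} (the paper gives no proof of its own beyond a footnote observing that the Gaussian-mechanism-based algorithm is automatically $\rho$-zCDP), and your proposal is essentially a faithful reconstruction of that source's projection-mechanism argument: reweight the queries by $\sqrt{\cP(S)}$, add Gaussian noise calibrated to the $\ell_2$-sensitivity $O(1/n)$, project onto (a relaxation of) the marginal polytope, and control the error by the Gaussian mean width, then convert to the $\cP$-averaged $\ell_1$ guarantee via Cauchy--Schwarz and Jensen. The one step you assert rather than prove---that an efficiently representable relaxation has mean width within a constant factor of $K'$ even after the entrywise $\sqrt{\cP(S)}$ rescaling---is exactly what \citet{DworkNT15} establish via their Grothendieck-based SDP relaxation (the support-function domination holds in every direction, hence is unaffected by the rescaling), so your argument is correct and matches the cited source's approach.
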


\subsection{From High-Probability to Expected Guarantees}

Sometimes, it will be more convenient to analyze a high-probability additive error guarantee of an algorithm rather than an expectation. Below, we give a lemma that translates such a high-probability guarantee to an expectation, provided that we have another algorithm with the same approximation ratio but possibly worse additive errors.

For simplicity, we will only focus on Kemeny rank aggregation here. Let $\opt$ denote the optimum, i.e. $\opt := \min_{\pi \in \bbS_m} \cost_{\bw}(\pi)$, and let $\pi^*$ denote an optimal ranking, i.e. $\pi^* = \argmin_{\pi \in \bbS_m} \cost_{\bw}(\pi)$.
We say that a solution $\pi$ is an $(\alpha, \beta)$-approximation if $\cost_{\bw}(\pi) \leq \alpha \cdot \opt + \beta$. 

\begin{lemma} \label{lem:high-prob-to-exp}
Let $\alpha, \beta, \beta' > 0$ and $\zeta \in (0, 1/2)$.

Suppose that there exist $\frac{\eps}{3}$-DP efficient algorithms $\cM_1, \cM_2$ algorithms such that
\begin{itemize}
\item $\cM_1$ outputs an $(\alpha, \beta)$-approximate solution with probability $1 - \zeta$.
\item $\cM_2$ is an $(\alpha, \beta')$-approximation algorithm.
\end{itemize}
Then, there is an efficient $\eps$-DP algorithm $\cM$, which is an $\left(\alpha, O\left(\beta + \zeta \cdot \beta' + \frac{m^2}{\eps n}\right)\right)$-approximation algorithm.
\end{lemma}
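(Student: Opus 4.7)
The plan is to run $\cM_1$ and $\cM_2$ in parallel and use a differentially private selection step to pick the better of the two outputs. Concretely, $\cM$ proceeds as follows: (i) run $\cM_1$ with privacy budget $\eps/3$ to obtain $\pi_1$; (ii) run $\cM_2$ with privacy budget $\eps/3$ to obtain $\pi_2$; (iii) release $\tv_i := \cost_{\bw^{\Pi}}(\pi_i) + L_i$ for $i \in \{1,2\}$, where $L_1, L_2$ are i.i.d.\ $\Lap\!\left(O\!\left(\tfrac{m^2}{\eps n}\right)\right)$; and (iv) output $\pi_1$ if $\tv_1 \le \tv_2$, else output $\pi_2$.

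\textbf{Privacy.} Steps (i) and (ii) are $\eps/3$-DP by assumption. For step (iii), by post-processing we may condition on the released $\pi_1, \pi_2$, and then $\cost_{\bw^{\Pi}}(\pi_i)$ is a fixed linear function of the dataset with $\ell_1$-sensitivity $\binom{m}{2}/n = O(m^2/n)$ (changing one input ranking changes the average Kendall's tau by at most that amount). Releasing the two-dimensional vector $(\tv_1, \tv_2)$ via the Laplace mechanism with the scale above therefore satisfies $\eps/3$-DP by \Cref{thm:laplace} and \Cref{thm:comp-pureDP}. The total budget is $\eps$.

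\textbf{Utility.} Let $c_i := \cost_{\bw^{\Pi}}(\pi_i)$. The noisy-argmin guarantee gives the pointwise bound $c_{\out} \le \min(c_1, c_2) + |L_1| + |L_2|$: indeed, if we selected $\pi_1$, then $c_1 + L_1 \le c_2 + L_2$, so $c_1 \le c_2 + |L_1| + |L_2|$, and the case $c_1 = \min(c_1, c_2)$ is trivial. Taking expectations, $\E[|L_1| + |L_2|] = O\!\left(\tfrac{m^2}{\eps n}\right)$, so it remains to bound $\E[\min(c_1,c_2)]$. Letting $\cE$ be the event that $\cM_1$ succeeds, which has probability $\ge 1-\zeta$, we split
\begin{align*}
\E[\min(c_1,c_2)] &\le \E[c_1 \cdot \bone[\cE]] + \E[c_2 \cdot \bone[\neg\cE]].
\end{align*}
Under $\cE$ we have $c_1 \le \alpha \opt + \beta$, so the first term is at most $(\alpha \opt + \beta)(1-\zeta)$. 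Since $\cM_1$ and $\cM_2$ are run on independent randomness, $c_2$ is independent of $\cE$, hence $\E[c_2 \cdot \bone[\neg\cE]] = \E[c_2] \cdot \Pr[\neg\cE] \le (\alpha \opt + \beta') \cdot \zeta$. Combining everything yields
\begin{align*}
\E[c_{\out}] &\le \alpha \opt + (1-\zeta)\beta + \zeta \beta' + O\!\left(\tfrac{m^2}{\eps n}\right) \\
&\le \alpha \opt + O\!\left(\beta + \zeta \beta' + \tfrac{m^2}{\eps n}\right),
\end{align*}
as required.

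\textbf{Main obstacle.} The only subtle points are (a) choosing the budget split so that composition yields exactly $\eps$-DP while keeping the Laplace scale at $O(m^2/(\eps n))$, and (b) justifying the independence of $c_2$ and $\cE$ used in the utility calculation, which is why I insist on running $\cM_1$ and $\cM_2$ with independent internal randomness rather than sharing state. Both are minor once set up carefully.
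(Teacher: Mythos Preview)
Your proposal is correct and essentially identical to the paper's proof: both run $\cM_1,\cM_2$ in parallel, privately compare their costs via the Laplace mechanism (the paper adds a single Laplace to the difference $c_1-c_2$, you add one to each $c_i$), and bound $\E[\min(c_1,c_2)]$ by case-splitting on the success event $\cE$. One small slip: you cannot conclude $\E[c_1\bone[\cE]]\le(\alpha\,\opt+\beta)(1-\zeta)$, since the assumption is $\Pr[\cE]\ge 1-\zeta$, not $\le$; the fix (as the paper does) is to keep $\Pr[\cE]$ symbolic and note $\Pr[\cE]\beta+\Pr[\neg\cE]\beta'\le\beta+\zeta\beta'$, which still yields your stated final bound.
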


Similarly, suppose that there exist $\frac{\rho}{3}$-zCDP efficient algorithms $\cM_1, \cM_2$ algorithms such that
\begin{itemize}
\item $\cM_1$ outputs an $(\alpha, \beta)$-approximate solution with probability $1 - \zeta$.
\item $\cM_2$ is an $(\alpha, \beta')$-approximation algorithm.
\end{itemize}
Then, there is an efficient $\rho$-DP algorithm $\cM$, which is an $\left(\alpha, O\left(\beta + \zeta \cdot \beta' + \frac{m^2}{\sqrt{\rho} \cdot n}\right)\right)$-approximation algorithm.
\begin{proof}
Let us focus on the pure-DP case. The algorithm $\cM$ on input $\Pi$ works as follows:
\begin{itemize}
\item Run $\cM_1$ on $\Pi$ to get an output $\pi_1$
\item Run $\cM_2$ on $\Pi$ to get an output $\pi_2$
\item Compute $\cost_{\bw^{\Pi}}(\pi_1) - \cost_{\bw^{\Pi}}(\pi_2) + r$ where $r \sim \Lap\left(\frac{6m^2}{\eps n}\right)$. If this is positive, then output $\pi_2$. Otherwise, output $\pi_1$.
\end{itemize}
To see that this satisfies $\eps$-DP, note that each of the first two steps satisfies $\frac{\eps}{3}$-DP due to our assumption on $\cM_1, \cM_2$. As for the last step, notice that $\cost_{\bw^{\Pi}}(\pi_1) - \cost_{\bw^{\Pi}}(\pi_2)$ has sensitivity at most $\frac{2m^2}{\eps n}$. Thus, \Cref{thm:laplace} ensures that this satisfies $\frac{\eps}{3}$-DP. Finally, applying the composition theorem (\Cref{thm:comp-pureDP}) ensures that $\cM$ is $\eps$-DP.

Next, we analyze the approximation guarantee of $\cM$. Let $\pi$ be the output of $\cM$. 
Due to the last step, notice that we have
\begin{align*}
\cost_{\bw^{\Pi}}(\pi) \leq |r| + \min\{\cost_{\bw^{\Pi}}(\pi_1), \cost_{\bw^{\Pi}}(\pi_2)\}.
\end{align*}
Taking expectation on both side, we have
\begin{align*}
&\E[\cost_{\bw^{\Pi}}(\pi)] \\
&\leq O\left(\frac{m^2}{\eps n}\right) + \E[\min\{\cost_{\bw^{\Pi}}(\pi_1), \cost_{\bw^{\Pi}}(\pi_2)\}]
\end{align*}

Let $\cE$ denote the event that $\pi_1$ is an $(\alpha, \beta)$-approximation solution. By the assumption on $\cM_1$, we have $\Pr[\neg \cE] \leq \zeta$. Thus, we can bound the expectation of the last term above as follows:
\begin{align*}
&\E[\min\{\cost_{\bw^{\Pi}}(\pi_1), \cost_{\bw^{\Pi}}(\pi_2)\}] \\
&= \Pr[\cE] \cdot \E[\min\{\cost_{\bw^{\Pi}}(\pi_1), \cost_{\bw^{\Pi}}(\pi_2)\}~\mid~\cE] + \\
&\qquad \Pr[\neg \cE] \cdot \E[\min\{\cost_{\bw^{\Pi}}(\pi_1), \cost_{\bw^{\Pi}}(\pi_2)\}~\mid~\neg \cE] \\
&\leq \Pr[\cE] \cdot \E[\cost_{\bw^{\Pi}}(\pi_1)~\mid~\cE] + \\
&\qquad \Pr[\neg \cE] \cdot \E[\cost_{\bw^{\Pi}}(\pi_2)~\mid~\neg \cE] \\
&\leq \Pr[\cE] \cdot (\alpha \cdot \opt + \beta) + \Pr[\neg \cE] \cdot (\alpha \cdot \opt + \beta') \\
&\leq \alpha \cdot \opt + \beta + \zeta \cdot \beta'.
\end{align*}
Combining the above inequalities, we can conclude that $\cM$ is an $\left(\alpha, O\left(\beta + \zeta \cdot \beta' + \frac{m^2}{\eps n}\right)\right)$-approximation algorithm as desired.

The proof for zCDP case is exactly identical except that we use Gaussian noise instead.
\end{proof}

\subsection{Clipping}

Below, we define the $\clip$ operator used in \Cref{alg:ptas_small}. First, $\clip: \R^{m \times m} \to [0,1]^{m \times m}$ such that the output always satisfies boundness (\Cref{def:bounded}). Let $\cW = \{\bw \in [0, 1]^n \mid \forall u, v \in [m], w_{uv} + w_{vu} = 1\}$. The operation is defined as follows:
\begin{align*}
\clip(\tbw) = \argmin_{\tbw' \in \cW} \|\tbw' - \tbw\|_1.
\end{align*}
In other words, $\clip$ projects $\tbw$ back onto $\cW$ under the $\ell_1$ distance. We remark that $\clip$ is efficient since $\cW$ is convex and the $\ell_1$ norm is also convex. Moreover, $\clip$ can be computed quickly since we only have to consider each pair $(\tw_{uv}, \tw_{vu})$ when computing $\tw'_{uv}, \tw'_{vu}$.

The following is a simple but useful observation:
\begin{observation} \label{obs:clipping}
For any $\bw \in \cW$ and $\tbw \in \R^d$, we have $\|\clip(\tbw) - \bw\|_1 \leq 2 \cdot \|\tbw - \bw\|_1$
\end{observation}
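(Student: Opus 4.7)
The plan is to use the defining optimality property of $\clip$ together with the triangle inequality; this is the standard ``projection onto a convex set is contractive up to a factor of two'' argument.

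First, I would observe that since $\bw \in \cW$, the point $\bw$ itself is a feasible candidate in the minimization problem $\argmin_{\tbw' \in \cW} \|\tbw' - \tbw\|_1$ that defines $\clip(\tbw)$. By optimality of the minimizer, this gives
\begin{align*}
\|\clip(\tbw) - \tbw\|_1 \;\leq\; \|\bw - \tbw\|_1.
\end{align*}

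Next, I would apply the triangle inequality in $\ell_1$, writing $\clip(\tbw) - \bw = (\clip(\tbw) - \tbw) + (\tbw - \bw)$, to obtain
\begin{align*}
\|\clip(\tbw) - \bw\|_1 \;\leq\; \|\clip(\tbw) - \tbw\|_1 + \|\tbw - \bw\|_1 \;\leq\; 2\,\|\tbw - \bw\|_1,
\end{align*}
which is the desired bound.

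There is essentially no main obstacle: the argument is the standard two-line proof that a nearest-point projection (in any norm) is a $2$-approximate contraction toward any reference point in the target set. No additional structure of $\cW$ beyond containing $\bw$ is needed, so we do not even have to invoke the convexity of $\cW$ or the coordinate-wise separability of $\clip$ noted just above the observation.
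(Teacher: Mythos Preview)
Your proposal is correct and matches the paper's proof essentially line for line: the paper also applies the triangle inequality and then uses the optimality of $\clip(\tbw)$ (with $\bw \in \cW$ as a feasible candidate) to bound $\|\clip(\tbw) - \tbw\|_1 \leq \|\bw - \tbw\|_1$.
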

\begin{proof}
By triangle inequality and definition of $\clip$, we have
\begin{align*}
\|\clip(\tbw) - \bw\|_1 &\leq \|\clip(\tbw) - \tbw\|_1 + \|\tbw - \bw\|_1 \\
&\leq \|\bw - \tbw\|_1 + \|\tbw - \bw\|_1 \\
&= 2 \cdot \|\tbw - \bw\|_1. \qedhere
\end{align*}
\end{proof}

\section{Lower Bounds}
\label{app:lb}

In this section, we discuss the lower bounds for the rank aggregation problems.

\subsection{Kemeny Rank Aggregation}

We start by recalling the lower bounds proved in \cite{AlabiGKM22}:

\begin{theorem}[\citealt{AlabiGKM22}]
\label{thm:lower-bound1}
For any $\alpha, \eps > 0$, there exists $\eta$ (depending on $\alpha$) such that no $\eps$-DP $(\alpha, 0.01m^2)$-approximation algorithm for Kemeny rank aggregation for $n \leq \eta \cdot m/\eps$.
\end{theorem}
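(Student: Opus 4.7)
The plan is a standard packing lower bound for $\eps$-DP that exploits the exponentially large output space $\bbS_m$. First, I would exhibit a family of $N = 2^{\Omega(m)}$ permutations $\sigma_1, \dots, \sigma_N \in \bbS_m$ with pairwise Kendall's tau distance strictly greater than $0.04\,m^2$. This follows from a Gilbert--Varshamov-style probabilistic argument: for uniformly random $\pi, \pi' \in \bbS_m$, the distance $K(\pi, \pi')$ has mean $\binom{m}{2}/2 \approx m^2/4$, and McDiarmid applied to the $m$ independent Fisher--Yates swaps used to sample $\pi'$ (each of which shifts $K$ by at most $m$) gives $\Pr[K(\pi, \pi') \leq 0.04\,m^2] \leq e^{-\Omega(m)}$. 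Iteratively selecting permutations and union-bounding over those already chosen then yields such a packing of size $N = 2^{\Omega(m)}$.

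For each $i \in [N]$, let $D_i$ be the dataset consisting of $n$ identical copies of $\sigma_i$; then $\opt(D_i) = 0$, so regardless of $\alpha$, any $(\alpha, 0.01\,m^2)$-approximation algorithm $\cM$ must satisfy $\E[K(\cM(D_i), \sigma_i)] \leq 0.01\,m^2$. Markov's inequality gives $\Pr[\cM(D_i) \in G_i] \geq 1/2$, where the ``good set'' is $G_i := \{\pi \in \bbS_m : K(\pi, \sigma_i) \leq 0.02\,m^2\}$; by the triangle inequality for Kendall's tau and the packing guarantee on the $\sigma_i$, the sets $G_i$ are pairwise disjoint. Fixing any $j$, group privacy yields $\Pr[\cM(D_i) \in G_i] \leq e^{n\eps}\cdot \Pr[\cM(D_j) \in G_i]$ for every $i$, since $D_i$ and $D_j$ differ on all $n$ coordinates whenever $i \neq j$. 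Summing over $i$ and using disjointness (so $\sum_i \Pr[\cM(D_j) \in G_i] \leq 1$) produces $N/2 \leq e^{n\eps}$, hence $n \geq \ln(N/2)/\eps = \Omega(m/\eps)$, contradicting $n \leq \eta\cdot m/\eps$ for sufficiently small $\eta$.

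The main obstacle is step one: simultaneously securing $\log N = \Omega(m)$ and pairwise Kendall's tau above the Markov threshold $0.04\,m^2$. The probabilistic argument must absorb both the gap of $\approx 0.21\,m^2$ between $0.04\,m^2$ and the mean, and the bounded-difference constant $m$ appearing in McDiarmid, so that the concentration exponent indeed stays $\Omega(m)$; working with a handful of slightly different constants would give the same asymptotics but might break the packing radius, so care is required. Note that since the construction uses $\opt=0$ instances, the resulting $\eta$ is in fact independent of $\alpha$; the $\alpha$-dependence permitted in the theorem statement is simply slack that a proof from non-trivial $\opt$ could exploit, but is not needed here.
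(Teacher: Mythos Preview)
The paper does not supply its own proof of this statement; it is quoted as a result of \citet{AlabiGKM22}, so there is no in-paper argument to compare against. Your proposal is a correct instantiation of the standard packing lower bound for $\eps$-DP and would establish the theorem. One small imprecision: the bounded-difference constant for a single Fisher--Yates choice is $O(m)$ rather than exactly $m$ (changing one choice can rearrange up to three elements in the final permutation, giving a Kendall's tau shift of up to roughly $3m$), but this only perturbs the constant inside the $\Omega(m)$ exponent and leaves the conclusion intact; alternatively, writing the inversion count of a uniformly random permutation as the sum of independent Lehmer-code digits $a_i \sim \mathrm{Unif}\{0,\ldots,i-1\}$ and applying Hoeffding directly sidesteps the issue. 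Your closing observation that the resulting $\eta$ is actually independent of $\alpha$ (because $\opt = 0$ on the hard instances) is also correct.
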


\begin{theorem}[\citealt{AlabiGKM22}]
\label{thm:lower-bound2}
For any $\alpha > 0, \eps \in (0, 1]$, there exist $\eta, c > 0$ (depending on $\alpha$) such that there is no $(\eps, o(1/n))$-DP $(\alpha, 0.01m^2)$-approximation algorithm for Kemeny rank aggregation for $n \leq \eta \cdot \sqrt{m}/\eps$.
\end{theorem}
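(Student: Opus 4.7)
My plan is to prove Theorem~\ref{thm:lower-bound2} by a fingerprinting-style reduction. The black box is the sample-complexity lower bound of Bun--Ullman--Vadhan and Steinke--Ullman: any $(\eps, o(1/n))$-DP mechanism that, on input a binary dataset $\mathbf{X} \in \{0,1\}^{d \times n}$, outputs $O(1)$-accurate estimates of every one-way marginal $\bar X_j := \frac{1}{n} \sum_i X_{j,i}$ must have $n = \Omega(\sqrt{d}/\eps)$. Taking $d = \Theta(m)$ will yield the claimed $\sqrt{m}/\eps$ threshold.

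First, I would construct a reduction from an arbitrary such $\mathbf{X}$ (with $d = m/4$) to a Kemeny instance $\Pi = (\pi_1, \dots, \pi_n) \in (\bbS_m)^n$. A natural template is to partition the $m$ candidates into one ``reference block'' $R$ of size $m/2$ and $d$ disjoint ``signal blocks'' $B_1, \dots, B_d$, each of size $2$. User $i$'s ranking $\pi_i$ lists the blocks $\{B_j : X_{j,i}=0\}$ first (in some canonical order), then all of $R$, then the blocks $\{B_j : X_{j,i}=1\}$ (also in canonical order). The single bit $X_{j,i}$ is thus encoded as whether $B_j$ sits above or below $R$ in $\pi_i$.

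Next, I would analyze the Kemeny cost. For each $j$, the pairs in $B_j \times R$ contribute $|B_j| \cdot |R| = m$ comparisons per user to $K(\psi, \pi_i)$, so placing $B_j$ on the wrong side of $R$ relative to the majority of $X_{j,\cdot}$ incurs an excess cost of $m \cdot |1 - 2 \bar X_j|$. On a hard distribution in which each $\bar X_j$ is bounded away from $1/2$ by a constant with constant probability, each misclassified block thus costs $\Omega(m)$, so misclassifying a constant fraction of the $d$ blocks costs $\Omega(d \cdot m) = \Omega(m^2)$, exceeding $0.01 m^2$. Combined with a bound on $\opt(\Pi)$ achieved by the honest majority ranking, any $(\alpha, 0.01 m^2)$-approximate output $\psi$ must therefore agree with $\mathrm{sign}(\bar X_j - 1/2)$ for all but a small constant fraction of $j$, which reads off as an $O(1)$-accurate one-way marginal release. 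Invoking the fingerprinting lower bound then contradicts the assumption $n \le \eta \sqrt{m}/\eps$ for a suitably small constant $\eta$ depending on $\alpha$.

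The hard part is controlling the secondary contributions to $K(\psi, \Pi)$ coming from pairs internal to a single $B_j$ or lying across two different signal blocks, whose relative order under $\pi_i$ can depend jointly on multiple bits of $X_i$ and hence adds ``noise'' to the clean per-block signal. The fix is to randomize (and publicly release) the canonical order among signal blocks and the intra-block orientation, so that the inter-block noise becomes symmetric and averages out into a quantity that is absorbed by the multiplicative factor $\alpha \cdot \opt$; this is the combinatorial bookkeeping that produces the constants $\eta$ and $c$ in the theorem.
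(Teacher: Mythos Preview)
The paper does not prove this theorem. Theorem~\ref{thm:lower-bound2} is stated with the attribution \textsc{[AlabiGKM22]} and is used purely as a black box (in \S\ref{app:lb}) to argue that the additive errors in \Cref{thm:red-median-to-rank-agg,thm:two-apx} are nearly tight. There is therefore no ``paper's own proof'' to compare your proposal against; any proof would have to be checked against the original Alabi--Ghazi--Kumar--Manurangsi paper, not this one.

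That said, your high-level plan---reduce from releasing one-way marginals and invoke the Bun--Ullman--Vadhan / Steinke--Ullman $\Omega(\sqrt{d}/\eps)$ sample-complexity lower bound---is the natural route to an $(\eps,\delta)$-DP lower bound of this shape, and is consistent with how such results are typically proved. The part you flag as ``the hard part'' is genuinely the crux: with your encoding, the $\Theta(m^2)$ cross-block pairs $(B_j,B_k)$ contribute a term to $K(\psi,\Pi)$ that can be of the same order as both $\opt$ and the $\Omega(m^2)$ signal you want to extract, and the vague appeal to randomizing the public block order does not by itself show this term is dominated by $\alpha\cdot\opt$ uniformly over all $\psi$. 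If you want to complete the argument, you would need either a cleaner encoding in which cross-block contributions are identically zero (e.g., fix the global block order in every $\pi_i$ and encode bits only by intra-block swaps, then scale appropriately), or a careful accounting showing the cross-block excess is at most a constant times $\opt$ independent of $\psi$.
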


Suppose there exists an $\varepsilon$-DP approximation algorithm with additive error $o\left(\frac{m^3}{\varepsilon n}\right)$. By applying this algorithm to a dataset of size $n = \eta \cdot m / \eps$, we would obtain an additive error of $o\left(m^2\right)$, which contradicts \Cref{cor:tight1}. Similarly, by \Cref{cor:tight2}, there cannot exist an $(\varepsilon,\delta)$-DP algorithm with additive error $o\left(\frac{m^{2.5}}{\varepsilon n} \right)$.

In other words, the additive error of $\tilde{O}\left(\frac{m^3}{\eps n}\right)$ and $\tilde{O}\left(\frac{m^{2.5}}{\eps n}\right)$ achieved in \Cref{thm:two-apx} for $\eps$-DP and $(\eps, \delta)$-DP, respectively, are nearly optimal.

\subsection{Footrule Rank Aggregation}

Next, we will argue that the above lower bounds translate to similar lower bounds for footrule rank aggregation as well.
%
To do this, we will use the following well-known relationship between the two distances.
\begin{lemma}[\citealt{DG77}] \label{lem:dist-relation}
For any $\psi, \pi \in \bbS_m$, we have $K(\psi, \pi) \leq F(\psi, \pi) \leq 2K(\psi, \pi)$.
\end{lemma}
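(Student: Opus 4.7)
The plan is to reduce to the case $\pi = \mathrm{id}$ by setting $\sigma := \psi \circ \pi^{-1}$. Both $K$ and $F$ are invariant under common right-composition by any permutation (verified by a change of variables in the defining sums), so $K(\psi, \pi) = K(\sigma, \mathrm{id})$ is the number of inversions of $\sigma$ and $F(\psi, \pi) = F(\sigma, \mathrm{id}) = \sum_{j} |\sigma(j) - j|$.

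For the upper bound $F \leq 2K$, I would use an adjacent-transposition argument: there is a sequence of exactly $K(\sigma, \mathrm{id})$ adjacent swaps sorting $\sigma$ to the identity (e.g., via bubble sort), and each such swap shifts two candidates by one position each, so $F$ changes by at most $2$ while $K$ decreases by exactly one. Since $F$ terminates at $0$, we obtain $F(\sigma, \mathrm{id}) \leq 2 K(\sigma, \mathrm{id})$.

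For the lower bound $K \leq F$, I would use a charging argument that assigns each inversion to one of its two endpoints. For an inversion $(j, j')$ with $j < j'$ and $\sigma(j) > \sigma(j')$, assign it to $j$ if $\sigma(j) \geq j'$, and otherwise to $j'$. The key claim is that every candidate $k$ receives at most $|\sigma(k) - k|$ assignments. If $k$ is charged as the smaller index of some $(k, j')$, then $\sigma(k) \geq j' > k$, forcing $\sigma(k) > k$ and $j' \in \{k+1, \dots, \sigma(k)\}$, a set of size $\sigma(k) - k = |\sigma(k) - k|$. Dually, if $k$ is charged as the larger index of some $(j, k)$, then $\sigma(j) < k$ and $\sigma(j) > \sigma(k)$, forcing $\sigma(k) < k$ and constraining $\sigma(j) \in \{\sigma(k)+1, \dots, k-1\}$, of size $k - \sigma(k) - 1 < |\sigma(k) - k|$; by injectivity of $\sigma$ at most this many $j$'s can qualify. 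The two cases are mutually exclusive (requiring $\sigma(k) > k$ versus $\sigma(k) < k$), so candidate $k$'s total load is bounded by $|\sigma(k) - k|$; summing over $k$ yields $K \leq F$. The main step to get right is the specific charging threshold ($\sigma(j) \geq j'$ versus $\sigma(j) < j'$): this cleanly decouples the smaller-index and larger-index contributions so that each candidate's load is controlled by its own displacement, rather than by a shared global quantity as a naive per-inversion bound would only deliver.
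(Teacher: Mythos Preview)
The paper does not give its own proof of this lemma; it simply cites \cite{DG77} and uses the inequality as a black box. So there is no ``paper's proof'' to compare against.

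Your argument is correct and self-contained. The reduction to $\pi = \mathrm{id}$ via $\sigma = \psi \circ \pi^{-1}$ is standard and works as you describe. For $F \leq 2K$, the bubble-sort argument is clean: each adjacent swap changes $F$ by an element of $\{-2,0,2\}$ (since each of the two affected positions contributes a change of $\pm 1$), and exactly $K(\sigma,\mathrm{id})$ such swaps suffice to reach the identity. For $K \leq F$, your charging rule is well-defined and the case analysis is right: if $k$ is charged as the smaller index then necessarily $\sigma(k) > k$ and the relevant $j'$ lie in $\{k+1,\dots,\sigma(k)\}$; if $k$ is charged as the larger index then $\sigma(j) \in \{\sigma(k)+1,\dots,k-1\}$ forces $\sigma(k) < k$, and injectivity of $\sigma$ bounds the count by $k-1-\sigma(k)$. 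The two regimes are disjoint, so the total load at $k$ is at most $|\sigma(k)-k|$, and summing gives the bound. This is essentially the classical Diaconis--Graham argument.
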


Using the above, we can derive the following generic result relating approximation algorithms for the two problems:

\begin{lemma} \label{lem:reduction-generic}
Any algorithm that is an $(\alpha, \beta)$-approximation algorithm for footrule rank aggregation is also an $(2\alpha, \beta)$-approximation algorithm for Kemeny rank aggregation.
\end{lemma}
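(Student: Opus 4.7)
The plan is to chain together the two inequalities in \Cref{lem:dist-relation} with the definition of the footrule optimum. Let $\mathcal{A}$ be an $(\alpha,\beta)$-approximation algorithm for footrule rank aggregation, and fix an input $\Pi=(\pi_1,\dots,\pi_n)$. Let $\psi$ denote the (random) output of $\mathcal{A}$ on $\Pi$, let $\psi^*_F \in \argmin_{\psi' \in \bbS_m} F(\psi',\Pi)$ be a footrule-optimal ranking, and let $\psi^*_K \in \argmin_{\psi' \in \bbS_m} K(\psi',\Pi)$ be a Kemeny-optimal ranking.

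The key observation is that \Cref{lem:dist-relation}, applied pointwise to each $\pi_i$ and then averaged over $i \in [n]$, lifts to the averaged distances: $K(\psi',\Pi) \leq F(\psi',\Pi) \leq 2\,K(\psi',\Pi)$ for every $\psi' \in \bbS_m$. First I would use the left inequality to pass from the Kemeny cost of $\psi$ to its footrule cost, then apply the footrule approximation guarantee, then use the optimality of $\psi^*_F$ for the footrule objective to replace it by $\psi^*_K$, and finally use the right inequality to bound the footrule cost of $\psi^*_K$ by twice its Kemeny cost. Explicitly,
\begin{align*}
\E[K(\psi,\Pi)] &\leq \E[F(\psi,\Pi)] \\
&\leq \alpha \cdot F(\psi^*_F,\Pi) + \beta \\
&\leq \alpha \cdot F(\psi^*_K,\Pi) + \beta \\
&\leq 2\alpha \cdot K(\psi^*_K,\Pi) + \beta,
\end{align*}
which is exactly the claimed $(2\alpha,\beta)$-approximation guarantee for Kemeny rank aggregation.

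There is no real obstacle here; the only thing to be slightly careful about is that the approximation guarantee is in expectation (and the randomness of $\mathcal{A}$ interacts only with the first line), and that the pointwise bound of \Cref{lem:dist-relation} genuinely averages to a bound on the normalized distances $K(\cdot,\Pi)$ and $F(\cdot,\Pi)$ used throughout the paper. Note also that the additive error $\beta$ is preserved exactly, since the chain of inequalities above never multiplies it.
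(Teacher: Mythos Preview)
Your proof is correct and follows essentially the same approach as the paper: both use \Cref{lem:dist-relation} to sandwich the footrule and Kemeny costs, apply the footrule approximation guarantee, and pass through the Kemeny-optimal ranking to pick up the factor of $2$ on $\alpha$ while leaving $\beta$ untouched. The paper's version is slightly terser (it writes the key inequality as $K(\psi,\Pi)-2\alpha K(\psi^*,\Pi)\le F(\psi,\Pi)-\alpha F(\psi^*,\Pi)$ and takes expectations), but the content is identical.
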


\begin{proof}
Let $\psi$ be the output of the algorithm. Furthermore, let $\psi^* = \argmin_{\psi'} K(\psi', \Pi)$. From \Cref{lem:dist-relation}, we have that 
\begin{align*}
K(\psi, \Pi) - 2\alpha \cdot K(\psi^*, \Pi) \leq F(\psi, \Pi) - \alpha \cdot F(\psi^*, \Pi).
\end{align*}
Since the algorithm is an $(\alpha, \beta)$-approximation algorithm for footrule rank aggregation, the expectation of the RHS is at most $\beta$. Thus, we also have that the algorithm is $(2\alpha, \beta)$-approximation for Kemeny rank aggregation.
\end{proof}

Thus, plugging \Cref{lem:reduction-generic} into \Cref{thm:lower-bound1,thm:lower-bound2}, we arrive at the following corollaries.

\begin{corollary}
For any $\alpha, \eps > 0$, there exists $\eta$ (depending on $\alpha$) such that there is  no $\eps$-DP $(\alpha, 0.01m^2)$-approximation algorithm for footrule rank aggregation for $n \leq \eta \cdot m/\eps$. \label{cor:tight1}
\end{corollary}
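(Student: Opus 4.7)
The plan is to obtain the footrule lower bound as a direct corollary of the Kemeny lower bound in \Cref{thm:lower-bound1} via the generic reduction in \Cref{lem:reduction-generic}. Concretely, suppose for contradiction that there exists an $\eps$-DP $(\alpha, 0.01m^2)$-approximation algorithm $\cM$ for footrule rank aggregation that works for some $n \leq \eta \cdot m/\eps$, where $\eta$ is to be chosen. By \Cref{lem:reduction-generic}, the very same algorithm $\cM$ is automatically a $(2\alpha, 0.01m^2)$-approximation algorithm for Kemeny rank aggregation on the same input, since the approximation bound lifts to Kemeny without modifying the output (so no additional privacy leakage occurs beyond the $\eps$-DP guarantee already held by $\cM$).

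Next, I would invoke \Cref{thm:lower-bound1} instantiated with the parameter $2\alpha$ in place of $\alpha$. This produces a threshold $\eta' = \eta'(2\alpha)$ such that no $\eps$-DP $(2\alpha, 0.01m^2)$-approximation algorithm for Kemeny rank aggregation exists whenever $n \leq \eta' \cdot m/\eps$. Choosing $\eta := \eta'(2\alpha)$ in the footrule statement and observing that $n \leq \eta \cdot m/\eps$ puts us exactly within the infeasibility regime of \Cref{thm:lower-bound1} for the induced Kemeny algorithm, we reach a contradiction. This establishes the nonexistence claim for footrule rank aggregation.

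The only substantive thing to check is that the reduction from footrule to Kemeny is ``privacy-free''. This is immediate: the reduction in \Cref{lem:reduction-generic} is purely an analytical observation relating the costs $F(\psi,\Pi)$ and $K(\psi,\Pi)$ through \Cref{lem:dist-relation}; the algorithm itself is unchanged and operates only on $\Pi$, so post-processing preserves $\eps$-DP. Hence no obstacle arises, and the corollary follows as a short syllogism combining the Kemeny lower bound with the cost comparison.
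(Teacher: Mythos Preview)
Your proposal is correct and follows essentially the same approach as the paper: apply \Cref{lem:reduction-generic} to turn a hypothetical $\eps$-DP $(\alpha,0.01m^2)$-approximation for footrule into an $\eps$-DP $(2\alpha,0.01m^2)$-approximation for Kemeny, and then invoke \Cref{thm:lower-bound1} with parameter $2\alpha$ to derive a contradiction for $n\le \eta(2\alpha)\cdot m/\eps$. Your explicit remark that the reduction is purely analytical (hence privacy-free) is a helpful clarification but not an additional ingredient beyond what the paper uses.
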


\begin{corollary}
For any $\alpha > 0, \eps \in (0, 1]$, there exists $\eta, c > 0$ (depending on $\alpha$) such that there is no $(\eps, o(1/n))$-DP $(\alpha, 0.01m^2)$-approximation algorithm for footrule rank aggregation for $n \leq \eta \cdot \sqrt{m}/\eps$. \label{cor:tight2}
\end{corollary}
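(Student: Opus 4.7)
The plan is to derive this lower bound for footrule rank aggregation by a black-box reduction from the existing lower bound for Kemeny rank aggregation (\Cref{thm:lower-bound2}), using the distance comparison in \Cref{lem:dist-relation} packaged as \Cref{lem:reduction-generic}. The key observation is that any algorithm's output is a single permutation $\psi \in \bbS_m$, and \emph{the same} permutation can be evaluated under either the Spearman footrule or the Kendall's tau objective; thus an approximation guarantee for one translates into an approximation guarantee for the other, at the cost of a factor of $2$ in the multiplicative constant (by \Cref{lem:dist-relation}). Importantly, the DP guarantee of the algorithm depends only on its input-output behavior and is preserved under this reinterpretation.

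Concretely, I would argue by contradiction. Fix $\alpha > 0$ and $\eps \in (0,1]$. Let $\eta_K > 0$ be the constant produced by \Cref{thm:lower-bound2} when applied with multiplicative factor $2\alpha$ and privacy parameter $\eps$; choose $\eta = \eta_K$ (and $c$ correspondingly from that theorem) for the statement of \Cref{cor:tight2}. Suppose, toward contradiction, that there exists an $(\eps, o(1/n))$-DP algorithm $\cA$ that is an $(\alpha, 0.01 m^2)$-approximation algorithm for footrule rank aggregation on some $n \leq \eta \cdot \sqrt{m}/\eps$. Then by \Cref{lem:reduction-generic}, the very same algorithm $\cA$ is a $(2\alpha, 0.01 m^2)$-approximation algorithm for Kemeny rank aggregation, and it is still $(\eps, o(1/n))$-DP on the same input space $(\bbS_m)^n$. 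But this directly contradicts \Cref{thm:lower-bound2} applied with multiplicative factor $2\alpha$, since the same $\eta_K = \eta$ rules out such an algorithm on $n \leq \eta \cdot \sqrt{m}/\eps$.

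The only non-trivial ingredient is \Cref{lem:reduction-generic}, which has already been proved above via the elementary two-sided inequality $K \leq F \leq 2K$ combined with comparison to an optimal Kemeny ranking; no extra work is needed here. The main (minor) subtlety is ensuring that the $0.01 m^2$ additive slack in the hypothesis on footrule matches the $0.01 m^2$ additive slack in the Kemeny lower bound; this is immediate because \Cref{lem:reduction-generic} transfers the additive error $\beta$ unchanged. Hence no new probabilistic or information-theoretic argument is required—the corollary is purely a consequence of the generic approximation-transfer lemma and the Kemeny lower bound of \cite{AlabiGKM22}.
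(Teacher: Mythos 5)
Your proposal is correct and matches the paper's argument: the paper likewise obtains \Cref{cor:tight2} by plugging \Cref{lem:reduction-generic} into the Kemeny lower bound \Cref{thm:lower-bound2} (instantiated at multiplicative factor $2\alpha$), noting that the additive term $0.01m^2$ and the DP guarantee carry over unchanged. No gaps.
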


Using similar arguments as before, we can conclude that the additive errors $\tilde{O}\left(\frac{m^3}{\varepsilon n} \right)$ and $\tilde{O}\left(\frac{m^{2.5}}{\varepsilon n} \right)$ achieved in \Cref{thm:red-median-to-rank-agg} for $\varepsilon$-DP and $(\varepsilon,\delta)$-DP, respectively, are nearly optimal.

\section{Missing Proofs from \Cref{sec:median}}
\label{app:median}

To formalize the guarantees of \Cref{algo:modified-binary}, we need to formalize vector aggregation algorithms and their guarantees.

In the \emph{$\ell_p$-norm vector aggregation} problem, the domain $\cX$ is the set of all vectors in $\R^d$ such that its $\ell_p$-norm is at most $C$ (where both $d$ and $C$ are parameters). On input dataset $\bW = (w_1, \dots, w_n) \in \cX^n$, the goal is to output an estimate $\tw$ of $\ow := \frac{1}{n} \sum_{i\in[n]} w_i$. Again, we say that an algorithm is $\beta$-accurate if its output $\tw$ satisfies $\E[\|\tw - \ow\|_\infty] \leq \beta$.

The Laplace and Gaussian mechanisms (\Cref{thm:laplace,thm:gaussian}) can be stated in terms of $\ell_1$-norm and $\ell_2$-norm vector aggregation as follows. (Note that these follows directly from concentration of Laplace and Gaussian.)

\begin{theorem} \label{thm:lap-vec-agg}
There is an efficient $\eps$-DP algorithm for $\ell_1$-norm vector aggregation that is $O\left(\frac{C \log d}{\eps n}\right)$-accurate.
\end{theorem}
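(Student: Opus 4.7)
The plan is to instantiate the Laplace mechanism on the averaging function and then bound the expected $\ell_\infty$ norm of the resulting noise via a standard tail argument.

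First, I would define the algorithm as follows: on input $\bW = (w_1, \dots, w_n)$, compute the empirical mean $\ow := \frac{1}{n}\sum_{i \in [n]} w_i$ and release $\tw := \ow + Z$, where $Z$ is drawn from the multivariate Laplace distribution $\Lap(b)$ on $\R^d$ with scale $b := \frac{2C}{\eps n}$. By the definition of $\Lap(b)$ given in the preliminaries (density proportional to $\exp(-\|z\|_1/b)$), this is equivalent to adding $d$ independent one-dimensional Laplace variables of scale $b$, one to each coordinate of $\ow$. This is clearly efficient.

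Second, I would verify $\eps$-DP by bounding the $\ell_1$-sensitivity of $\ow$. For two neighboring datasets $\bW, \bW'$ differing only in coordinate $i^*$, we have
\[
\|\ow - \ow'\|_1 \;=\; \tfrac{1}{n}\,\|w_{i^*} - w'_{i^*}\|_1 \;\leq\; \tfrac{1}{n}\bigl(\|w_{i^*}\|_1 + \|w'_{i^*}\|_1\bigr) \;\leq\; \tfrac{2C}{n},
\]
using that each input vector lies in the $\ell_1$-ball of radius $C$. Since $b = \Delta_1(\ow)/\eps$, \Cref{thm:laplace} immediately yields that the mechanism is $\eps$-DP.

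Third, I would bound the expected $\ell_\infty$-error $\E[\|\tw - \ow\|_\infty] = \E[\|Z\|_\infty]$. For each coordinate, $\Pr[|Z_j| > t] = e^{-t/b}$, so a union bound over the $d$ coordinates gives $\Pr[\|Z\|_\infty > t] \leq \min\{1, d\, e^{-t/b}\}$. Integrating the tail with the threshold $t_0 := b\log d$ gives
\[
\E[\|Z\|_\infty] \;\leq\; t_0 + \int_{t_0}^\infty d\, e^{-t/b}\,dt \;=\; b\log d + b \;=\; O\!\left(\tfrac{C \log d}{\eps n}\right),
\]
which is exactly the claimed accuracy. There is no real obstacle here: this is a textbook combination of sensitivity analysis with a maximum-of-Laplaces tail bound, and the only minor point to watch is the factor of $2$ in the sensitivity, which follows because neighboring datasets can swap one vector of norm $C$ for another of norm $C$.
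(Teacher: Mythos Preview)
Your proposal is correct and is exactly the argument the paper has in mind: the paper itself does not spell out a proof but simply remarks that \Cref{thm:lap-vec-agg} ``follows directly from concentration of Laplace,'' i.e., apply \Cref{thm:laplace} with the $\ell_1$-sensitivity bound $2C/n$ and then bound the expected $\ell_\infty$ norm of $d$ i.i.d.\ Laplace variables via a union bound and tail integration. Your write-up is a faithful and complete instantiation of that sketch.
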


\begin{theorem} \label{thm:gau-vec-agg}
There is an efficient $(\eps, \delta)$-DP algorithm for $\ell_2$-norm vector aggregation that is $O\left(\frac{C \sqrt{\log d \log(1/\delta)}}{\eps n}\right)$-accurate.
\end{theorem}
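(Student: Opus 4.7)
The plan is to apply the Gaussian mechanism (Theorem~\ref{thm:gaussian}) to the empirical mean $\ow(\bW) := \frac{1}{n}\sum_{i \in [n]} w_i$ and then to convert the resulting zCDP guarantee into an $(\eps,\delta)$-DP guarantee via Theorem~\ref{thm:cdp-to-apx-dp}. The first step is to bound the $\ell_2$-sensitivity: for any two neighboring datasets $\bW, \bW'$ differing only in a single coordinate $i$, the triangle inequality gives $\|\ow(\bW) - \ow(\bW')\|_2 \le (\|w_i\|_2 + \|w'_i\|_2)/n \le 2C/n$, hence $\Delta_2(\ow) \le 2C/n$.

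Next I would output $\tw := \ow + Z$ where $Z \sim \cN(0, \sigma^2 I_d)$ with $\sigma = \Delta_2(\ow)/\sqrt{2\rho}$. By Theorem~\ref{thm:gaussian}, this mechanism is $\rho$-zCDP, and by Theorem~\ref{thm:cdp-to-apx-dp} it is $\bigl(\rho + \sqrt{\rho \log(1/\delta)},\, \delta\bigr)$-DP. Choosing $\rho = \Theta\!\left(\eps^2 / \log(1/\delta)\right)$ makes the first parameter at most $\eps$, yielding $(\eps,\delta)$-DP with $\sigma = O\!\left(\frac{C \sqrt{\log(1/\delta)}}{\eps n}\right)$. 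Sampling a $d$-dimensional Gaussian is clearly efficient, so the mechanism runs in polynomial time.

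For accuracy, since $\tw - \ow = Z$, we have $\|\tw - \ow\|_\infty = \max_{j \in [d]} |Z_j|$, the maximum magnitude of $d$ i.i.d.\ $\cN(0,\sigma^2)$ variables. A standard sub-Gaussian maximum bound gives $\E\!\left[\max_{j \in [d]} |Z_j|\right] = O(\sigma \sqrt{\log d})$, and substituting the chosen $\sigma$ produces the claimed $O\!\left(\frac{C \sqrt{\log d \log(1/\delta)}}{\eps n}\right)$ accuracy.

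There is no substantial obstacle here; the argument is a textbook composition of the Gaussian mechanism, the zCDP-to-$(\eps,\delta)$-DP conversion, and the standard Gaussian-maximum tail bound. The only care required is in tracking the constants of the zCDP conversion so that $\sigma$ depends on $\log(1/\delta)$ only under a square root, and in invoking the maximum-of-Gaussians estimate in expectation (rather than with high probability) so that it plugs directly into the definition of $\beta$-accuracy.
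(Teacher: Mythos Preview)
Your proposal is correct and matches the paper's intended approach: the paper simply states that Theorem~\ref{thm:gau-vec-agg} follows from the Gaussian mechanism (Theorem~\ref{thm:gaussian}) together with concentration of Gaussians, and you have filled in exactly those details (sensitivity bound, zCDP-to-$(\eps,\delta)$ conversion via Theorem~\ref{thm:cdp-to-apx-dp}, and the expected maximum of $d$ i.i.d.\ Gaussians).
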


Finally, for $\eps$-LDP, we use the algorithm of \cite{duchi2014localprivacydataprocessing}, whose guarantee is stated below. (Note that this guarantee follows from \cite{BlasiokBNS19} who shows that the noise is sub-Gaussian.)

\begin{theorem}[\citealt{duchi2014localprivacydataprocessing,BlasiokBNS19}] \label{thm:ldp-vec-agg}
There is an efficient $\eps$-LDP algorithm for $\ell_2$-norm vector aggregation that is $O\left(\frac{C \sqrt{\log d}}{\eps \sqrt{n}}\right)$-accurate.
\end{theorem}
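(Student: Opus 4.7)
The plan is to invoke the $\eps$-LDP randomizer of \citet{duchi2014localprivacydataprocessing}, which is tailored to vectors in the $\ell_2$-ball of radius $C$: given input $w_i$ with $\|w_i\|_2 \le C$, it outputs $Z_i \in \R^d$ satisfying $\eps$-LDP together with $\E[Z_i] = w_i$. The analyst simply averages the reports, producing $\tw = \frac{1}{n}\sum_{i \in [n]} Z_i$. Because each $Z_i$ is an unbiased estimator of $w_i$, the estimator $\tw$ is unbiased for $\ow$, so the task reduces to controlling the fluctuations $\tw - \ow = \frac{1}{n}\sum_i (Z_i - w_i)$ in the $\ell_\infty$-norm.

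The next step is to import the concentration statement of \citet{BlasiokBNS19}: they show that, coordinate-wise, the noise $Z_i - w_i$ produced by Duchi et al.'s mechanism is sub-Gaussian with proxy variance at most $O(C^2/\eps^2)$. Since the users act independently, each coordinate $j \in [d]$ of $\tw - \ow$ is an average of $n$ independent mean-zero sub-Gaussian random variables and is therefore sub-Gaussian with parameter $\sigma := O\!\left(\frac{C}{\eps\sqrt{n}}\right)$.

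Finally, I would combine this with the standard fact that the expected maximum of $d$ (not necessarily independent) centered sub-Gaussians with parameter $\sigma$ is $O(\sigma\sqrt{\log d})$; applying it to the $2d$ random variables $\pm(\tw_j - \ow_j)$ yields
\[
\E[\|\tw - \ow\|_\infty] \;\le\; O\!\left(\sigma\sqrt{\log d}\right) \;=\; O\!\left(\frac{C\sqrt{\log d}}{\eps \sqrt{n}}\right),
\]
which is the claimed $\beta$-accuracy. The $\eps$-LDP property is inherited directly from the per-user randomizer, since post-processing by the analyst cannot degrade privacy, and the algorithm is efficient because each $Z_i$ is sampled in $\poly(d)$ time and averaging is linear.

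The only potentially delicate step is confirming that the sub-Gaussian parameter from \citet{BlasiokBNS19} is indeed $O(C/\eps)$ per coordinate (rather than, say, $O(C\sqrt{d}/\eps)$ as one might naively read off from the $\ell_2$-norm of $Z_i$); this is precisely the improvement their analysis provides over treating $Z_i$ as a worst-case vector in the $\ell_2$-ball of radius $O(C\sqrt{d}/\eps)$, and it is what prevents an extraneous $\sqrt{d}$ factor from appearing in the final bound. Everything else is a routine union bound / sub-Gaussian maximal inequality argument.
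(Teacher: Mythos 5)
Your proposal is correct and follows essentially the same route as the paper, which treats this theorem at citation level: the Duchi et al.\ randomizer provides the unbiased $\eps$-LDP reports, the sub-Gaussianity of the per-coordinate noise (with parameter $O(C/\eps)$, as established by Blasiok et al.) is exactly the ingredient the paper points to, and the final bound is the standard maximal inequality for $2d$ centered sub-Gaussians applied to the averaged noise with parameter $O\left(\frac{C}{\eps\sqrt{n}}\right)$. Your remark about avoiding the naive $O(C\sqrt{d}/\eps)$ worst-case read-off is precisely the point of invoking the sub-Gaussian analysis, so nothing is missing.
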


\subsection{Proof of \Cref{thm:apx-median}}

For the three settings, we instantiate \Cref{algo:modified-binary} with \Cref{thm:lap-vec-agg,thm:gau-vec-agg,thm:ldp-vec-agg}, respectively.

\paragraph{Privacy.} The DP guarantees immediately follows since we use the $\eps$-DP, $(\eps, \delta)$-DP and $\eps$-LDP algorithms for vector aggregation, and only use post-processing afterwards. 

\paragraph{Accuracy.}
Throughout this proof, we assume that $\kappa$ is a constant such that $1 < \kappa < 2$. A first key lemma here is that the error of the final estimates is at most that of the aggregated vector, as stated and proved formally below.

\begin{lemma} \label{lem:accuracy-transfer-bin-tree}
Let $v^{\agg, \wei} = \frac{1}{n} \sum_{i \in [n]} v^{\agg, \wei}_i$ and $w^{\agg, \wei} = \frac{1}{n} \sum_{i \in [n]} w^{\agg, \wei}_i$ be the weighted aggregates without any noise. Then, we have
$\|\tgamma - \gamma(\bx)\|_\infty \leq O(1) \cdot \max\{\|v^{\agg, \wei} - \tv^{\agg, \wei}\|_\infty, \|w^{\agg, \wei} - \tw^{\agg, \wei}\|_\infty\}$.
\end{lemma}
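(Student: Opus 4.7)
The overall plan is to first verify that in the \emph{noiseless} case, the computation at Lines \ref{line:start-computation}--\ref{line:estimate} recovers $\gamma_j(\bx)$ \emph{exactly}, and then to bound the resulting error as a weighted sum of the coordinate-wise errors $\tv^{\agg,\wei}_{t'}-v^{\agg,\wei}_{t'}$ and $\tu^{\agg,\wei}_{t'}-u^{\agg,\wei}_{t'}$, where the weights form a geometric series in $\kappa^{-1}$ and therefore sum to $O(1)$.

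\paragraph{Step 1: noiseless correctness.} Define $\gamma_{j,t}$ by the same formula as $\tgamma_{j,t}$ but using the unnoised aggregates $v^{\agg}_{t'}, u^{\agg}_{t'}$ (where $v^{\agg}_{t'}=\kappa^{\ell(t')-d}v^{\agg,\wei}_{t'}$, and analogously for $u^{\agg}_{t'}$). The key combinatorial fact I would use is that as $t$ ranges over $\cB_j$, the siblings $t'$ have pairwise disjoint $I(t')$ whose union is $[m]\setminus\{j\}$. Moreover, if $j\in I(t)$ and $x_i\in I(t')$, then $|x_i-j|=s_{j,t}\cdot(x_i-j)$ by the definition of $s_{j,t}$. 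Writing $x_i-j=(x_i-r(t'))+(r(t')-j)$, using $v^{\agg}_{t'}=\tfrac{1}{n}\sum_{x_i\in I(t')}(x_i-r(t'))$ and $u^{\agg}_{t'}=2^{\ell(t')}\cdot\tfrac{|\{i:x_i\in I(t')\}|}{n}$, this yields
\[
\tfrac{1}{n}\!\!\sum_{x_i\in I(t')}\!\!|x_i-j|\;=\;s_{j,t}\!\left(v^{\agg}_{t'}+\tfrac{r(t')-j}{2^{\ell(t')}}u^{\agg}_{t'}\right)\;=\;\gamma_{j,t},
\]
so summing over $t\in\cB_j$ gives $\gamma_j(\bx)=\sum_{t\in\cB_j}\gamma_{j,t}$.

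\paragraph{Step 2: error from noise.} Subtracting, $\tgamma_j-\gamma_j(\bx)=\sum_{t\in\cB_j}s_{j,t}\bigl((\tv^{\agg}_{t'}-v^{\agg}_{t'})+\tfrac{r(t')-j}{2^{\ell(t')}}(\tu^{\agg}_{t'}-u^{\agg}_{t'})\bigr)$. Using the reweighting step, $|\tv^{\agg}_{t'}-v^{\agg}_{t'}|\le\kappa^{\ell(t')-d}\|\tv^{\agg,\wei}-v^{\agg,\wei}\|_\infty$, and analogously for $u$. A short case analysis on whether $t$ is a left or right node shows $|r(t')-j|\le 2^{\ell(t')+1}$, so the ratio $|r(t')-j|/2^{\ell(t')}$ is at most $2$. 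For each $j$, the set $\cB_j$ contains exactly one node at each level $\ell\in\{0,\dots,d-1\}$, so
\[
\sum_{t\in\cB_j}\kappa^{\ell(t')-d}\;=\;\sum_{\ell=0}^{d-1}\kappa^{\ell-d}\;\le\;\tfrac{1}{\kappa-1}\;=\;O(1),
\]
since $\kappa>1$ is a fixed constant. Plugging in and applying the triangle inequality gives $|\tgamma_j-\gamma_j(\bx)|\le O(1)\cdot\max\{\|\tv^{\agg,\wei}-v^{\agg,\wei}\|_\infty,\|\tu^{\agg,\wei}-u^{\agg,\wei}\|_\infty\}$, uniformly in $j$, which is the claimed bound.

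\paragraph{Main obstacle.} Nothing here is deep; the only non-mechanical part is Step 1, which requires correctly identifying the sign $s_{j,t}$ with the sign of $x_i-j$ for $x_i\in I(t')$, and confirming that the sibling intervals $\{I(t'):t\in\cB_j\}$ partition $[m]\setminus\{j\}$. The rest is a routine triangle-inequality plus the geometric-series bound, which is exactly the reason the weighting factor $\kappa\in(1,2)$ was introduced (it converts what would otherwise be a $\log m$ factor into a constant).
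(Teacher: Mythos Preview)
Your proposal is correct and follows essentially the same approach as the paper: observe that the noiseless computation recovers $\gamma_j(\bx)$ exactly, subtract to express $\tgamma_j-\gamma_j(\bx)$ as a sum over $t\in\cB_j$ of terms bounded by $\kappa^{\ell(t')-d}\cdot O(\beta)$ (using $|s_{j,t}|=1$ and $|r(t')-j|/2^{\ell(t')}\le 2$), and then sum the geometric series in $\kappa^{-1}$. You actually give more justification for Step~1 than the paper does (the paper simply asserts noiseless correctness as an observation), and your case analysis for the bound on $|r(t')-j|$ is exactly what is needed.
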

\begin{proof}
Let $\beta = \max\{\|v^{\agg, \wei} - \tv^{\agg, \wei}\|_\infty, \|w^{\agg, \wei} - \tw^{\agg, \wei}\|_\infty\}$. Let us fix $j \in [m]$. We first observe that, if we replace $\tv^{\agg, \wei}, \tu^{\agg, \wei}$ with $v^{\agg, \wei}, u^{\agg, \wei}$, we would have $\tgamma_j = \gamma_j(\bx)$ for all $j$. Thus, we can bound $|\tgamma_j - \gamma_j(\bx)|$ as follows:
\begin{align*}
&\left|\tgamma_j - \gamma_j(\bx)\right| \\
&= \Big|\sum_{t \in \cB_j} s_{j, t} \cdot \kappa^{\ell(t)-d} \cdot \Big(\left(\tv^{\agg, \wei}_{t'} - v^{\agg, \wei}_{t'}\right) \\ & \qquad\qquad\qquad + \frac{r(t') - j}{2^{\ell(t)}} \cdot \left(\tu^{\agg, \wei}_{t'} - u^{\agg, \wei}_{t'}\right)\Big)\Big| \\
&\leq \sum_{t \in \cB_j} \kappa^{\ell(t)-d} \cdot \left(\beta +  2\beta\right) \\
&\leq 3\beta \cdot \sum_{\ell=0}^d \kappa^{\ell-d} \cdot  \\
&\leq O(\beta),
\end{align*}
where the first inequality follows from triangle inequality and from $|s_{j, t}| = 1$, and $\left|\frac{r(t') - j}{2^{\ell(t)}}\right| \leq 2$.

Hence, we can conclude that $\|\tgamma - \gamma(\bx)\|_\infty \leq O(\beta)$.
\end{proof}

A second key lemma bounds the sensitivity of the weighted contribution vectors:
\begin{lemma} \label{lem:norm-bound}
For all $i \in [n]$ and $p \in \{1, 2\}$, we have $\|v^{\wei}_i\|_p, \|u^{\wei}_i\|_p \leq O(m)$.
\end{lemma}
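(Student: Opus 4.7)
The plan is to exploit the binary tree structure: for any fixed input $x_i$, the vectors $v_i^{\wei}$ and $u_i^{\wei}$ are supported only on those nodes $t \in \cB$ with $x_i \in I(t)$. Since for each fixed level $\ell$ the intervals $\{I(t) : t \text{ at level } \ell\}$ partition $[m]$, there is exactly one such node $t_\ell$ per level $\ell \in \{0, \dots, d\}$. Hence both vectors have support of size only $d+1 = O(\log m)$, instead of $|\cB| = \Theta(m)$.

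Next, I would bound the magnitude of each nonzero coordinate. At the unique node $t_\ell$ at level $\ell$ containing $x_i$, the interval $I(t_\ell)$ has size $2^{\ell}$ and $r(t_\ell)$ is its minimum element, so $0 \leq x_i - r(t_\ell) \leq 2^{\ell}$; therefore $|v_{i,t_\ell}^{\wei}| \leq \kappa^{d-\ell} \cdot 2^{\ell}$, and directly from the definition $|u_{i,t_\ell}^{\wei}| = \kappa^{d-\ell} \cdot 2^{\ell}$. The key point is that although the weighting factor $\kappa^{d-\ell}$ grows as $\ell$ decreases, the interval size $2^{\ell}$ grows faster as $\ell$ increases (since $\kappa < 2$), so the dominant contribution comes from levels near the root.

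Finally, I would estimate the $\ell_p$-norms for $p \in \{1, 2\}$ by summing the per-level bounds:
$$\|v_i^{\wei}\|_p^p \;\leq\; \sum_{\ell=0}^{d} \bigl(\kappa^{d-\ell} \cdot 2^{\ell}\bigr)^p \;=\; 2^{pd} \sum_{\ell=0}^{d} \bigl(\kappa/2\bigr)^{p(d-\ell)}.$$
Since $\kappa \in (1,2)$, the common ratio $(\kappa/2)^p$ is a constant strictly less than $1$, so the geometric series is $O(1)$, giving $\|v_i^{\wei}\|_p^p \leq O(2^{pd}) = O(m^p)$ and hence $\|v_i^{\wei}\|_p \leq O(m)$; the identical bound for $u_i^{\wei}$ follows by the same argument. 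There is no real obstacle; the only subtle point is that the geometric series must have ratio bounded away from $1$, which is exactly the reason the algorithm's parameter is restricted to $\kappa \in (1, 2)$ (as $\kappa \to 2$ the hidden constant blows up, and $\kappa > 1$ is used elsewhere to control the accuracy transfer in \Cref{lem:accuracy-transfer-bin-tree}).
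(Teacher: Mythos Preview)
Your proposal is correct and follows essentially the same argument as the paper: restrict to the $d+1$ nodes on the root-to-leaf path containing $x_i$, bound each coordinate by $\kappa^{d-\ell}\cdot 2^{\ell}$, factor out $2^{pd}=m^p$, and sum the resulting geometric series with ratio $(\kappa/2)^p<1$. The paper's proof is line-for-line the same computation.
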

\begin{proof}
We have
\begin{align*}
\|v^{\wei}_i\|_p^p &= \sum_{t \in \cB \atop I(t) \ni x_i} \left(\kappa^{d - \ell(t)}  \cdot (x_i - r(t))\right)^p \\
&\leq \sum_{t \in \cB \atop I(t) \ni x_i} \left(\kappa^{d - \ell(t)}  \cdot 2^{\ell(t)}\right)^p \\
&= m^p \cdot \sum_{t \in \cB \atop I(t) \ni x_i} (\kappa/2)^{p(d - \ell(t))} \\
&= m^p \cdot \sum_{\ell=0}^d (\kappa/2)^{p(d - \ell)} \\
&= \Theta(m^p),
\end{align*}
where the penultimate equality follows from the fact that $x_i$ is contained in only one interval per level.

Similarly, we have
\begin{align*}
\|u^{\wei}_i\|_p^p &= \sum_{t \in \cB \atop I(t) \ni x_i} \left(\kappa^{d - \ell(t)}  \cdot 2^{\ell(t)}\right)^p \\
&= m^p \cdot \sum_{t \in \cB \atop I(t) \ni x_i} (\kappa/2)^{p(d - \ell(t))} \\
&= m^p \cdot \sum_{\ell=0}^d (\kappa/2)^{p(d - \ell)} \\
&= \Theta(m^p). \qedhere
\end{align*}
\end{proof}

We can now easily conclude the proof of \Cref{thm:apx-median}. \Cref{lem:norm-bound} (together with the guarantees in \Cref{thm:lap-vec-agg,thm:gau-vec-agg,thm:ldp-vec-agg}) implies the the vector aggregation used in \Cref{algo:modified-binary} is $\beta$-accurate for the following $\beta$ (note that $d = 2|\cB| = O(m)$ here):
\begin{itemize}
\item $\beta = O\left(\frac{m \log m}{\eps n}\right)$ for $\eps$-DP,
\item $\beta = O\left(\frac{m \sqrt{\log m \log(1/\delta)}}{\eps n}\right)$ for $(\eps, \delta)$-DP,
\item $\beta = O\left(\frac{m \sqrt{\log m}}{\eps \sqrt{n}}\right)$ for $\eps$-LDP.
\end{itemize}
\Cref{lem:accuracy-transfer-bin-tree} then implies that the same accuracy guarantees hold for the final estimates $(\tgamma_j)_{j \in [m]}$.

\subsection{Proof of \Cref{cor:par-apx-median}}

This simply follows by running the algorithm from \Cref{thm:apx-median} in parallel (i.e. one to compute approximate median for $x_{1,q}, \dots, x_{m,q}$ for each $q \in [m]$). Notice that we can still use the same vector aggregation algorithm on the concatenated vector, except that the $\ell_1$-norm of each vector increases by a factor of $m$ whereas the $\ell_2$-norm of each vector increases by a factor of $\sqrt{m}$. Thus, the error in $\eps$-DP case increases by a factor of $m$, whereas the errors in $(\eps,\delta)$-DP and $\eps$-LDP increases by a factor of $\sqrt{m}$.

\section{Missing Proofs from \Cref{sec:reduction}}

\subsection{Proof of \Cref{thm:red-median-to-rank-agg}}

We use \Cref{alg:footrule-rank-agg}. The privacy guarantee follows from \Cref{cor:par-apx-median}, since the output is simply a post-processing of the estimates $(\tgamma_{j, q})_{j \in [m], q \in [m]}$. Thus, we will henceforth focus on proving the accuracy guarantees.

To prove the accuracy guarantees, let $\psi^* = \argmin_{\psi'} F(\psi', \Pi)$. It suffices to show the following claim: If $\|\gamma - \tgamma\|_\infty \leq \beta$, then $F(\psi, \Pi) - F(\psi^*, \Pi) \leq 2m\beta$. The explicit bounds for the three cases, then follows immediately from \Cref{cor:par-apx-median}.

To see this, note that
\begin{align*}
F(\psi, \Pi) &= \sum_{j \in [m]} \left(\frac{1}{n} \sum_{i \in [n]} |\psi(j) - \pi_i(j)|\right) \\
&= \sum_{j \in [m]} \gamma_{\psi(j), j}(\Pi) \\
&\leq m \beta + \sum_{j \in [m]} \tgamma_{\psi(j), j} \\
&\leq m \beta + \sum_{j \in [m]} \tgamma_{\psi^*(j), j} \\
&\leq 2 m \beta + \sum_{j \in [m]} \gamma_{\psi^*(j), j}(\Pi) \\
&= 2m\beta + F(\psi^*, \Pi)
\end{align*}
where the second inequality follows from the fact that $\psi$ is a minimum-weight matching of $G$.

\subsection{Proof of \Cref{thm:two-apx}}

This is an immediate consequence of \Cref{lem:reduction-generic} and \Cref{thm:red-median-to-rank-agg}.

\section{Missing Proofs from \Cref{sec:PTAS}}\label{sec:ptas-apx}
In this section, we prove \Cref{thm:ptas-merged}. It will be convenient for us to separate the pure-DP and approx-DP case into two theorems, as stated below.

\begin{theorem} \label{thm:ptas-main}
For every $\rho > 0, n, m \in \N$, and constant $\xi > 0$, there exists an efficient $\rho$-zCDP $\left(1 + \xi, O\left(\frac{m^{65/22} \sqrt{\log(m/\rho)}}{n \sqrt{\rho}}\right)\right)$-approximation algorithm for Kemeny rank aggregation.
\end{theorem}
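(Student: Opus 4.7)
The plan is to split on the magnitude of $n$, applying \Cref{alg:ptas_small} when $n$ falls below a threshold $n^\star = \Theta(m\sqrt{\log(m/\rho)/\rho})$ and \Cref{alg:ptas_large} otherwise. Both algorithms are instantiated with Gaussian noise for zCDP, and in \Cref{alg:ptas_small} the two-way marginal subroutine $\maralg$ is the $\rho/2$-zCDP algorithm of \Cref{thm:2-way-marginal}, with the remaining $\rho/2$ allocated to the Gaussian noise on $\bs$; in \Cref{alg:ptas_large} the budget is similarly split between the classification step and the subsequent noise addition. Privacy then follows from composition for zCDP plus post-processing.

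For the small-$n$ case, I would take $\cD = \cN(0, \sigma_1^2)$ with $\sigma_1$ calibrated to the $\ell_2$-sensitivity of $\bs$, namely $O(m/(\sqrt{B}\,n))$, which yields expected $\ell_1$-error on $\ts$ of order $\tO(m^3/(\sqrt{B}\,n\sqrt{\rho}))$. For $\ttt$, applying \Cref{thm:2-way-marginal} with $\cP$ uniform over the $N = \Theta(m^2 B^2)$ relevant queries bounds the expected $\ell_1$-error by $N \cdot O((mB)^{1/4}/(\sqrt{n}\,\rho^{1/4})) = O(m^{9/4} B^{9/4}/(\sqrt{n}\,\rho^{1/4}))$. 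After clipping, which only doubles the error by \Cref{obs:clipping}, \Cref{lem:err-to-apx} converts the total $\ell_1$-error into an additive approximation guarantee for WFAS. Balancing the two terms by taking $B^\star = \Theta(m^{3/11}/(n^{2/11}\rho^{1/11}))$ yields error $\tO(m^{63/22}/(n^{10/11}\rho^{5/11}))$, which for $n \leq n^\star$ is within the target $\tO(m^{65/22}\sqrt{\log(m/\rho)}/(n\sqrt{\rho}))$ after folding in logarithmic slack.

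For the large-$n$ case, I would set $\cD = \cN(0, \sigma_2^2)$ with $\sigma_2 = O(m/(n\sqrt{\rho}))$ and argue in three steps. First, a Gaussian tail bound together with a union bound over the $m^2$ entries shows $\max\{|\theta_{uv}|,|r_{uv}|\} < 1/12$ with probability $1 - m^{-c}$ provided $n \geq n^\star$, and on this event $\tbw'$ satisfies \Cref{def:bounded}. Second, the noiseless reference matrix $\tbw$ (zero on the small imbalanced side, $w^\Pi_{uv} - w^\Pi_{vu}$ on the large side) satisfies $\cost_{\tbw}(\pi) = \cost_{\bw^\Pi}(\pi) - C$ for a $\pi$-independent $C \ge 0$, so any $(1+\xi)$-approximation for $\tbw$ lifts to one for $\bw^\Pi$. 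Third, for each $\pi$ the difference $\cost_{\tbw'}(\pi) - \cost_{\tbw}(\pi)$ is a signed sum of $\binom{m}{2}$ independent Gaussians $r_{uv}$ with coefficients in $\{-1,0,+1\}$, hence itself Gaussian with standard deviation $O(m \sigma_2)$; a union bound over the $m!$ permutations then gives $\sup_\pi |\cost_{\tbw'}(\pi) - \cost_{\tbw}(\pi)| = \tO(m^{5/2}/(n\sqrt{\rho}))$ with high probability. Running \Cref{thm:MS} on $\tbw'$ on the intersection of these events yields a $(1+\xi, \tO(m^{5/2}/(n\sqrt{\rho})))$-approximation, well within the target since $5/2 < 65/22$; \Cref{lem:high-prob-to-exp} with a uniformly random permutation (a trivial $(1+\xi, O(m^2))$-approximation) as $\cM_2$ then absorbs the low-probability failure event into the expected additive error.

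The hard part will be the parameter balancing in the small-$n$ case, specifically verifying that $B^\star \in [1, m]$ across the full range $n \leq n^\star$ (falling back to the appropriate boundary value of $B$ when it is not) and checking that the accumulated logarithmic factors from sensitivity calibration, clipping, and union bounds combine cleanly into the single $\sqrt{\log(m/\rho)}$ factor appearing in the final bound. Ensuring the transition at $n = n^\star$ is seamless---namely that both algorithms meet the target throughout the regime in which each is used---is the other piece of careful bookkeeping.
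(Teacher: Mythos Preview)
Your overall plan matches the paper's proof closely: split at $n^\star = \Theta(m\sqrt{\log(m/\rho)}/\sqrt{\rho})$, use \Cref{alg:ptas_small} with Gaussian noise and the marginal algorithm of \Cref{thm:2-way-marginal} for small $n$ (with $B = \Theta(m^{3/11}/(n^{2/11}\rho^{1/11}))$), and use \Cref{alg:ptas_large} with Gaussian noise for large $n$. The error computations, the privacy accounting, and the three-step analysis of the large-$n$ case (boundedness of $\tbw'$, the $\pi$-independent shift relating $\cost_{\tbw}$ to $\cost_{\bw^\Pi}$, and the union bound over $m!$ permutations) are all correct and essentially identical to the paper.

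There is, however, one genuine gap. In your application of \Cref{lem:high-prob-to-exp} you take $\cM_2$ to be a uniformly random permutation, a $(1+\xi, O(m^2))$-approximation. With failure probability $\zeta = O(1/m)$ this contributes $\zeta\cdot\beta' = O(m)$ to the expected additive error, and this term does \emph{not} decay with $n$. The theorem, however, demands additive error $O(m^{65/22}\sqrt{\log(m/\rho)}/(n\sqrt{\rho}))$ for \emph{all} $n$; once $n \gg m^{43/22}/\sqrt{\rho}$ the target drops below $m$ and your bound fails. The paper avoids this by taking $\cM_2$ to be the $\rho/3$-zCDP PTAS of \cite{AlabiGKM22} with $\beta' = O(m^3/(n\sqrt{\rho}))$, so that $\zeta\cdot\beta' = O(m^2/(n\sqrt{\rho}))$ is dominated by $\beta$. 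Any $\cM_2$ whose additive error scales like $\mathrm{poly}(m)/(n\sqrt{\rho})$ would do; the random-permutation fallback does not.
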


\begin{theorem} \label{thm:ptas-main-pure}
For every $\varepsilon>0, n, m \in \N$, and constant $\xi>0$, there exists an efficient $\varepsilon$-DP $\left(1+\xi,O\left(\left(\frac{m^{27/7}\sqrt{\log (m/\varepsilon)}}{n\varepsilon}\right)\right)\right)$-approximation algorithm for Kemeny rank aggregation.
\end{theorem}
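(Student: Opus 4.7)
The plan is to adapt the $(\varepsilon,\delta)$-DP proof of \Cref{thm:ptas-main} to pure DP by running one of two algorithms depending on whether $n$ is below or above a threshold $n^* = \Theta(m^2 \log(m/\varepsilon)/\varepsilon)$. For $n \geq n^*$ I would run \Cref{alg:ptas_large} with $\cD = \Lap(b)$ for $b = \Theta(m^2/(\varepsilon n))$, the scale dictated by the $\ell_1$ sensitivity $O(m^2/n)$ of the vector of noised entries; for $n < n^*$ I would run \Cref{alg:ptas_small} with $\cD = \Lap$ for noising $\bs$ and with $\maralg$ set to the pure-DP two-way marginal algorithm of \Cref{cor:pure-2way}. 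Pure-DP follows from the Laplace mechanism (\Cref{thm:laplace}), \Cref{cor:pure-2way}, and basic composition (\Cref{thm:comp-pureDP}), allotting a budget of $\varepsilon/2$ to each internal sub-mechanism of each subroutine.

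For the large-$n$ regime the argument parallels the Gaussian case: I would show (i) $\tbw'$ is bounded (in the sense of \Cref{def:bounded}) with high probability, and (ii) for every $\pi\in\bbS_m$, $|\cost_{\tbw}(\pi)-\cost_{\tbw'}(\pi)|$ is small. Claim (i) uses the Laplace tail together with a union bound: all $|r_{uv}|$ and $|\theta_{uv}|$ are at most $O(b\log m)$, and by the choice of $n^*$ this is below $1/6$, so the classification step is correct on every pair and $\tw'_{uv}+\tw'_{vu}\in[1/2,2]$ throughout. For (ii), the cost gap for a fixed $\pi$ is a signed sum of $k=\binom{m}{2}$ independent $\Lap(b)$ variables; the sub-exponential concentration bound of \cite{ChanSS11} gives a sub-Gaussian-type tail of $O(b\sqrt{k\log(1/\delta')})$, and a union bound over $m!$ permutations (with $\delta'=1/(m!)^2$, so $\log(1/\delta')=O(m\log m)$) yields a simultaneous bound of $\tilde{O}(m^{7/2}/(\varepsilon n))$. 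Combining with \Cref{thm:MS} on the bounded instance $\tbw'$ gives a high-probability $(1+\xi,\tilde{O}(m^{7/2}/(\varepsilon n)))$-approximation, which I would lift to an expectation guarantee via \Cref{lem:high-prob-to-exp}; since $7/2<27/7$, this fits inside the claimed bound.

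For the small-$n$ regime I would bound $\|\ts-\bs\|_1$ and $\|\ttt-\bt\|_1$ separately. The vector $\{s_{uv}\}$ has $\ell_1$ sensitivity $O(m^2/(Bn))$, since one user's ranking can flip at most $O(m^2/B)$ in-bucket comparisons by $1/n$ each, so Laplace noise yields $\E\|\ts-\bs\|_1=\tilde{O}(m^4/(Bn\varepsilon))$. Encoding $\bt$ as a sum of two-way marginals over $d=mB$ attributes and invoking \Cref{cor:pure-2way} gives $\E\sum_S|\tq_S-q_S|=O((mB)^{5/2}/\sqrt{n\varepsilon})$; since each relevant marginal query appears in at most one $\ttt_{uv}$, this bounds $\E\|\ttt-\bt\|_1$ by the same quantity. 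Clipping at most doubles the total $\ell_1$ error (\Cref{obs:clipping}), so by \Cref{lem:err-to-apx} combined with \Cref{thm:MS} the overall algorithm is $(1+\xi,\tilde{O}(m^4/(Bn\varepsilon)+m^{5/2}B^{5/2}/\sqrt{n\varepsilon}))$-approximate. Choosing $B=\Theta(m^{3/7}/(n\varepsilon)^{1/7})$ balances the two terms at $\tilde{O}(m^{25/7}/(n\varepsilon)^{6/7})$; using $n\leq n^*$ to substitute $(n\varepsilon)^{1/7}\leq \tilde{O}(m^{2/7})$ rewrites this as $\tilde{O}(m^{27/7}\sqrt{\log(m/\varepsilon)}/(\varepsilon n))$, matching the claim.

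The main technical obstacle is the concentration step in the large-$n$ regime: for Gaussian noise the cost gap is itself Gaussian and concentrates tightly, whereas for Laplace the sum only concentrates sub-exponentially, so one must check that we are in the sub-Gaussian portion of the \cite{ChanSS11} tail, i.e., $\log(1/\delta')\lesssim k$. For our choices $k=\binom{m}{2}$ and $\log(1/\delta')=O(m\log m)$ this holds comfortably, but it is the step most likely to be mishandled when translating from the Gaussian proof. A secondary nuisance is the careful tracking of privacy-budget splits and logarithmic factors across the two internal steps of each subroutine and across the two regimes, so that constants and log exponents fit inside the final $\sqrt{\log(m/\varepsilon)}$ factor absorbed by $\tilde{O}$.
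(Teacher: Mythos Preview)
Your proposal is correct and follows essentially the same approach as the paper: split at the threshold $n^* = \Theta(m^2\log(m/\varepsilon)/\varepsilon)$; for large $n$ run \Cref{alg:ptas_large} with Laplace noise of scale $\Theta(m^2/(\varepsilon n))$, use the sub-exponential concentration of \cite{ChanSS11} on the sum of $\binom{m}{2}$ Laplace variables together with a union bound over $\bbS_m$ to get an $\tilde{O}(m^{7/2}/(\varepsilon n))$ high-probability bound, and convert to expectation via \Cref{lem:high-prob-to-exp}; for small $n$ run \Cref{alg:ptas_small} with Laplace noise on $\bs$ and the pure-DP two-way marginal algorithm of \Cref{cor:pure-2way} on $\bt$, balance with $B = \Theta(m^{3/7}/(n\varepsilon)^{1/7})$ to obtain $O(m^{25/7}/(n\varepsilon)^{6/7})$, and then substitute $n \leq n^*$ to reach the stated $m^{27/7}$ bound. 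The only detail you leave implicit is the choice of $\cM_2$ in \Cref{lem:high-prob-to-exp}; the paper uses the $(1+\xi, O(m^4/(\varepsilon n)))$-approximation of \cite{AlabiGKM22}, which you should mention.
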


Note that, while the \Cref{thm:ptas-main} is stated in terms of zCDP instead of approx-DP, we can simply derive the approx-DP version by plugging in $\rho = O\left(\frac{\log(1/\delta)}{\eps}\right)$ and applying \Cref{thm:cdp-to-apx-dp}.

\subsection{The Small $n$ Case}

When $n$ is ``small'', our approximation algorithm guarantees are as follows.

\begin{theorem}
\label{thm:ptas-small-n}
For every $\rho > 0, n, m \in \N$ such that $n \leq \frac{m^{3/2}}{\rho^{1/2}}$, and any constant $\xi > 0$, there exists an efficient $\rho$-zCDP $\left(1 + \xi, O\left(\frac{m^{63/22}}{n^{10/11} \cdot \rho^{5/11}}\right)\right)$-approximation algorithm Kemeny rank aggregation.
\end{theorem}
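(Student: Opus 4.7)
The plan is to analyze \Cref{alg:ptas_small} instantiated with the Gaussian mechanism for the noise distribution $\cD$ and the algorithm from \Cref{thm:2-way-marginal} for $\maralg$, each with privacy budget $\rho/2$. Overall $\rho$-zCDP then follows from composition (\Cref{thm:comp-CDP}) and post-processing. The number of buckets $B \in \{1, \dots, m\}$ is a parameter that will be chosen to balance two error sources; the approximation guarantee is then obtained by clipping to $\cW$ and applying \Cref{thm:MS} together with \Cref{lem:err-to-apx}.

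First I would bound $\E\|\tilde{\bs} - \bs\|_1$. For each user $i$, the contribution to $n \cdot \bs$ is supported on exactly those ordered pairs $(u,v)$ with $\pi_i(u) < \pi_i(v)$ whose positions lie in the same bucket, of which there are at most $B \cdot \binom{\lceil m/B \rceil}{2} = O(m^2/B)$. Thus the $\ell_2$-sensitivity of $\bs$ is $O(m/(n\sqrt{B}))$, so by \Cref{thm:gaussian} the Gaussian noise has per-entry standard deviation $O(m/(n\sqrt{B\rho}))$, and summing over all $m^2$ entries yields $\E\|\tilde{\bs} - \bs\|_1 = O(m^3/(n\sqrt{B\rho}))$.

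Next I would bound $\E\|\tilde{\bt} - \bt\|_1$. Taking $d = mB$ and letting $\cP$ be the uniform distribution over the $\Theta(m^2 B^2)$ index sets of the form $\{(u, b_u),(v, b_v)\}$ with $u \neq v$ and $b_u \neq b_v$, \Cref{thm:2-way-marginal} provides an expected average-per-query error of $O((mB)^{1/4}/(\sqrt{n}\rho^{1/4}))$, so the expected total $\ell_1$-error over these queries is $O(m^{9/4} B^{9/4}/(\sqrt{n}\rho^{1/4}))$. Since each $\ttt_{uv}$ computed by \Cref{alg:ptas_small} is a sum of $\binom{B}{2}$ of these marginal estimates, the triangle inequality gives $\E\|\tilde{\bt} - \bt\|_1 = O(m^{9/4} B^{9/4}/(\sqrt{n}\rho^{1/4}))$.

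Finally, let $e$ denote the sum of the two bounds. Balancing them yields the optimal choice $B = \Theta(m^{3/11}/(n^{2/11}\rho^{1/11}))$ and resulting error $e = O(m^{63/22}/(n^{10/11}\rho^{5/11}))$; the hypothesis $n \le m^{3/2}/\rho^{1/2}$, equivalent to $n^2\rho \le m^3$, is exactly what guarantees $B \ge 1$, so rounding to a valid integer preserves the bound. Since $\bw^{\Pi} \in \cW$, \Cref{obs:clipping} gives $\E\|\clip(\tilde{\bs} + \tilde{\bt}) - \bw^{\Pi}\|_1 \le 2e$, and because $\clip$ projects onto $\cW$ the resulting matrix is a bounded WFAS instance to which \Cref{thm:MS} applies; \Cref{lem:err-to-apx} then converts the $(1+\xi)$-approximation on the clipped matrix into a $(1+\xi, O(e))$-approximation for the original Kemeny instance, completing the argument. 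The most delicate step will be the cross-bucket analysis: one must restrict the query distribution $\cP$ to the ``useful'' index sets so that the average-per-query guarantee of \Cref{thm:2-way-marginal} cleanly translates into the required total $\ell_1$-bound without picking up spurious factors from the much larger ambient set $\binom{mB}{2}$.
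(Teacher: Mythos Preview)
Your proposal is correct and follows essentially the same route as the paper: instantiate \Cref{alg:ptas_small} with Gaussian noise (budget $\rho/2$) and the \citet{DworkNT15} two-way marginal algorithm (budget $\rho/2$), bound the two $\ell_1$-errors as $O(m^3/(n\sqrt{\rho B}))$ and $O(m^{9/4}B^{9/4}/(\sqrt{n}\rho^{1/4}))$ respectively, balance at $B = \Theta(m^{3/11}/(n^{2/11}\rho^{1/11}))$, then clip and invoke \Cref{thm:MS} with \Cref{lem:err-to-apx}. Your observation that the hypothesis $n \le m^{3/2}/\rho^{1/2}$ is precisely what makes this choice of $B$ at least $1$ is exactly right, and your remark about restricting $\cP$ to the useful index sets matches the paper's construction of $\cP$.
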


\begin{theorem}
\label{thm:ptas-small-n-pure}
For every $\eps > 0, n, m \in \N$ such that $n \leq \frac{m^3}{\eps}$, and any constant $\xi > 0$, there exists an efficient $\varepsilon$-DP $\left(1 + \xi, O\left(\frac{m^{25/7}}{n^{6/7} \cdot \varepsilon^{6/7}}\right)\right)$-approximation algorithm for Kemeny rank aggregation.
\end{theorem}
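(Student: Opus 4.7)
The plan is to instantiate \Cref{alg:ptas_small} with $\cD$ equal to the Laplace distribution and $\maralg$ equal to the $\varepsilon/2$-DP two-way marginal mechanism of \Cref{cor:pure-2way}, applied with some number of buckets $B$ to be optimized, and then to feed the clipped output $\tbw^{\Pi}$ to the non-private PTAS of \Cref{thm:MS}. Splitting the privacy budget equally between the Laplace noise on $\bs$ and the marginal mechanism on $\bt$ makes the whole procedure $\varepsilon$-DP by basic composition (\Cref{thm:comp-pureDP}), and the PTAS invocation is post-processing. What remains is to bound $\E\|\tbw^{\Pi}-\bw^{\Pi}\|_1$ and then invoke \Cref{lem:err-to-apx}.

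To bound the error on $\bs$, I would first argue that $\Delta_1(\bs)=O(m^2/(Bn))$, since replacing one ranking $\pi_i$ can only alter the entries $s_{uv}$ for ordered pairs $(u,v)$ placed in the same bucket by $\pi_i$ or $\pi_i'$, and there are at most $O(m^2/B)$ such pairs per ranking. The Laplace mechanism (\Cref{thm:laplace}) with scale $O(m^2/(Bn\varepsilon))$ then contributes expected $\ell_1$ error $O(m^4/(Bn\varepsilon))$ summed across the $m^2$ entries. For $\bt$, I would observe that each two-way marginal $q_{\{(u,b_u),(v,b_v)\}}$ with $b_u<b_v$ appears in exactly one cell $\ttt_{uv}$, so $\sum_{u,v}|\ttt_{uv}-t_{uv}|\le\sum_S|\tq_S-q_S(\bX)|$, and \Cref{cor:pure-2way} with $d=mB$ bounds the expectation of the right-hand side by $O\!\left(\sqrt{(mB)^5/(n\varepsilon)}\right)=O\!\left(m^{5/2}B^{5/2}/\sqrt{n\varepsilon}\right)$. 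Combining with \Cref{obs:clipping} (which costs only a factor $2$ for the clipping step) gives
\[
\E\|\tbw^{\Pi}-\bw^{\Pi}\|_1 \le O\!\left(\frac{m^4}{Bn\varepsilon}+\frac{m^{5/2}B^{5/2}}{\sqrt{n\varepsilon}}\right).
\]

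Balancing the two terms by setting $B=\Theta\!\left(m^{3/7}/(n\varepsilon)^{1/7}\right)$ makes both equal to $O\!\left(m^{25/7}/(n\varepsilon)^{6/7}\right)$; the hypothesis $n\le m^3/\varepsilon$ is precisely what guarantees $B\ge 1$, so the bucketing is non-trivial. Since the clipped $\tbw^{\Pi}$ is a bounded WFAS instance, \Cref{thm:MS} delivers a $(1+\xi)$-approximation to $\cost_{\tbw^{\Pi}}$, which by \Cref{lem:err-to-apx} is automatically a $\bigl(1+\xi, O(\|\tbw^{\Pi}-\bw^{\Pi}\|_1)\bigr)$-approximation on the original Kemeny cost. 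Taking expectations yields the claimed additive error. The main obstacle is the matching between the Nikolov $\ell_1$ guarantee (which is stated as a single aggregated bound over the entire workload of two-way marginals) and the structure of the cross-bucket sums $\ttt_{uv}$: one has to verify that each marginal query is re-used at most a constant number of times across the $t_{uv}$ entries so that no extra factor of $B$ or $B^2$ leaks in. A secondary but essential detail is the sensitivity calculation for $\bs$, where the bucket structure is exactly what produces the crucial $1/B$ saving that makes the bucketing worthwhile in the first place.
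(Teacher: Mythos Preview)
Your proposal is correct and matches the paper's proof essentially line for line: instantiate \Cref{alg:ptas_small} with Laplace noise on $\bs$ and the $\varepsilon/2$-DP two-way marginal algorithm of \Cref{cor:pure-2way} on $\bt$, bound the $\ell_1$ error of each part, choose $B=\Theta(m^{3/7}/(n\varepsilon)^{1/7})$ to balance, then clip and invoke \Cref{thm:MS} together with \Cref{lem:err-to-apx}. Your observation that each marginal query $q_{\{(u,b_u),(v,b_v)\}}$ with $b_u<b_v$ is used in exactly one $\ttt_{uv}$, and your sensitivity bound $\Delta_1(\bs)=O(m^2/(Bn))$, are exactly the two points the paper relies on as well.
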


Recall that $\tbw^\Pi=\tbs+\tbt$ where $\tbs, \tbt$ are as defined in \Cref{alg:ptas_small}.

To prove \Cref{thm:ptas-small-n-pure,thm:ptas-small-n}, by \Cref{lem:err-to-apx} and since clipping does not increase the $\ell_1$ distance by more than a factor of two (\Cref{obs:clipping}), it suffices to prove the following:

\begin{lemma} \label{lem:main-est-pairwise}
For every $\rho > 0, n, m \in \N$ such that $n \leq \frac{m^{3/2}}{\rho^{1/2}}$, there exists an efficient $\rho$-zCDP algorithm that produces an estimate $\tbw$ of $\bw^{\Pi}$ with $$\E[\|\tbw - \bw^{\Pi}\|_1] \leq O\left(\frac{m^{63/22}}{n^{10/11} \cdot \rho^{5/11}}\right).$$
\end{lemma}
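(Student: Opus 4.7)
The plan is to instantiate Algorithm~\ref{alg:ptas_small} with noise distribution $\mathcal{D} = \mathcal{N}(0, \sigma_1^2)$ for producing $\tbs$ and the $\rho_2$-zCDP two-way marginal algorithm of \Cref{thm:2-way-marginal} for $\maralg$, then choose the bucket count $B$ to balance the two resulting error terms. Overall privacy is $\rho$-zCDP by the composition theorem (\Cref{thm:comp-CDP}) with $\rho_1 + \rho_2 = \rho$, and since $\clip$ is post-processing. By \Cref{obs:clipping}, it suffices to bound $\E\|\tbs - \bs\|_1 + \E\|\tbt - \bt\|_1$ and then double the result.

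For the first term, I would compute the $\ell_2$-sensitivity of $\bs$. Since each bucket has size at most $\lceil m/B\rceil$, changing one ranking $\pi_i$ alters at most $O(m^2/B)$ entries of $\bs$ by $1/n$ each, giving $\Delta_2(\bs) = O(m/(n\sqrt{B}))$. The Gaussian mechanism then requires $\sigma_1 = O(m/(n\sqrt{B\rho_1}))$, yielding $\E\|\tbs - \bs\|_1 = O(m^2 \sigma_1) = O(m^3/(n\sqrt{B\rho_1}))$.

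For the second term, I would take $\mathcal{P}$ to be the uniform distribution over the set $\mathcal{S}$ of two-way marginal queries actually used in forming $\tbt$, namely those indexed by $\{(u,b_u),(v,b_v)\}$ with $u \ne v$ and $b_u \ne b_v$. A direct count gives $|\mathcal{S}| = O(m^2 B^2)$, and each such marginal appears in exactly one $\ttt_{uv} = \sum_{b_u < b_v}\tq_{\{(u,b_u),(v,b_v)\}}$. Applying \Cref{thm:2-way-marginal} with $d = mB$ and using the triangle inequality then gives $\E\|\tbt - \bt\|_1 \leq |\mathcal{S}| \cdot O((mB)^{1/4}/(\sqrt{n}\rho_2^{1/4})) = O(m^{9/4}B^{9/4}/(\sqrt{n}\rho_2^{1/4}))$.

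Finally, setting $\rho_1 = \rho_2 = \rho/2$ and balancing the two contributions via $m^3/(n\sqrt{B\rho}) \asymp m^{9/4}B^{9/4}/(\sqrt{n}\rho^{1/4})$ yields $B = \Theta(m^{3/11}/(n^{2/11}\rho^{1/11}))$, and substituting back gives the claimed bound $O(m^{63/22}/(n^{10/11}\rho^{5/11}))$. The constraint $B \geq 1$ reduces exactly to $n \leq m^{3/2}/\sqrt{\rho}$, matching the hypothesis of the lemma; the regime $B < 1$ would degenerate to vanilla Gaussian noise on $\bw^\Pi$ and is handled separately in the large-$n$ algorithm. The main technical subtlety is that \Cref{thm:2-way-marginal} guarantees only an \emph{average} error over $\mathcal{P}$, so choosing $\mathcal{P}$ to be uniform on exactly the queries used (and no more) is crucial to prevent $|\mathcal{S}|$ from blowing up to the full $\binom{mB}{2}$, which would worsen the exponent and break the optimization.
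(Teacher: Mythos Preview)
Your proposal is correct and matches the paper's proof: Gaussian noise for $\tbs$ with $\sigma = \Theta(m/(n\sqrt{\rho B}))$, the algorithm of \Cref{thm:2-way-marginal} for $\tbt$, the privacy split $\rho_1=\rho_2=\rho/2$, and the same balancing choice $B = \Theta(m^{3/11}/(n^{2/11}\rho^{1/11}))$. One minor correction to your closing remark: $\binom{mB}{2}=\Theta(m^2B^2)$ is already the same order as your $|\cS|$, so restricting $\cP$ to exactly the used queries is not in fact needed to avoid a blowup---the paper's $\cP$ simply samples $u,v\in[m]$ independently (allowing $u=v$) and the bound is unchanged.
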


\begin{lemma} \label{lem:main-est-pairwise-pure}
For every $\varepsilon > 0, n, m \in \N$ such that $n \leq \frac{m^{3}}{\varepsilon}$, there exists an efficient $\varepsilon$-DP algorithm that produces an estimate $\tbw$ of $\bw^{\Pi}$ with $$\E[\|\tbw - \bw^{\Pi}\|_1] \leq O\left(\frac{m^{25/7}}{n^{6/7} \cdot \varepsilon^{6/7}}\right).$$
\end{lemma}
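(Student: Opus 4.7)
The plan is to instantiate \Cref{alg:ptas_small} in the pure-DP regime by taking $\cD = \mathrm{Lap}(b)$ for a suitable scale $b$ and taking $\maralg$ to be the pure-DP two-way marginal algorithm of \Cref{cor:pure-2way}. I would split the budget as $\varepsilon/2$ for each of the two stages, so that basic composition (\Cref{thm:comp-pureDP}) gives overall $\varepsilon$-DP. By \Cref{lem:err-to-apx} combined with \Cref{obs:clipping}, it then suffices to bound $\E[\|\tbs+\ttt - \bs-\bt\|_1]$ and then take $\tbw = \clip(\ts+\ttt)$.

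For the within-bucket component $\bs$, the key step is a sensitivity computation. Changing a single ranking $\pi_i$ to $\pi_i'$ only affects the entries $s_{uv}$ for which $\pi_i$ or $\pi_i'$ places both $u$ and $v$ in the same bucket. Since a fixed permutation has at most $B \cdot \binom{m/B}{2} = O(m^2/B)$ such ordered in-bucket pairs, the $\ell_1$-sensitivity of the map $\Pi \mapsto \bs$ is $O(m^2/(nB))$. Applying the Laplace mechanism (\Cref{thm:laplace}) with scale $O(m^2/(nB\varepsilon))$ on each of the $m^2$ coordinates gives $\E[\|\tbs-\bs\|_1] = O(m^4/(nB\varepsilon))$.

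For the cross-bucket component $\bt$, the encoding in \Cref{alg:ptas_small} realises $t_{uv} = \sum_{b_u<b_v} q_{\{(u,b_u),(v,b_v)\}}(\bX)$ as a sum of two-way marginals on $d = mB$-dimensional binary data. Crucially, the marginals that appear in distinct entries $t_{uv}$ are pairwise distinct (each unordered pair $\{(u,b_u),(v,b_v)\}$ with $b_u \neq b_v$ contributes to exactly one of $t_{uv}$ or $t_{vu}$). Hence, summing $|\ttt_{uv}-t_{uv}|$ is bounded by the total $\ell_1$ marginal error, and \Cref{cor:pure-2way} directly yields $\E[\|\ttt-\bt\|_1] \le O\!\left(\sqrt{(mB)^5/(n\varepsilon)}\right)$.

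Putting the two bounds together,
$$\E\bigl[\|\tbw-\bw^{\Pi}\|_1\bigr] \;\le\; O\!\left(\frac{m^4}{nB\varepsilon} + \sqrt{\frac{m^5 B^5}{n\varepsilon}}\right).$$
Optimising over $B$, the two terms balance when $B^{7/2} = \Theta(m^{3/2}/\sqrt{n\varepsilon})$, i.e.\ $B = \Theta\!\left(m^{3/7}/(n\varepsilon)^{1/7}\right)$, which gives the stated bound $O\!\left(m^{25/7}/(n\varepsilon)^{6/7}\right)$. The hypothesis $n \le m^3/\varepsilon$ is exactly what guarantees the optimal $B \ge 1$, so that bucketing is nontrivial. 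The main obstacle in executing this plan is the sensitivity analysis for $\bs$ (making sure the within-bucket pair count is correctly bounded) and verifying the bijective correspondence between distinct two-way marginals and entries of $\bt$; once these are in place, the rest is a routine balancing of terms.
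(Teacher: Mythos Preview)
Your proposal is correct and follows essentially the same approach as the paper: instantiate \Cref{alg:ptas_small} with Laplace noise of scale $\Theta(m^2/(nB\varepsilon))$ for $\bs$ (using the $\ell_1$-sensitivity bound $O(m^2/(nB))$), use the pure-DP two-way marginal algorithm of \Cref{cor:pure-2way} for $\bt$, and balance the resulting error terms at $B=\Theta(m^{3/7}/(n\varepsilon)^{1/7})$. The only minor organizational difference is that the reference to \Cref{lem:err-to-apx} belongs to the derivation of \Cref{thm:ptas-small-n-pure} from this lemma rather than to the lemma itself; the lemma only asks for an estimate with small expected $\ell_1$ error, and your use of \Cref{obs:clipping} to handle the final $\clip$ is exactly right.
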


\subsubsection{zCDP.} We will now prove \Cref{lem:main-est-pairwise} by instantiating \Cref{alg:ptas_small} with Gaussian noise and the two-way marginal algorithm from \Cref{thm:2-way-marginal}.

\begin{proof}[Proof of \Cref{lem:main-est-pairwise}]
Let $\sigma=\frac{m}{n\sqrt{\rho B}}$. Let $\cP$ denote the distribution on $\binom{[d]}{2}$ drawn as follows:
\begin{itemize}
\item Draw $u, v$ independently uniformly at random from $[m]$.
\item Draw $b_u < b_v$ uniformly from $[B]$ among $\binom{B}{2}$ such pairs.
\item Return $S = \{(u, b_u), (v, b_v)\}$ as a sample.
\end{itemize}

Run \Cref{alg:ptas_small} with $B = \left\lceil \frac{m^{3/11}}{n^{2/11} \rho^{1/11}} \right\rceil, \cD=\cN(0,\sigma^2)$ and \maralg~being the $\frac{\rho}{2}$-zCDP algorithm $\cA$ from \Cref{thm:2-way-marginal} with $\cP$ as specified. Since the $\ell_2$-sensitivities of $\bs$ is at most $\sqrt{m \cdot \frac{m}{B}} = \frac{m}{\sqrt{B}}$, \Cref{thm:gaussian} implies that the Gaussian mechanism applied to $\bs$ satisfies $\frac{\rho}{2}$-zCDP. Finally, the composition theorem (\Cref{thm:comp-CDP}) implies that the entire algorithm is $\rho$-zCDP.

As for the error guarantee, first notice that
\begin{align*}
t_{uv} &= \frac{1}{n} \sum_{i \in [n]} \bone[\iota(\pi_i(u)) < \iota(\pi_i(v))] \\
&= \frac{1}{n} \sum_{i \in [n]} \sum_{b_u, b_v \in [B] \atop b_u < b_v} \bone[\iota(\pi_i(u)) = b_u, \iota(\pi_i(v)) = b_v] \\
&= \sum_{b_u, b_v \in [B] \atop b_u < b_v} q_{\{(u, b_u), (v, b_v)\}}(\bX).
\end{align*}
As a result, we have
\begin{align*}
&\E[\|\tbt - \bt\|_1] \\
&\leq \E_{\cA}\left[\sum_{u, v \in [m]} \sum_{b_u, b_v \in [B] \atop b_u < b_v} \left|\tq_{\{(u, b_u), (v, b_v)\}} - q_{\{(u, b_u), (v, b_v)\}}(\bX)\right|\right] \\
&= \E_{\cA}\left[m^2 \binom{B}{2} \cdot \E_{S \sim \cP}\left[\left|\tq_S - q_S(\bX)\right| \right]\right] \\
&\lesssim \frac{m^{9/4} B^{9/4}}{\sqrt{n} \cdot \rho^{1/4}},
\end{align*}
where the last inequality is due to the guarantee of $\cA$ from \Cref{thm:2-way-marginal}.

For $\tbs$, we simply have
\begin{align*}
\E[\|\tbs - \bs\|_1] &= \sum_{u,v \in [m]} \E[|r_{uv}|] \lesssim m^2 \cdot \sigma \lesssim \frac{m^3}{n \sqrt{\rho B}}.
\end{align*}
Finally, note that
\begin{align*}
\E[\|\tbw - \bw^{\Pi}\|_1]
&\leq \E[\|\tbt - \bt\|_1] + \E[\|\tbs - \bs\|_1] \\
&\lesssim \frac{m^{9/4} B^{9/4}}{\sqrt{n} \cdot \rho^{1/4}} + \frac{m^3}{n \sqrt{\rho B}} \\
&\lesssim \frac{m^{63/22}}{n^{10/11} \cdot \rho^{5/11}},
\end{align*}
where the last inequality is due to our choice $B = \Theta\left(\frac{m^{3/11}}{n^{2/11} \rho^{1/11}}\right)$.
\end{proof}

\subsubsection{Pure-DP.}
The proof of \Cref{lem:main-est-pairwise-pure} is similar except that we use Laplace noise and the pure-DP two-way marginal algorithm from \Cref{cor:pure-2way}.

\begin{proof}[Proof of \Cref{lem:main-est-pairwise-pure}]
Let $b=\frac{m^2}{2 n \eps B}$.
Run \Cref{alg:ptas_small} with $B = \left\lceil \frac{m^{3/7}}{n^{1/7} \varepsilon^{1/7}} \right\rceil, \cD=\Lap(b)$ and \maralg~being the $\frac{\eps}{2}$-DP algorithm $\cA$ from \Cref{cor:pure-2way}. Since the $\ell_1$-sensitivities of $\bs$ is at most $m \cdot \frac{m}{B} = \frac{m^2}{B}$, \Cref{thm:laplace} implies that the Laplace mechanism applied to $\bs$ satisfies $\frac{\eps}{2}$-DP. Finally, the composition theorem (\Cref{thm:comp-pureDP}) implies that the entire algorithm is $\eps$-DP.

As for the error guarantee, recall from the proof of \Cref{lem:main-est-pairwise} that 
\begin{align*}
    t_{uv} = \sum_{b_u, b_v \in [B] \atop b_u < b_v} q_{\{(u, b_u), (v, b_v)\}}(\bX).
\end{align*}
As a result, we get 
\begin{align*}
    & \E[\|\tbt-\bt\|_1] \\
    &= \E\left[\sum_{u,v} \left| \sum_{b_u,b_v\in [B] \atop b_u<b_v} \tilde{q}_{\{(u,b_u),(v,b_v)\}}(\bX)-q_{\{(u,b_u),(v,b_v)\}}(\bX)\right|\right] \\
    & \le \E\left[\sum_{u,v} \sum_{b_u,b_v\in [B] \atop b_u<b_v} \left|\tilde{q}_{\{(u,b_u),(v,b_v)\}}(\bX)-q_{\{(u,b_u),(v,b_v)\}}(\bX)\right|\right] \\
    & \le \E\left[\sum_{S \in \binom{[d]}{2}} |\tq_S - q_S(\bX)|\right],
\end{align*}
which is at most $O\left(\sqrt{\frac{(mB)^5}{n\varepsilon}}\right) = O\left(\frac{m^{2.5}B^{2.5}}{\sqrt{n\varepsilon}}\right)$ by \Cref{cor:pure-2way}.

Moreover, we have
\begin{align*}
\E[\|\tbs - \bs\|_1] &= \sum_{u,v \in [m]} \E[|r_{uv}|] \lesssim m^2 \cdot b \lesssim \frac{m^4}{n \eps B}.
\end{align*}

Combining these, we get
\begin{align*}
\E[\|\tbw - \bw^{\Pi}\|_1]
&\leq \E[\|\tbt - \bt\|_1] + \E[\|\tbs - \bs\|_1] \\
&\lesssim \frac{m^{2.5}B^{2.5}}{\sqrt{n\eps}} + \frac{m^4}{n \eps B} \\
&\lesssim \frac{m^{25/27}}{n^{6/7} \cdot \eps^{6/7}},
\end{align*}
where the last inequality is due to our choice $B = \Theta\left(\frac{m^{3/7}}{n^{1/7} \varepsilon^{1/7}}\right)$.
\end{proof}
\subsection{The Large $n$ Case}

We will next handle the case of large $n$. The main theorem of this case is stated below.

\begin{theorem} \label{thm:ptas-large-n-cdp-expectation}
For any $\rho \in (0, 1], n, m \in \N$ such that $n \geq \frac{10000m\sqrt{\log(m/\rho)}}{\sqrt{\rho}}$, and constant $\xi > 0$, there exists an efficient $\rho$-zCDP $\left(1 + \xi, O\left(\frac{m^{2.5} \sqrt{\log m}}{n\sqrt{\rho}}\right)\right)$-approximation algorithm for Kemeny rank aggregation.
\end{theorem}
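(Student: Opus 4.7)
The plan is to analyze \Cref{alg:ptas_large} instantiated with Gaussian noise $\cD = \cN(0, \sigma^2)$ for $\sigma = \Theta\!\left(\frac{m}{n\sqrt{\rho}}\right)$. Privacy will be a routine two-step composition: the pairwise-classification step and the noise-addition step each operate on a vector indexed by the $\binom{m}{2}$ unordered pairs $\{u,v\}$. In the first step, each entry $w_{uv}^{\Pi}$ has $\ell_\infty$-sensitivity $1/n$; in the second, the noised quantity is either $w_{uv}^{\Pi}$ (balanced) or $w_{uv}^{\Pi} - w_{vu}^{\Pi}$ (imbalanced), with $\ell_\infty$-sensitivity $O(1/n)$. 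Hence both vectors have $\ell_2$-sensitivity $O(m/n)$, so the chosen $\sigma$ makes each step $(\rho/2)$-zCDP by \Cref{thm:gaussian}, and composition (\Cref{thm:comp-CDP}) yields $\rho$-zCDP overall.

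For utility, I would first introduce the auxiliary ``ideal'' matrix $\tbw$ obtained from $\bw^{\Pi}$ by applying only the imbalanced-pair modification: for each $(u,v) \in \Zl$ with $w'_{uv} > 5/6$, set $\tw_{uv} = w_{uv}^{\Pi} - w_{vu}^{\Pi}$ and $\tw_{vu} = 0$, and leave all other entries unchanged. A direct case analysis on whether $\pi(u) < \pi(v)$ or $\pi(v) < \pi(u)$ shows that for \emph{every} permutation $\pi$, $\cost_{\tbw}(\pi) = \cost_{\bw^{\Pi}}(\pi) - C$, where $C := \sum_{(u,v) \in \Zl} w_{vu}^{\Pi} \geq 0$ is independent of $\pi$. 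Hence $\tbw$ has the same optimizer as $\bw^{\Pi}$, and any $(1+\xi, \beta)$-approximate solution for $\tbw$ is also $(1+\xi, \beta)$-approximate for $\bw^{\Pi}$ (the $\xi C \geq 0$ slack only helps).

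Next, introduce the event $\cE$ that every Gaussian noise $\theta_{uv}, r_{uv}$ satisfies $|\cdot| \leq 1/24$. By standard Gaussian tail bounds, the assumption $n \geq 10000\,m\sqrt{\log(m/\rho)}/\sqrt{\rho}$, and a union bound over the $O(m^2)$ entries, $\Pr[\neg\cE] \leq m^{-10}$. On $\cE$, every $(u,v) \in \Zl$ satisfies $w_{uv}^{\Pi} - w_{vu}^{\Pi} > 7/12$ (because $w_{uv}^{\Pi} > 5/6 - 1/24$ and $w_{uv}^{\Pi} + w_{vu}^{\Pi} = 1$), and every $(u,v) \in \Zs$ has $w_{uv}^{\Pi} \in [1/12, 11/12]$; combined with $|r_{uv}| \leq 1/24$, one checks that $\tbw'$ satisfies \Cref{def:bounded}, so \Cref{thm:MS} applies. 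The \emph{main obstacle}---the technical heart of the argument---is the \emph{uniform} cost approximation: for each fixed $\pi$, $\cost_{\tbw'}(\pi) - \cost_{\tbw}(\pi)$ is a sum of at most $\binom{m}{2}$ independent $\cN(0, \sigma^2)$ noises, so it is itself Gaussian with standard deviation $O(m\sigma)$; a union bound over all $m!$ permutations (with $\log(m!) = \Theta(m\log m)$) then yields that, except with probability $m^{-10}$, $|\cost_{\tbw'}(\pi) - \cost_{\tbw}(\pi)| \leq O(m\sigma\sqrt{m\log m}) = O(m^{2.5}\sqrt{\log m}/(n\sqrt{\rho}))$ \emph{simultaneously for every} $\pi$.

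Putting the pieces together, on the joint high-probability event the PTAS of \Cref{thm:MS} applied to $\tbw'$ returns $\pi_{\out}$ with $\cost_{\tbw'}(\pi_{\out}) \leq (1+\xi)\min_{\pi}\cost_{\tbw'}(\pi)$; transferring through the uniform cost approximation and subtracting the constant shift $C$ gives $\cost_{\bw^{\Pi}}(\pi_{\out}) \leq (1+\xi)\opt + O(m^{2.5}\sqrt{\log m}/(n\sqrt{\rho}))$. To convert this high-probability guarantee into an expected-error bound, I would invoke \Cref{lem:high-prob-to-exp} with the trivial $(1+\xi, O(m^2))$-approximation that outputs a uniformly random permutation as the fallback $\cM_2$; since the failure probability is at most $2 m^{-10}$, the $\zeta\beta'$ contribution is negligible and the lemma's residual $m^2/(n\sqrt{\rho})$ term is already dominated by our additive error, yielding the claimed theorem.
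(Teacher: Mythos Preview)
Your plan matches the paper's proof essentially step for step: instantiate \Cref{alg:ptas_large} with Gaussian noise of scale $\sigma=\Theta(m/(n\sqrt{\rho}))$, argue privacy via two applications of the Gaussian mechanism and composition, introduce the same auxiliary matrix $\tbw$ with the constant-shift identity $\cost_{\bw^{\Pi}}(\pi)=\cost_{\tbw}(\pi)+C$, establish boundedness of $\tbw'$ and the uniform cost bound $|\cost_{\tbw'}(\pi)-\cost_{\tbw}(\pi)|\le O(m\sigma\sqrt{m\log m})$ by a union bound over $m!$ permutations, and finally convert via \Cref{lem:high-prob-to-exp}.

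There is one small but genuine slip in your last step. You take $\cM_2$ to be the random permutation, with $\beta'=O(m^2)$ \emph{independent of $n$}, and declare $\zeta\beta'=O(m^{-8})$ ``negligible''. But the target additive error is $O\!\left(\frac{m^{2.5}\sqrt{\log m}}{n\sqrt{\rho}}\right)$, which tends to $0$ as $n\to\infty$; for $n$ large enough (say $n\sqrt{\rho}\gg m^{10.5}$) your $O(m^{-8})$ term dominates it, and the stated bound fails. The paper avoids this by choosing as $\cM_2$ the $(1+\xi,\,O(m^3/(n\sqrt{\rho})))$-approximation of \citet{AlabiGKM22}, whose $\beta'$ scales like $1/n$; then with $\zeta=O(1/m)$ one gets $\zeta\beta'=O(m^2/(n\sqrt{\rho}))$, which is genuinely dominated for all $n$. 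Your argument is easily repaired either by using this fallback, or by noting that the union-bound failure probability can be driven down to $(mn)^{-\Omega(m)}$ at the cost of replacing $\sqrt{\log m}$ by $\sqrt{\log(mn)}$ in $\gamma$---but as written, the ``negligible'' claim does not hold uniformly in $n$.
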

\begin{theorem} \label{thm:ptas-large-n-pure-expectation}
For any $\varepsilon \in (0, 1], n, m \in \N$ such that $n \geq \frac{400m^2{\log(m/\varepsilon)}}{{\varepsilon}}$, and constant $\xi > 0$, there exists an efficient $\varepsilon$-DP $\left(1 + \xi, O\left(\frac{m^{3.5} \sqrt{\log m}}{n{\varepsilon}}\right)\right)$-approximation algorithm for Kemeny rank aggregation.
\end{theorem}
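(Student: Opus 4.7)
The plan is to instantiate \Cref{alg:ptas_large} with Laplace noise $\cD=\Lap(b)$ where $b = \Theta(m^2/(n\eps))$, and mirror the zCDP argument of \Cref{thm:ptas-large-n-cdp-expectation}, substituting each Gaussian concentration step by a Laplace counterpart. Privacy is straightforward: I would split the budget $\eps/2$ between the classification step and the noise-addition step. Since replacing a single ranking changes each $w^\Pi_{uv}$ by at most $1/n$, the released $\binom{m}{2}$-dimensional pairwise-score vector has $\ell_1$-sensitivity at most $m^2/(2n)$ in each step, so the Laplace mechanism (\Cref{thm:laplace}) with the chosen scale $b$ yields $(\eps/2)$-DP per step; pure-DP composition (\Cref{thm:comp-pureDP}) plus post-processing by the Mathieu--Schudy PTAS give the overall $\eps$-DP guarantee.

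For utility I would establish four high-probability events. (E1) Every individual Laplace sample $\theta_{uv},r_{uv}$ has magnitude at most $1/12$; by the standard Laplace tail and a union bound over $O(m^2)$ entries this holds with probability $1-1/\poly(m)$ precisely when $b\log m \ll 1$, which is exactly what the hypothesis $n\ge 400 m^2\log(m/\eps)/\eps$ provides. (E2) Classification is faithful: every $(u,v)\in\Zl$ has $w^\Pi_{vu}\le 1/3$ (hence $w^\Pi_{uv}-w^\Pi_{vu}\ge 2/3$) and every $(u,v)\in\Zs$ has $w^\Pi_{uv},w^\Pi_{vu}\in[1/12,11/12]$; this is immediate from (E1). (E3) $\tbw'$ is a bounded WFAS instance per \Cref{def:bounded}, again immediate from (E1) and (E2). (E4) For \emph{every} permutation $\pi\in\bbS_m$ simultaneously, $|\cost_{\tbw'}(\pi)-\cost_{\tbw}(\pi)|\le O(m^{3.5}\sqrt{\log m}/(n\eps))$, where $\tbw$ is the noiseless counterpart defined in the algorithm.

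The main obstacle is (E4). For a fixed $\pi$ the cost difference is a signed sum of at most $\binom{m}{2}$ independent $\Lap(b)$ draws with coefficients in $\{-1,0,1\}$; for Gaussian noise this sum would itself be Gaussian with trivially tight tails, but Laplace sums are only sub-Gaussian up to a certain quantile. I would invoke the concentration result for sums of independent Laplace variables of \cite{ChanSS11}, which yields deviation at most $O(b\sqrt{m^2\log(1/\zeta)})$ whenever $\log(1/\zeta)=O(m^2)$. Setting $\zeta = 1/(2m!)$ keeps us well inside that regime since $\log(m!)=O(m\log m)\ll m^2$, so a union bound over all $m!$ rankings combined with $b=\Theta(m^2/(n\eps))$ delivers exactly the bound in (E4). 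Conditional on (E1)--(E4), a direct case analysis shows that $\cost_{\tbw}(\pi)-\cost_{\bw^\Pi}(\pi)$ equals a \emph{permutation-independent} constant $-\sum_{(a,b)\in\Zl}w^\Pi_{ba}$ (each $\Zl$ pair contributes $-w^\Pi_{ba}$ regardless of how $\pi$ orders it, and each $\Zs$ pair contributes $0$), so the $(1+\xi)$-approximation on $\tbw'$ given by \Cref{thm:MS} transfers via two applications of (E4) to a $(1+\xi, O(m^{3.5}\sqrt{\log m}/(n\eps)))$-approximation on $\bw^\Pi$. Finally, to upgrade this high-probability statement to the desired \emph{expected} guarantee I would apply \Cref{lem:high-prob-to-exp} with the $(1+\xi,\tilde O_\xi(m^4/(n\eps)))$-approximate PTAS of \cite{AlabiGKM22} as the fallback $\cM_2$, so that the $1/\poly(m)$ failure mass is absorbed into the stated additive error.
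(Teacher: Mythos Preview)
Your proposal is correct and follows essentially the same approach as the paper: instantiate \Cref{alg:ptas_large} with Laplace noise of scale $\Theta(m^2/(n\eps))$, use individual Laplace tails to guarantee boundedness of $\tbw'$, use the sum-of-Laplace concentration of \cite{ChanSS11} plus a union bound over $m!$ rankings to control the cost gap $|\cost_{\tbw'}(\pi)-\cost_{\tbw}(\pi)|$, exploit the permutation-independence of $\cost_{\tbw}(\pi)-\cost_{\bw^\Pi}(\pi)$ (the paper's \eqref{eq:cost-expand-diff} in \Cref{lem:ptas_key}), and finally upgrade to an expectation via \Cref{lem:high-prob-to-exp} with the \cite{AlabiGKM22} PTAS as fallback. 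Your events (E1)--(E4) are just a slightly finer decomposition of the paper's $\cE_1,\cE_2$ together with \Cref{lem:ptas_key}, but the content is identical.
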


In fact, it will be more convenient to prove the high-probability variants of the above theorems, as stated below.

\begin{theorem} \label{thm:ptas-large-n-cdp}
For any $\rho \in (0, 1], n, m \in \N$ such that $n \geq \frac{10000m\sqrt{\log(m/\rho)}}{\sqrt{\rho}}$, and constant $\xi > 0$, there exists an efficient $\rho$-zCDP algorithm that, with probability $1 - O\left(\frac{1}{m}\right)$ outputs an $\left(1 + \xi, O\left(\frac{m^{2.5} \sqrt{\log m}}{n\sqrt{\rho}}\right)\right)$-approximate solution for Kemeny rank aggregation.
\end{theorem}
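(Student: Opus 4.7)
The plan is to instantiate \Cref{alg:ptas_large} with $\cD = \cN(0, \sigma^2)$ where $\sigma = \Theta\!\left(\frac{m}{n\sqrt{\rho}}\right)$. Privacy follows immediately: both the pairwise classification step (which releases $(w'_{uv})_{u<v}$) and the noise addition step (which releases $(\tw'_{uv})_{u<v}$ given $\Zl,\Zs$) have $\ell_2$-sensitivity at most $O(m/n)$, since a single-ranking change perturbs each of the $\binom{m}{2}$ coordinates by $O(1/n)$. By \Cref{thm:gaussian} each step is $(\rho/2)$-zCDP, and composition (\Cref{thm:comp-CDP}) yields overall $\rho$-zCDP. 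For the utility analysis, I first define a good event $\cE = \{\max_{u<v}\max(|\theta_{uv}|,|r_{uv}|)\le c\}$ for a sufficiently small absolute constant $c$; Gaussian concentration and a union bound over $\binom{m}{2}$ pairs give $\Pr[\cE] \ge 1 - O(1/m)$, using the hypothesis $n \ge 10000m\sqrt{\log(m/\rho)}/\sqrt{\rho}$ to force $\sigma\sqrt{\log m}$ to be a small enough constant. A short case check then shows that on $\cE$ the output matrix $\tbw'$ is bounded (possibly after tightening the thresholds $1/6, 5/6$ slightly, permitted by the footnote to \Cref{thm:MS} allowing the boundedness constants in \Cref{def:bounded} to be arbitrary), so the PTAS of \Cref{thm:MS} is applicable.

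The heart of the analysis is two structural identities about the ``noiseless intermediary'' matrix $\tbw$ obtained from the same case split as $\tbw'$ but with all $r_{uv}$ set to $0$. A per-pair check gives
\[
\cost_{\bw^{\Pi}}(\pi) - \cost_{\tbw}(\pi) \;=\; \sum_{(u,v)\in\Zl} w_{vu} \;=:\; C,
\]
a non-negative constant independent of $\pi$; in particular $\argmin_\pi\cost_{\bw^{\Pi}} = \argmin_\pi\cost_{\tbw}$. A second per-pair check expresses the residual noise as
\[
\cost_{\tbw'}(\pi) - \cost_{\tbw}(\pi) \;=\; \sum_{u<v}\alpha_{uv}(\pi)\,r_{uv}, \qquad \alpha_{uv}(\pi)\in\{-1,0,1\}.
\]
For each fixed $\pi$ the right-hand side is Gaussian with variance at most $\binom{m}{2}\sigma^2$. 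A Gaussian tail bound combined with a union bound over the $m!$ permutations (using $\log(m!) = O(m\log m)$) then produces a second event $\cE'$ of probability $1 - O(1/m)$ on which
\[
\max_{\pi\in\bbS_m}\bigl|\cost_{\tbw'}(\pi) - \cost_{\tbw}(\pi)\bigr| \;\le\; E \;:=\; O\!\left(\tfrac{m^{2.5}\sqrt{\log m}}{n\sqrt{\rho}}\right).
\]

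To conclude, condition on $\cE\cap\cE'$ (still probability $1-O(1/m)$). The PTAS output $\pi$ satisfies $\cost_{\tbw'}(\pi) \le (1+\xi)\cost_{\tbw'}(\pi^*)$ for $\pi^* = \argmin\cost_{\bw^{\Pi}}$, which also minimizes $\cost_{\tbw}$ by the first identity. Applying the uniform $E$-bound twice and the constant-shift identity $\cost_{\bw^{\Pi}} = \cost_{\tbw} + C$ with $C\ge 0$, a three-line calculation yields
\[
\cost_{\bw^{\Pi}}(\pi) \;\le\; (1+\xi)\,\cost_{\bw^{\Pi}}(\pi^*) + (2+\xi)E,
\]
which is the desired $(1+\xi,O(E))$-approximation guarantee. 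The main obstacle I anticipate is the uniform concentration step over all $m!$ rankings: it is essential that the noise sum has only $O(m^2)$ terms (not $\Omega(m^3)$), so that the $\sqrt{m\log m}$ penalty from union-bounding over permutations leaves an additive error of only $O(m^{2.5}\sqrt{\log m}/(n\sqrt{\rho}))$. A secondary obstacle is synchronizing the algorithm's classification thresholds with the boundedness constants of \Cref{def:bounded}, which I handle by invoking the flexibility in the statement of \Cref{thm:MS}.
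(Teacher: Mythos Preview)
Your proposal is correct and follows essentially the same route as the paper: instantiate \Cref{alg:ptas_large} with Gaussian noise of scale $\sigma = m/(n\sqrt{\rho})$, argue $\rho$-zCDP via $\ell_2$-sensitivity and composition, define the same two good events (boundedness of $\tbw'$; uniform closeness of $\cost_{\tbw'}$ and $\cost_{\tbw}$), and chain the constant-shift identity $\cost_{\bw^{\Pi}}(\pi) = C + \cost_{\tbw}(\pi)$ with the PTAS guarantee exactly as in the paper's \Cref{lem:ptas_key}. The only superfluous hedge is your remark about tightening the thresholds $1/6,5/6$: with $|\theta_{uv}|,|r_{uv}|\le 0.01$ one checks directly that $\tbw'$ already satisfies \Cref{def:bounded} with the stated constants $1/2,2$, so no appeal to the footnote of \Cref{thm:MS} is needed.
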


\begin{theorem} \label{thm:ptas-large-n-pure}
For any $\varepsilon \in (0, 1], n, m \in \N$ such that $n \geq \frac{400m^2{\log(m/\varepsilon)}}{{\varepsilon}}$, and constant $\xi > 0$, there exists an efficient $\varepsilon$-DP algorithm that, with probability $1 - O\left(\frac{1}{m}\right)$ outputs an $\left(1 + \xi, O\left(\frac{m^{3.5} \sqrt{\log m}}{n{\varepsilon}}\right)\right)$-approximate solution for Kemeny rank aggregation.
\end{theorem}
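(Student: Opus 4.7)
My plan is to instantiate Algorithm~\ref{alg:ptas_large} with the Laplace distribution $\cD=\Lap(b)$ for $b=\Theta(m^2/(n\varepsilon))$, splitting the privacy budget equally between the pairwise classification step and the noise-addition step. Privacy is immediate: changing one ranking alters each $w^\Pi_{uv}$ by at most $1/n$, so each of the two $\binom{m}{2}$-dimensional vectors that get noised has $\ell_1$-sensitivity $O(m^2/n)$; two applications of \Cref{thm:laplace} combined with \Cref{thm:comp-pureDP} yield $\varepsilon$-DP overall.

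For utility, following the template sketched in \Cref{subsec:overview}, I condition on two high-probability events. Event (i) is that $\tbw'$ is a bounded WFAS instance per \Cref{def:bounded}. The tail bound $\Pr[|\Lap(b)|>t]\le e^{-t/b}$ combined with a union bound over the $O(m^2)$ noised entries shows that, with probability $1-O(1/m)$, every $|\theta_{uv}|$ and every $|r_{uv}|$ is at most a small absolute constant (say $1/24$); here I crucially use the hypothesis $n\ge 400m^2\log(m/\varepsilon)/\varepsilon$, which forces $b\log m\ll 1$. In the balanced case this keeps $\tw'_{uv}\in[1/12,11/12]$ and $\tw'_{uv}+\tw'_{vu}=1$. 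In the imbalanced case $(u,v)\in\Zl$, the identity $w^\Pi_{uv}+w^\Pi_{vu}=1$ together with $w'_{uv}>5/6$ forces $w^\Pi_{uv}-w^\Pi_{vu}>1/2$ with a comfortable constant margin, so the additive $r_{uv}$ keeps $\tw'_{uv}+\tw'_{vu}=\tw'_{uv}\in[1/2,2]$, establishing boundedness.

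The step I expect to be the main obstacle is Event (ii): $|\cost_{\tbw}(\pi)-\cost_{\tbw'}(\pi)|\le O(m^{3.5}\sqrt{\log m}/(n\varepsilon))$ holds simultaneously for \emph{every} $\pi\in\bbS_m$. For a fixed $\pi$, this quantity is a signed sum of at most $N\le m^2$ independent $\Lap(b)$ random variables, one per noised pair with sign determined by whether $\pi$ orders that pair consistently with its positive contribution to the cost. I will invoke the concentration bound for sums of independent Laplace variables from \cite{ChanSS11}, namely $\Pr[|\sum_i\Lap_i(b)|\ge t]\le 2\exp(-\Omega(\min(t^2/(Nb^2),t/b)))$. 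Setting the per-permutation failure probability to $1/(m\cdot m!)$ requires an exponent of $\log(m\cdot m!)=O(m\log m)$, which is achieved at $t=\Theta(b\cdot m\sqrt{m\log m})=\Theta(m^{3.5}\sqrt{\log m}/(n\varepsilon))$; the inequality $m^{1.5}\sqrt{\log m}\le m^2$ confirms $t\le Nb$, i.e.\ that the sub-Gaussian branch of the bound dominates. A union bound over $\bbS_m$ then yields Event (ii) with probability $1-O(1/m)$.

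Finally, conditioning on both events, \Cref{thm:MS} applied to the bounded instance $\tbw'$ returns a permutation $\pi_{\out}$ with $\cost_{\tbw'}(\pi_{\out})\le(1+\xi)\min_\pi\cost_{\tbw'}(\pi)$; invoking Event (ii) for both $\pi_{\out}$ and the minimizer of $\cost_{\tbw}$ transfers this into $\cost_{\tbw}(\pi_{\out})\le(1+\xi)\min_\pi\cost_{\tbw}(\pi)+O(m^{3.5}\sqrt{\log m}/(n\varepsilon))$. The structural identity $\cost_{\bw^\Pi}(\pi)-\cost_{\tbw}(\pi)=\sum_{(u,v)\in\Zl}w^\Pi_{vu}$, which is \emph{constant} in $\pi$ once $\Zl$ is fixed, then converts this into the same $(1+\xi,\cdot)$ guarantee for the original Kemeny instance $\bw^\Pi$. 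On the complementary failure event the algorithm outputs a uniformly random permutation, and this contingency is absorbed into the $O(1/m)$ probability slack of the high-probability statement.
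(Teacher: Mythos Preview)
Your proposal is correct and follows essentially the same approach as the paper: instantiate \Cref{alg:ptas_large} with $\cD=\Lap(b)$ for $b=\Theta(m^2/(n\varepsilon))$, use Laplace tails plus a union bound (leveraging $n\ge 400m^2\log(m/\varepsilon)/\varepsilon$) to establish boundedness of $\tbw'$, and use the \cite{ChanSS11} concentration for sums of Laplace variables together with a union bound over $\bbS_m$ to control $\max_\pi|\cost_{\tbw'}(\pi)-\cost_{\tbw}(\pi)|$. Your final structural identity $\cost_{\bw^\Pi}(\pi)-\cost_{\tbw}(\pi)=\sum_{(u,v)\in\Zl}w^\Pi_{vu}$ is exactly the content of the paper's \Cref{lem:ptas_key} (equation~\eqref{eq:cost-expand-diff}), which the paper abstracts out as a common lemma shared with the zCDP case.
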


Note that \Cref{thm:ptas-large-n-cdp-expectation} is now a consequence of applying \Cref{lem:high-prob-to-exp} with $\cM_1$ being the algorithm from \Cref{thm:ptas-large-n-cdp} and $\cM_2$ being an efficient $\frac{\rho}{3}$-zCDP\footnote{While \cite{AlabiGKM22} only states their guarantee in terms of approximate-DP, their algorithm only uses the Gaussian mechanism and thus satisfies zCDP as well.} $\left(1 + \xi, O\left(\frac{m^3}{n\sqrt{\rho}}\right)\right)$-approximation algorithm from \cite{AlabiGKM22}. Similarly, \Cref{thm:ptas-large-n-pure-expectation} is now a consequence of applying \Cref{lem:high-prob-to-exp} with $\cM_1$ being the algorithm from \Cref{thm:ptas-large-n-pure} and $\cM_2$ being an efficient $\frac{\eps}{3}$-DP $\left(1 + \xi, O\left(\frac{m^4}{\eps n}\right)\right)$-approximation algorithm from \cite{AlabiGKM22}.

The remainder of this section is devoted to proving \Cref{thm:ptas-large-n-cdp} and \Cref{thm:ptas-large-n-pure}.

\subsubsection{Common Lemma.}
The proof of both theorems will use \Cref{alg:ptas_large} but with different setting of the noise distribution $\cD$. We will thus start by stating a common lemma that will be used in both cases.

To do this, let us first define the  matrix $\tbw \in [0,1]^{m \times m}$ by
\begin{align*}
\tw_{uv} = 
\begin{cases}
w_{uv} - w_{vu} &\text{ if } (u, v) \in \Zl, \\
0 &\text{ if } (v, u) \in \Zl, \\
w_{uv} &\text{ otherwise.}
\end{cases}
\end{align*}
Note that this is simply the unnoised version of $\tbw'$ as defined in \Cref{alg:ptas_large}.

Let $\gamma$ be a positive real number, which will be specified later. Consider the following ``good'' events:
\begin{itemize}
\item $\cE_1$: $\tbw'$ is a bounded instance.
\item $\cE_2$: $\max_{\pi \in \bbS_m} |\cost_{\tbw'}(\pi) - \cost_{\tbw}(\pi)| < \gamma$.
\end{itemize}

Next we give a conditional guarantee of \Cref{alg:ptas_large}.
\begin{lemma}\label{lem:ptas_key}
If $\cE_1, \cE_2$ both hold, then \Cref{alg:ptas_large} outputs an $(1 + \xi, O(\gamma))$-approximate solution.
\end{lemma}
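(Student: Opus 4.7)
The plan is to first establish a structural identity between $\cost_{\bw^\Pi}$ and $\cost_{\tbw}$, then use the two good events $\cE_1, \cE_2$ to transfer the PTAS guarantee from $\tbw'$ back to the original Kemeny instance $\bw^\Pi$.

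The first step is to verify the following key identity: there exists a constant $C \geq 0$ (depending on $\Pi$ and on the classification produced in the pairwise-classification loop, but \emph{not} on the permutation) such that for every $\pi \in \bbS_m$,
\begin{equation*}
\cost_{\bw^\Pi}(\pi) \;=\; \cost_{\tbw}(\pi) + C.
\end{equation*}
To see this, I would analyze each pair $\{u,v\}$ separately. For a balanced pair $(u,v) \in \Zs$, $\tbw$ agrees with $\bw^\Pi$ on the entries $\tw_{uv}, \tw_{vu}$, so the contributions coincide. For an imbalanced pair $(u,v) \in \Zl$ (so $\tw_{uv} = w^\Pi_{uv} - w^\Pi_{vu}$ and $\tw_{vu} = 0$), a direct check in both orderings $\pi(u) < \pi(v)$ and $\pi(v) < \pi(u)$ shows that the contribution to $\cost_{\bw^\Pi}(\pi) - \cost_{\tbw}(\pi)$ equals exactly $w^\Pi_{vu} \geq 0$, independent of how $\pi$ orders $u$ and $v$. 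Summing these constants over all imbalanced pairs yields $C \geq 0$. A crucial consequence is that any optimum of $\cost_{\tbw}$ is also an optimum of $\cost_{\bw^\Pi}$ and vice versa.

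Next, assume $\cE_1$ and $\cE_2$ hold and let $\pi_\out$ denote the output of the algorithm. Under $\cE_1$, the PTAS of \Cref{thm:MS} applies to the bounded instance $\tbw'$, so
\begin{equation*}
\cost_{\tbw'}(\pi_\out) \;\leq\; (1+\xi)\cdot\min_{\pi \in \bbS_m}\cost_{\tbw'}(\pi).
\end{equation*}
Let $\pi^* := \argmin_\pi \cost_{\bw^\Pi}(\pi)$; by the identity above, $\pi^*$ is also a minimizer of $\cost_{\tbw}$. Applying $\cE_2$ twice—once to bound $\cost_{\tbw}(\pi_\out) \leq \cost_{\tbw'}(\pi_\out) + \gamma$ and once to bound $\cost_{\tbw'}(\pi^*) \leq \cost_{\tbw}(\pi^*) + \gamma$—and chaining with the PTAS guarantee gives
\begin{equation*}
\cost_{\tbw}(\pi_\out) \;\leq\; (1+\xi)\,\cost_{\tbw}(\pi^*) + (2+\xi)\gamma.
\end{equation*}

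Finally, I would add $C$ to both sides and use the identity to convert back to $\cost_{\bw^\Pi}$:
\begin{equation*}
\cost_{\bw^\Pi}(\pi_\out) \;\leq\; (1+\xi)\,\cost_{\bw^\Pi}(\pi^*) - \xi C + (2+\xi)\gamma \;\leq\; (1+\xi)\,\cost_{\bw^\Pi}(\pi^*) + O(\gamma),
\end{equation*}
where the last inequality uses $C \geq 0$. This is exactly the $(1+\xi, O(\gamma))$-approximation claim. The only real content is the pair-by-pair verification of the identity in the first step; everything else is a straightforward chaining. The sign $C \geq 0$ matters because it is what lets the $-\xi C$ term be dropped without damaging the additive error, so I would make a point of checking both orderings carefully in that step rather than treating it as routine.
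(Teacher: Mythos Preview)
Your proposal is correct and follows essentially the same argument as the paper: the paper proves the identity $\cost_{\bw^\Pi}(\pi) = \big(\sum_{(u,v)\in\Zl} w^\Pi_{vu}\big) + \cost_{\tbw}(\pi)$ (your constant $C$), then chains the PTAS guarantee under $\cE_1$ with two applications of $\cE_2$ exactly as you do, implicitly dropping the nonnegative $-\xi C$ term in the final inequality. The only cosmetic difference is that the paper carries the constant $\sum w^\Pi_{vu}$ through the chain rather than adding it back at the end, and your remark that $\pi^*$ also minimizes $\cost_{\tbw}$ is true but not actually needed (you only use $\min_\pi \cost_{\tbw'}(\pi) \le \cost_{\tbw'}(\pi^*)$).
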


\begin{proof}
To do this, first notice that for any $\{u, v\} \in \binom{[n]}{2}$, $(u, v)$ or $(v, u)$ appears in $\Zl$ or $\Zs$ exactly once. This means that, for any permutation $\pi$, we have
\begin{align}
&\cost_{\bw}(\pi) \nonumber \\ &= \sum_{u,v\in [n]} \bone[\pi(u) < \pi(v)] \cdot w_{vu} \nonumber \\
&= \sum_{\{u,v\} \in \binom{[n]}{2}} \left(\bone[\pi(u) < \pi(v)] \cdot w_{vu} + \bone[\pi(v) < \pi(u)] \cdot w_{uv}\right) \nonumber \\
&= \sum_{(u, v) \in \Zl} \left(\bone[\pi(u) < \pi(v)] \cdot w_{vu} + \bone[\pi(v) < \pi(u)] \cdot w_{uv}\right) \nonumber \\
&\quad + \sum_{(u, v) \in \Zs} \left(\bone[\pi(u) < \pi(v)] \cdot w_{vu} + \bone[\pi(v) < \pi(u)] \cdot w_{uv}\right) \nonumber \\
&= \sum_{(u, v) \in \Zl} w_{vu} + \sum_{(u, v) \in \Zl} \left(\bone[\pi(v) < \pi(u)] \cdot (w_{uv} - w_{vu})\right) \nonumber \\
&\quad + \sum_{(u, v) \in \Zs} \left(\bone[\pi(u) < \pi(v)] \cdot w_{vu} + \bone[\pi(v) < \pi(u)] \cdot w_{uv}\right) \nonumber \\
&= \left(\sum_{(u, v) \in \Zl} w_{vu}\right) + \cost_{\tbw}(\pi), \label{eq:cost-expand-diff}
\end{align}
where the last equality is due to our definition of $\tbw$.

Let $\pi_{\out}$ denote the output of \Cref{alg:ptas_large}.
From $\cE_1$ and \Cref{thm:MS}, we have
\begin{align} \label{eq:approx-intermediate}
\cost_{\tbw'}(\pi_{\out}) \leq (1 + \xi) \cdot \cost_{\tbw'}(\pi^*).
\end{align}

Thus, we can bound $\cost_{\bw}(\pi_{\out})$ as follows:
\begin{align}
&\cost_{\bw}(\pi_{\out}) \nonumber\\ &\overset{\eqref{eq:cost-expand-diff}}{=}  \left(\sum_{(u, v) \in \Zl} w_{vu}\right) + \cost_{\tbw}(\pi_{\out}) \nonumber \\
&\overset{(\cE_2)}{\leq} \left(\sum_{(u, v) \in \Zl} w_{vu}\right) + \cost_{\tbw'}(\pi_{\out}) + \gamma \nonumber \\
&\overset{\eqref{eq:approx-intermediate}}{\leq} \left(\sum_{(u, v) \in \Zl} w_{vu}\right) + (1 + \xi) \cdot \cost_{\tbw'}(\pi^*) + \gamma \nonumber \\
&\overset{(\cE_2)}{\leq} \left(\sum_{(u, v) \in \Zl} w_{vu}\right) + (1 + \xi) \cdot \cost_{\tbw}(\pi^*) + (2 + \xi)\gamma \nonumber \\
&\overset{\eqref{eq:cost-expand-diff}}{\leq} (1 + \xi)\cdot\cost_{\bw}(\pi^*) + (2 + \xi) \gamma. \nonumber \qedhere
\end{align}
\end{proof}

\subsubsection{zCDP.}
We can now prove \Cref{thm:ptas-large-n-cdp} by using \Cref{alg:ptas_large} with $\cD$ being a Gaussian distribution, and bounding the probabilities that $\cE_1, \cE_2$ do not occur.

\begin{proof}[Proof of \Cref{thm:ptas-large-n-cdp}] 
Let $\sigma=\frac{m}{n\sqrt{\rho}}$. Run \Cref{alg:ptas_large} with $\cD=\cN(0,\sigma^2)$. Since the $\ell_2$-sensitivities of $\bw$ and $\tbw$ are at most $\frac{m}{n}$, \Cref{thm:gaussian} implies that the algorithm satisfies $\rho$-zCDP.

Due to \Cref{lem:ptas_key}, it suffices to prove that $\Pr[\neg \cE_1],\Pr[\neg \cE_2]\le O\left(\frac{1}{m}\right)$ for $\gamma=O\left(\frac{m^{2.5} \sqrt{\log m}}{n\sqrt{\rho}}\right)$.

\paragraph{Bounding $\Pr[\neg \cE_1]$.} From our assumption of $n$, we have $0.01 > \sigma\sqrt{100\log m}$. Thus, by standard concentration of Gaussian, with probability at least $1 - \frac{1}{m}$, we have $|\theta_{uv}|, |r_{uv}| \leq 0.01$ for all $u,v \in [m]$. When this holds, it is simple to verify that the instance $\tbw'$ is bounded. Thus, $\Pr[\neg \cE_1] \leq \frac{1}{m}$.

\paragraph{Bounding $\Pr[\neg \cE_2]$.} Consider any fixed $\pi$, we can write $\cost_{\tbw'}(\pi) - \cost_{\tbw}(\pi)$ as
\begin{align*}
&\cost_{\tbw'}(\pi) - \cost_{\tbw}(\pi) \\
&= \sum_{(u, v) \in \Zl} \bone[\pi(v) < \pi(u)] r_{uv} \\&\qquad + \sum_{(u, v) \in \Zs} (-1)^{\bone[\pi(u) < \pi(v)]} r_{uv}.
\end{align*}
Since $r_{uv}$'s are i.i.d. drawn from $\cN(0, \sigma^2)$, this last term is distributed as $\cN(0, (\sigma')^2)$ for some $\sigma' \leq m \cdot \sigma$. As a result, by standard concentration of Gaussian, we have
\begin{align*}
&\Pr[|\cost_{\tbw'}(\pi) - \cost_{\tbw}(\pi)| > 10m\sigma\sqrt{m\log m}] 
\\ &< 2\exp\left(-\left(10\sqrt{m\log m}\right)^2\right) \leq \frac{1}{m^{3m}}.
\end{align*}
Taking union bound over all $m!$ rankings $\pi \in S_m$, we have that $\Pr[\neg \cE_2] \leq \frac{1}{m}$ as well.
Finally, note that $\gamma$ here is $10m\sigma\sqrt{m\log (mn)} = O\left(\frac{m^{2.5} \sqrt{\log m}}{n\sqrt{\rho}}\right)$.

Therefore, by \Cref{lem:ptas_key}, the algorithm outputs an $\left(1 + \xi, O\left(\frac{m^{2.5} \sqrt{\log m}}{n\sqrt{\rho}}\right)\right)$-approximate solution with probability $1 - O\left(\frac{1}{m}\right)$.
\end{proof}

\subsubsection{Pure-DP.}
We use the same strategy as above except that now $\cD$ is a Laplace distribution. Note, however, that the sum of Laplace random variables is not distributed as a Laplace distribution, and thus its concentration is not clear a priori. Fortunately, \cite{ChanSS11} has shown concentration of such a sum, as stated more formally below.

\begin{theorem}[Lemma 2.8 in \cite{ChanSS11}] \label{thm:sum-laplace-concen}
Suppose $\kappa_1, \dots, \kappa_M$ are independent random variables, where each $\kappa_i$ has Laplace distribution $\mathrm{Lap}(b_i)$. Suppose $Y:=\sum_{i \in [M]}\kappa_i$ and $b_M:=\max_{i \in [M]}{b_i}$.
Let $\nu\ge \sqrt{\sum_{i \in [M]} b_i^2}$ and $0< \lambda< \frac{2\sqrt{2}\nu^2}{b_M}$. Then, $\Pr[Y>\lambda]\le \exp\left(-\frac{\lambda^2}{8\nu^2}\right)$.
\end{theorem}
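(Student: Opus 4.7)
The plan is to prove this via a standard Chernoff-style argument built on the moment generating function (MGF) of the Laplace distribution. For a single $\kappa_i \sim \mathrm{Lap}(b_i)$, a direct integration yields $\E[e^{t\kappa_i}] = \frac{1}{1 - t^2 b_i^2}$ whenever $|t| < 1/b_i$. By independence of the $\kappa_i$, the MGF of $Y = \sum_{i \in [M]} \kappa_i$ factors as $\E[e^{tY}] = \prod_{i \in [M]} \frac{1}{1 - t^2 b_i^2}$, so the task reduces to controlling this product and then applying Markov.

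Next I would restrict to $|t| \leq \frac{1}{b_M \sqrt{2}}$, which guarantees $t^2 b_i^2 \leq 1/2$ for every $i$. On this range, the elementary inequality $\frac{1}{1 - x} \leq e^{2x}$ for $x \in [0, 1/2]$ applies termwise, giving
\[
\E[e^{tY}] \;\leq\; \exp\!\left(2 t^2 \sum_{i \in [M]} b_i^2\right) \;\leq\; \exp(2 t^2 \nu^2),
\]
using the hypothesis $\nu^2 \geq \sum_i b_i^2$. Applying Markov's inequality to $e^{tY}$ then yields $\Pr[Y > \lambda] \leq \exp(-t\lambda + 2 t^2 \nu^2)$ for any admissible $t > 0$. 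Optimizing the quadratic exponent in $t$ selects $t^\star = \lambda / (4 \nu^2)$, which collapses the exponent to $-\lambda^2 / (8 \nu^2)$ and gives the claimed tail bound.

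The only step requiring care is making the admissibility of $t^\star$ match the hypothesis on $\lambda$: the constraint $t^\star \leq 1/(b_M \sqrt{2})$ rearranges to $\lambda \leq 2\sqrt{2}\,\nu^2 / b_M$, which is precisely the range assumed in the theorem. So the main obstacle, such as it is, is choosing the MGF bound at the right sharpness: the bound $\frac{1}{1-x} \leq e^{2x}$ on $[0, 1/2]$ is tight enough that the endpoint of validity aligns exactly with the endpoint of $\lambda$ allowed by the statement; a cruder MGF bound (e.g.\ valid only on $[0, 1/4]$) would force a smaller admissible range for $\lambda$ and would not recover the claim as stated.
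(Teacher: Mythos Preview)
Your proof is correct and is exactly the standard Chernoff argument: compute the Laplace MGF, bound $\tfrac{1}{1-x}\le e^{2x}$ on $[0,1/2]$, apply Markov, and optimize $t$; the admissibility check $t^\star\le 1/(b_M\sqrt{2})$ indeed recovers the stated range $\lambda<2\sqrt{2}\,\nu^2/b_M$. The paper does not give its own proof of this statement---it is quoted verbatim as Lemma~2.8 of \cite{ChanSS11}---and your argument is essentially the one that appears there.
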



\begin{proof}[Proof of \Cref{thm:ptas-large-n-pure}] 
Let $b=\frac{2m^2}{n{\varepsilon}}$. Run \Cref{alg:ptas_large} with $\cD=\mathrm{Lap}(b)$. Since the $\ell_1$-sensitivities of $\bw$ and $\tbw$ are at most $\frac{m^2}{n}$, \Cref{thm:laplace} implies that the algorithm satisfies $\eps$-DP.

Due to \Cref{lem:ptas_key}, it suffices to prove that $\Pr[\neg \cE_1],\Pr[\neg \cE_2]\le O\left(\frac{1}{m}\right)$ for $\gamma=O\left(\frac{m^{3.5} \sqrt{\log m}}{n{\varepsilon}}\right)$.

\paragraph{Bounding $\Pr[\neg\mathcal{E}_1]$.} Consider the concentration of Laplace distribution; the probability that each of $\theta_{uv}, r_{uv}$ it is not in range $[-0.01,0.01]$ is $\exp(-0.01/b)=\exp(-\frac{n}{100m^2})$.
If $n\ge 400m^2\log(m/\varepsilon)/\varepsilon$, then this is at most $\frac{1}{m^3}$. Taking a union bound over all these random variables, we have $\Pr[\neg\mathcal{E}_1]\le O\left(\frac{1}{m}\right)$.

\paragraph{Bounding $\Pr[\neg\mathcal{E}_2]$.} Similar to the previous proof, for any fixed $\pi$, we have
\begin{align*}
&\cost_{\tbw'}(\pi) - \cost_{\tbw}(\pi) \\
&= \sum_{(u, v) \in \Zl} \bone[\pi(v) < \pi(u)] r_{uv} \\ &\qquad + \sum_{(u, v) \in \Zs} (-1)^{\bone[\pi(u) < \pi(v)]} r_{uv},
\end{align*}
which is a sum of i.i.d. random variables drawn from $\Lap(b)$. We can use \Cref{thm:sum-laplace-concen} by setting $\nu=mb$ and $\lambda = 12bm\sqrt{m \log m}$. (Note here that $\nu \geq b\sqrt{m}$ and that $\nu^2/b = bm^2 \geq \lambda$ for any sufficiently large $m$.) Hence,
\begin{align*}
    &\Pr[|\cost_{\tbw'}(\pi) - \cost_{\tbw}(\pi)| > 12bm\sqrt{m\log m}] \\
    & \le 2\exp(-18m\log m) \le 2 m^{-3m}
\end{align*}
Taking union bound over all $m!$ rankings $\pi \in S_m$, we have that $\Pr[\neg \cE_2] \leq \frac{1}{m}$ as well. Again, here we let $\gamma = 12bm\sqrt{m\log m} = O\left(\frac{m^{3.5} \sqrt{\log m}}{n{\varepsilon}}\right)$.

Therefore, by \Cref{lem:ptas_key}, the algorithm outputs an $\left(1 + \xi, O\left(\frac{m^{3.5} \sqrt{\log m}}{n{\varepsilon}}\right)\right)$-approximate solution with probability $1 - O\left(\frac{1}{m}\right)$.
\end{proof}

\newpage

\subsection{Putting Things Together: Proofs of \Cref{thm:ptas-main,thm:ptas-main-pure}}


We can now prove the main theorems (\Cref{thm:ptas-main,thm:ptas-main-pure}) by simply combining the small and large cases.

\begin{proof}[Proof of \Cref{thm:ptas-main}]
This simply results from combining \Cref{thm:ptas-large-n-cdp-expectation} and \Cref{thm:ptas-small-n}. Namely, if $n \geq \Omega\left(\frac{m\sqrt{\log(m/\rho)}}{\sqrt{\rho}}\right)$, then we can simply run the algorithm from \Cref{thm:ptas-large-n-cdp-expectation} which immediately yields the desired approximation guarantee. Otherwise, if $n \leq O\left(\frac{m\sqrt{\log(m/\rho)}}{\sqrt{\rho}}\right)$, we can run the algorithm from \Cref{thm:ptas-small-n} instead. This gives an additive error of
\begin{align*}
\frac{m^{63/22}}{n^{10/11} \cdot \rho^{5/11}} &\lesssim \frac{m^{63/22}}{n \cdot \rho^{5/11}} \cdot \left(\frac{m\sqrt{\log(m/\rho)}}{\sqrt{\rho}}\right)^{1/11} \\
&\lesssim \frac{m^{65/22} \sqrt{\log(mn/\rho)}}{n \sqrt{\rho}},
\end{align*}
as desired.
\end{proof}

\begin{proof}[Proof of \Cref{thm:ptas-main-pure}]
This simply results from combining \Cref{thm:ptas-large-n-pure-expectation} and \Cref{thm:ptas-small-n-pure}. Namely, if $n \geq \Omega\left(\frac{m^2\sqrt{\log(m/\varepsilon)}}{{\varepsilon}}\right)$, then we can simply run the algorithm from \Cref{thm:ptas-large-n-pure-expectation} which immediately yields the desired approximation guarantee. Otherwise, if $n \leq O\left(\frac{m^2\sqrt{\log(m/\varepsilon)}}{{\varepsilon}}\right)$, we can run the algorithm from \Cref{thm:ptas-small-n-pure} instead. This gives an additive error of
\begin{align*}
    \frac{m^{25/7}}{n^{6/7}\varepsilon^{6/7}} & \lesssim \frac{m^{25/7}}{n\varepsilon^{6/7}}\cdot \left(\frac{m^2\sqrt{\log (m/\varepsilon)}}{\varepsilon}\right)^{1/7} \\
    & \lesssim \frac{m^{27/7}\sqrt{\log(mn/\varepsilon)}}{n\varepsilon}.
\end{align*}
as desired.
\end{proof}

\fi

\end{document}